\newtheorem{theorem}{Theorem}
\newtheorem{lemma}{Lemma}
\newcommand{\ca}{\cos(\vec{\alpha}_a \cdot \vec{\sigma})}
\newcommand{\cN}{\cos(\vec{\alpha}_N \cdot \vec{\sigma})}
\newcommand{\sa}{\sin(\vec{\alpha}_a \cdot \vec{\sigma})}
\newcommand{\sN}{\sin(\vec{\alpha}_N \cdot \vec{\sigma})}
\newcommand{\ala}{\vec{\alpha}_a}
\newcommand{\alb}{\vec{\alpha}_b}
\newcommand{\alN}{\vec{\alpha}_N}
\newcommand{\vs}{\vec{\sigma}}
\newcommand{\vp}{\vec{\phi}}
\newcommand{\sumaN}{\sum_{a=1}^N}
\newcommand{\sumbN}{\sum_{b=1}^N}
\newcommand{\cjwj}{i2 \pi \sum_{j=1}^{N-1} c_j \vec{w}_j}
\newcommand{\xinf}{\vec{x}(\infty)}
\newcommand{\xninf}{\vec{x}(-\infty)}
\newcommand{\sumcj}{\sum_{j=1}^{N-1} c_j}
\newcommand{\xv}{\vec{x}}
\newcommand{\wv}{\vec{w}}
\newcommand{\phiv}{\vec{\phi}}
\newcommand{\sigmav}{\vec{\sigma}}
\newcommand{\ava}{\vec{\alpha}_a}
\newcommand{\alphav}{\vec{\alpha}}
\newcommand{\rhov}{\vec{\rho}}
\def\R{{\mathbb R}}
\def\S{{\mathbb S}}
\def\Z{{\mathbb Z}}
\def\beq{\begin{equation}}
\def\eeq{\end{equation}}
\def\R{{\mathbb R}}
\def\S{{\mathbb S}}
\def\Z{{\mathbb Z}}
\def\beq{\begin{equation}}
\def\eeq{\end{equation}}
\title{Domain walls and deconfinement: a semiclassical picture of discrete anomaly inflow}
\author{Andrew A. Cox, Erich Poppitz, Samuel S.Y. Wong}
\affiliation{Department of Physics,   University of Toronto, 
Toronto, ON M5S 1A7, Canada}
\emailAdd{aacox@physics.utoronto.ca}\emailAdd{poppitz@physics.utoronto.ca}  \emailAdd{samuelsy.wong@mail.utoronto.ca}    
 \abstract{We study the  physics of quark deconfinement on domain walls in  four-dimensional supersymmetric $SU(N)$ Yang-Mills theory, compactified on a small circle with supersymmetric boundary conditions. 
We numerically examine the properties  of BPS domain walls connecting  vacua $k$ units apart. We also determine their  electric fluxes  and use the results to show that Wilson loops of any nonzero $N$-ality exhibit  perimeter law on all   $k$-walls. Our results confirm and extend, to  all $N$ and $k$, the validity of the semiclassical picture of deconfinement of Anber, Sulejmanpasic  and one of us (EP), \href{https://arxiv.org/abs/1501.06773}{arXiv:1501.06773}, providing a microscopic explanation of mixed $0$-form/$1$-form anomaly inflow.}
\begin{document}

\maketitle

\section{Introduction, Summary, and Outlook}

Supersymmetric Yang-Mills (SYM) theory has been the subject of many studies over the years,   due to its tractability, facilitated by supersymmetry, and to its similarity  to nonsupersymmetric pure YM theory. Of special interest to us is  the theory formulated on  $\R^{3} \times \S^1$, with supersymmetric boundary conditions on $\S^1$ \cite{Seiberg:1996nz,Aharony:1997bx}. The dynamics is drastically simplified in the small-circle limit $L N \Lambda \ll 1$, where $L$ is the $\S^1$ circumference, $N$ is the number of colors, and $\Lambda$ is its strong coupling scale \cite{Davies:1999uw,Davies:2000nw} (we consider  $SU(N)$ gauge groups only).
Remarkably, center stability, confinement, and discrete chiral symmetry breaking  become intertwined and are all due to the proliferation of various
topological molecules in the Yang-Mills vacuum: magnetic \cite{Unsal:2007jx,Unsal:2007vu} and neutral ``bions" \cite{Poppitz:2011wy,Poppitz:2012sw,Argyres:2012ka,Argyres:2012vv,Poppitz:2012nz}, composite objects made of various monopole-instantons \cite{Lee:1997vp,Kraan:1998pm}. The magnetic bions, in particular, provide a remarkable locally four-dimensional  generalization  \cite{Unsal:2007jx,Unsal:2007vu}  of the Polyakov mechanism \cite{Polyakov:1976fu} of confinement: despite looking three-dimensional at long distances, the theory remembers much about its four-dimensional origin. In many cases, it is known or believed that the small-$L$ theory is connected to the $\R^4$ theory ``adiabatically," i.e. without a phase transition. We also note that  lattice studies \cite{Bergner:2015cqa,Bergner:2018unx,Bergner:2019dim} have already provided evidence for the adiabaticity. 

The $\R^{3} \times \S^1$ setup described above provides a rare example of  analytically tractable nonperturbative phenomena in four dimensions and many of its aspects have been the subject of previous investigations, see \cite{Dunne:2016nmc} for a review.

\subsection{Motivation}

The purpose of this paper is to continue and complete the study of domain walls (DW) of Ref.~\cite{Anber:2015kea}. There,   using semiclassical physics, it was shown that heavy fundamental quarks, represented by unit $N$-ality Wilson loops,  are deconfined on the DWs of SYM and, hence, that confining strings can end on  DWs.\footnote{This phenomenon has previously been noted using a string (M)-theory embedding of SYM \cite{SJRey:1998,Witten:1997ep} and argued for using large-$N$ arguments,  e.g.~\cite{Armoni:2003ji}.}
The physical picture of
 \cite{Anber:2015kea} is that in the semiclassical limit on $\R^{3} \times \S^1$, static DWs  are lines (in $\R^2$) of finite energy per unit length. Along their length, the DWs carry  a  fraction of the chromoelectric flux of quarks. These fractional fluxes are precisely such  that a pair of DWs form a ``double string," shown on Fig.~\ref{fig:01},  stretched between a quark and antiquark,   leading to quark confinement with a linearly rising potential. The double-string   picture further implies that quarks become deconfined on DWs---the quark's chromoelectric flux is absorbed by the DWs, of equal tension,   to the left and the right. The equal tensions of different walls here is due to supersymmetry, as many of the walls are Bogomolnyi-Prasad-Sommerfield (BPS) protected objects, as well as to unbroken 0-form center symmetry. This physical picture of deconfinement on walls is illustrated on Fig.~\ref{fig:02} for an $SU(2)$ gauge group.

\begin{figure}[t]
\begin{subfigure}[t]{.47  \textwidth}
  \includegraphics[width= 1 \textwidth]{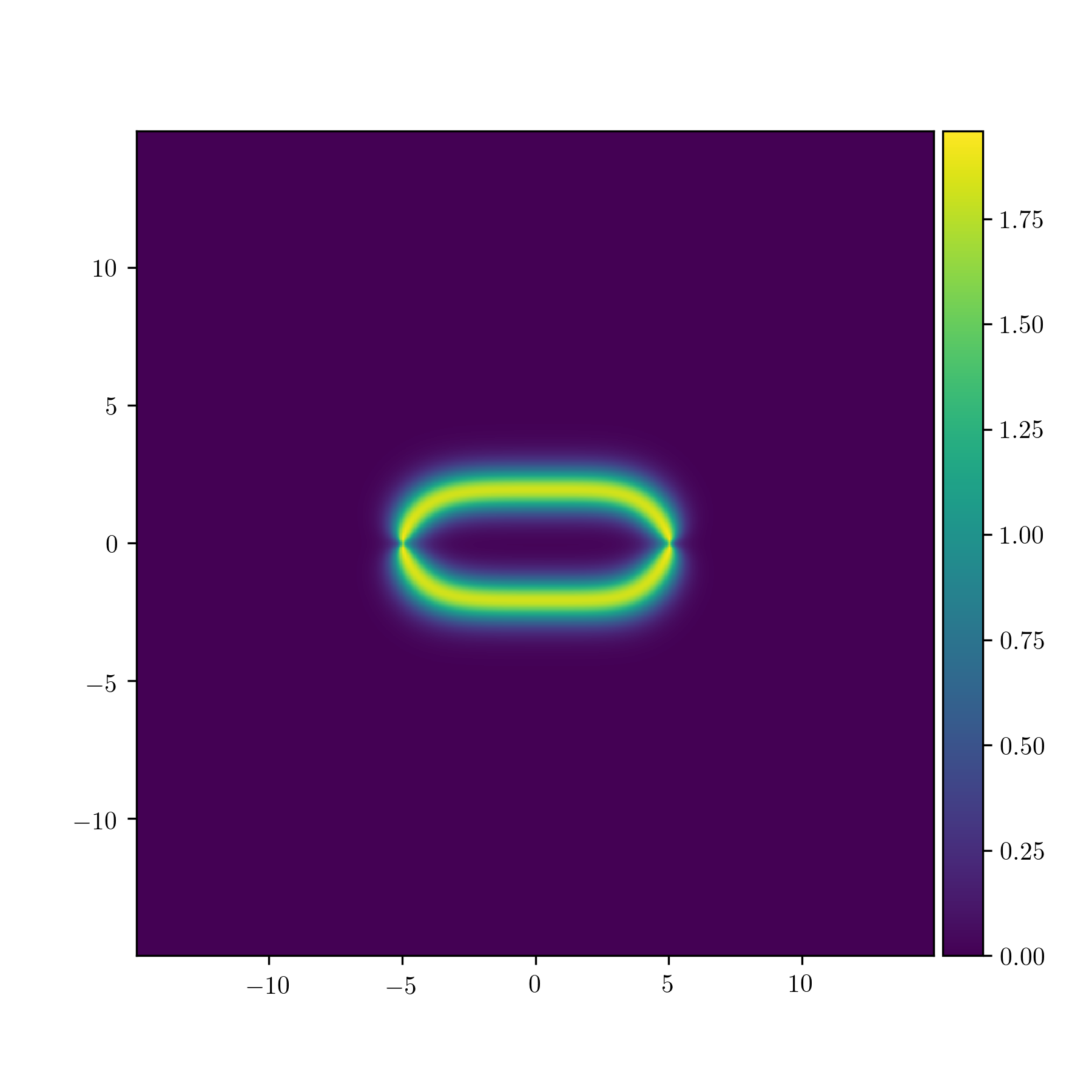}
  \caption{The energy density of a ``double-string" confining  configuration composed of two degenerate BPS DWs in $SU(3)$ SYM. The quark and antiquark have weights $\pm \pmb w_1$ of the fundamental representation. The string is embedded in vacuum 1, see (\ref{vacua}), while inside the double-string the fields have the values of vacuum  0. Distances are measured in units of the Compton wavelength of the heaviest dual photon. Similar double string configurations confine fundamental quarks for any number of colors.}
  \label{fig:01}
\end{subfigure} 
\qquad
\begin{subfigure}[t]{.47 \textwidth}
  \includegraphics[width= 1\textwidth]{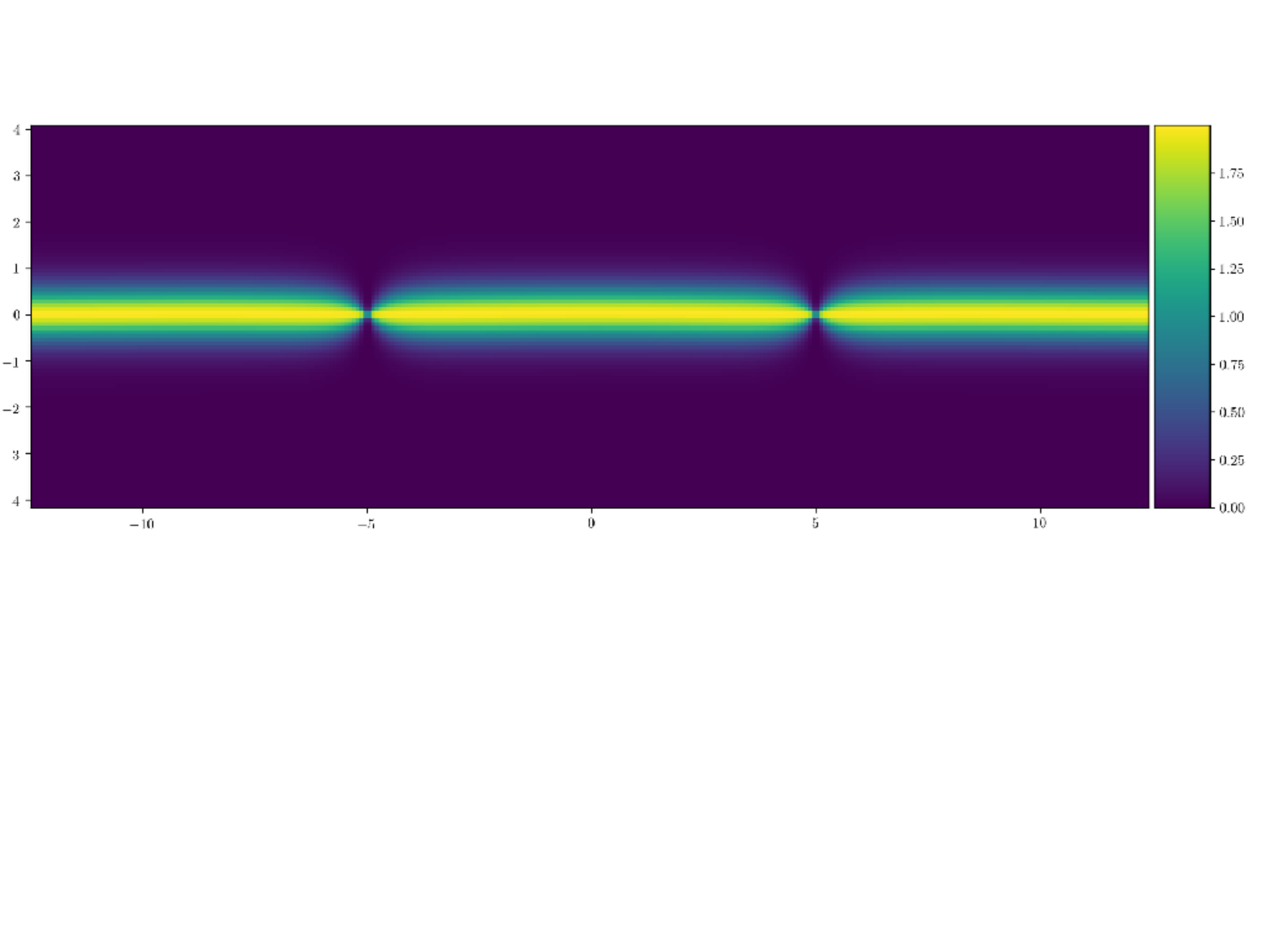}
  \caption{Deconfinement of a quark/antiquark pair on the DW, shown here for $SU(2)$ SYM. This configuration can be thought of as the double string configuration on the left ``opened up." Vacuum 0 is on the top and vacuum 1---on the bottom. As the tensions of the BPS DWs absorbing a quark's electric flux   are equal, there is no distance dependence of the quark-antiquark pair's energy, as shown on  Fig.~\ref{fig:deconfinement_energy}. As in the figure on the left, the plot here shows the energy density.}
  \label{fig:02}
\end{subfigure}
\caption{Confinement in the bulk and deconfinement on the wall. Section \ref{dwbaryons}  explains  how these pictures are obtained.}
\end{figure}

We note in passing that the above ``double-string" picture of confining strings also holds in nonsupersymmetric deformed Yang Mills theory \cite{Unsal:2008ch} at $\theta=\pi$, as well as in nonsupersymmetric adjoint QCD on $\R^3 \times \S^1$, as explained in \cite{Anber:2015kea}.

A recent parallel development is the realization that deconfinement of quarks on the DWs in SYM is a manifestation of    ``discrete anomaly inflow," due to the newly discovered mixed $0$-form/$1$-form symmetry discrete  't Hooft anomalies \cite{Gaiotto:2014kfa,Gaiotto:2017yup,Gaiotto:2017tne}. The mixed 0-form/1-form anomalies  imply that in theories with such anomalies, DWs occurring due to 0-form discrete symmetry breaking have a nontrivial structure on their worldvolume. Such DWs  have recently received some attention \cite{Sulejmanpasic:2016uwq,Komargodski:2017smk,Draper:2018mpj,Ritz:2018mce,Anber:2018xek,Anber:2018jdf,Nishimura:2019umw,Misumi:2019dwq,Cordova:2019jnf,Cordova:2019uob}. The nontrivial DW physics is usually described in terms of a topological quantum field theory (TQFT) living on the DW, e.g. \cite{Acharya:2001dz,Gaiotto:2013gwa,Dierigl:2014xta} and Section~\ref{1wallTQFT}. Absent strong symmetry constraints, it is often difficult to uniquely determine the worldvolume TQFT, due to the strong coupling nature of the dynamics  \cite{Cordova:2019jnf,Cordova:2019uob}.
 
 The goal of this paper is to investigate the structure of general $k$-walls\footnote{A $k$-wall connects vacua $k$ units apart, see (\ref{vacua}). Ref.~\cite{Anber:2015kea} considered in some detail only  $k=1$  DWs.} in SYM on $\R^3 \times \S^1$ at small $L N \Lambda$ using semiclassical tools. Our hope is that the results obtained in the calculable regime, generalizing \cite{Anber:2015kea} to arbitrary $N$ and $k$ will help elucidate various still ill understood properties of the domain walls, of their junctions, and of confining strings.

\subsection{Summary of Results}
 
\begin{enumerate}
\item We numerically study the classical  $k$-wall solutions for $2\le N \le 9$. We find that $k$-wall solutions are smooth, with the variations of the fields  within the validity of the effective theory at $\Lambda N L \ll 1$. Details of our numerical studies of DWs are given in Section~\ref{dwnumerics}. Further, we find that, generally, their worldvolumes carry both electric and magnetic fields, whose profiles  we determine.\footnote{To avoid confusion, we stress that the total  magnetic flux carried along a DW is zero.}  The occurrence of  magnetic fields on the DW worldvolume is due to the  nature of magnetic bions---the nonperturbative objects responsible for confinement and the expulsion of electric flux from the vacuum. In effect, magnetic bions create a nonlinear coupling between electric and magnetic fields. This coupling is absent only for an $SU(2)$ gauge group. 

We find that, for $SU(N)$ with $N>2$, magnetic fields are absent only on a finite number of $k=\frac{N}{2}$ BPS walls. We argue that six of these, if $N$ is divisible by $4$ (two solutions, for $N$ even but not divisible by $4$)  carry no magnetic fields, and determine the electric fluxes they can carry. Furthermore, see Appendix \ref{magnetless}, these ``magnetless" solutions can always be expressed in terms of one function, essentially the analytic $SU(2)$ DW solution.

\item Focusing on   the lowest-tension BPS  DWs,  we find numerical confirmation, see Section~\ref{dwnumerics}, of the known result \cite{Hori:2000ck,Acharya:2001dz} that there are  $ \left( \begin{array}{c}N\cr k\end{array} \right)$ BPS  walls between SYM vacua $k$ units apart.\footnote{There is a vast literature on various aspects of DWs in SYM;  an incomplete list is \cite{Ritz:2002fm,Ritz:2004mp,Ritz:2006zz,Dierigl:2014xta,Argurio:2018uup,Draper:2018mpj,Ritz:2018mce,Bashmakov:2018ghn}.} Our new result  is a determination of the electric fluxes carried by the different BPS $k$-walls. The $\left( \begin{array}{c}N\cr k\end{array} \right)$ BPS $k$-walls  carry  Cartan subalgebra electric fluxes whose values fall in one of two groups:\footnote{See the main text for a  detailed explanation. Here we note that $\pmb\rho$ is the Weyl vector and $\pmb{w}_j$, $j=1, ...,N-1$, are the fundamental weights of $SU(N)$.
}  
  \begin{eqnarray}
 2 \pi (\pmb{w}_{i_1} + \ldots + \pmb{w}_{i_k} - \frac{k}{N} \pmb\rho )&,&~~ {\rm there\; are } \;{  \left( \begin{array}{c}N-1\cr k\end{array} \right)} \; {\rm such \; walls },\label{fluxes1}\\
  2 \pi (\pmb{w}_{j_1} + \ldots + \pmb{w}_{j_{k-1}} - \frac{k}{N}\pmb \rho)&,&~~ {\rm there\; are } \;{ \left( \begin{array}{c}N-1\cr k-1\end{array} \right)} \; {\rm such \; walls }.\label{fluxes2}~  \end{eqnarray} 
Here the numbers  $(i_1, ..., i_k)$ are to be taken all different,  ranging from $1$ to $N-1$; likewise all $(j_1, ..., j_{k-1})$ are different.\footnote{For $k=1$, the  set  (\ref{fluxes2})  consists of a single wall carrying flux $-   \frac{2 \pi \pmb \rho}{N}$, hence the number of $k=1$ walls is $N$.}  The above spectrum of BPS $k$-wall fluxes is invariant under $k \rightarrow N-k$ up to reversal of the overall sign of the electric flux (a parity transformation, as in \cite{Bashmakov:2018ghn}).

\item We  use the  results (\ref{fluxes1}, \ref{fluxes2}) for the BPS $k$-wall fluxes to give a  microscopic picture of  the deconfinement of quarks  on DWs. General discrete anomaly inflow arguments and the properties of the worldvolume TQFT  lead  to the conclusion that   the 1-form $\Z_N^{(1)}$ center symmetry is broken on the DW worldvolume, hence quarks should be deconfined there.\footnote{Another property of Wilson loops on DWs with an $\R^3$ worldvolume, i.e. in the $\R^4$ theory, namely their nontrivial braiding (discussed e.g. in \cite{Gaiotto:2013gwa,Hsin:2018vcg}), cannot be seen in the small-$L$ setup of this paper, as the quarks' worldlines would have to cross.} 
Our microscopic picture explaining the phenomenon  generalizes the  one advocated in  \cite{Anber:2015kea}  to general $k$ walls and number of colors $N$. In particular, see Section~\ref{dwdeconfinement}, we argue that our result for the electric fluxes (\ref{fluxes1}, \ref{fluxes2})  implies that  quarks in any nonzero $N$-ality representation are deconfined on $k$-walls for any $k$.
 
\item Finally, we confirm, via numerical minimization, the expectation\footnote{Stemming from the results of \cite{Anber:2015kea}, as explained in EP's  talk at {\it Continuous Advances in QCD-2016}, see \href{https://conservancy.umn.edu/handle/11299/180346}{https://conservancy.umn.edu/handle/11299/180346}.} that our picture of the confining string  implies that heavy static baryons\footnote{We  have in mind  objects composed of $N$ heavy (of mass $M \gg \Lambda$) quarks, approaching the static spectator limit (the color flux picture of Fig.~\ref{fig:03} may be of relevance for excited dynamical baryons). $\Delta$ vs. Y baryons in lattice QCD, also including  light dynamical quarks, are discussed  in \cite{Bali:2003wj}. } in SYM, for $SU(3)$ gauge group,  have color flux in the ``$\Delta$-like" configurations  shown on Figure \ref{fig:03}. The baryon color fluxes are comprised of the three distinct $k=1$ DWs (notice that, in contrast, dYM baryons appear in ``Y" configurations).  The numerical procedure used to study these configurations is described in Section~\ref{dwbaryons}.

The fact that baryons of the $\Delta$ shape for $SU(3)$ (and its polygon generalization for $N>3$) exist in SYM in the present $\R^3\times \S^1$  framework is in marked contrast with another calculable theory of confinement, Seiberg-Witten theory on $\R^4$ \cite{Seiberg:1994rs}, where only linear baryons exist \cite{Douglas:1995nw}. The role of the unbroken zero-form center symmetry and the associated cyclic subgroup of the  Weyl group is crucial for this difference, as discussed in   \cite{Poppitz:2017ivi}.
 \end{enumerate}

\begin{figure}[t]
\begin{subfigure}[t]{.47  \textwidth}
  \includegraphics[width= 1 \textwidth]{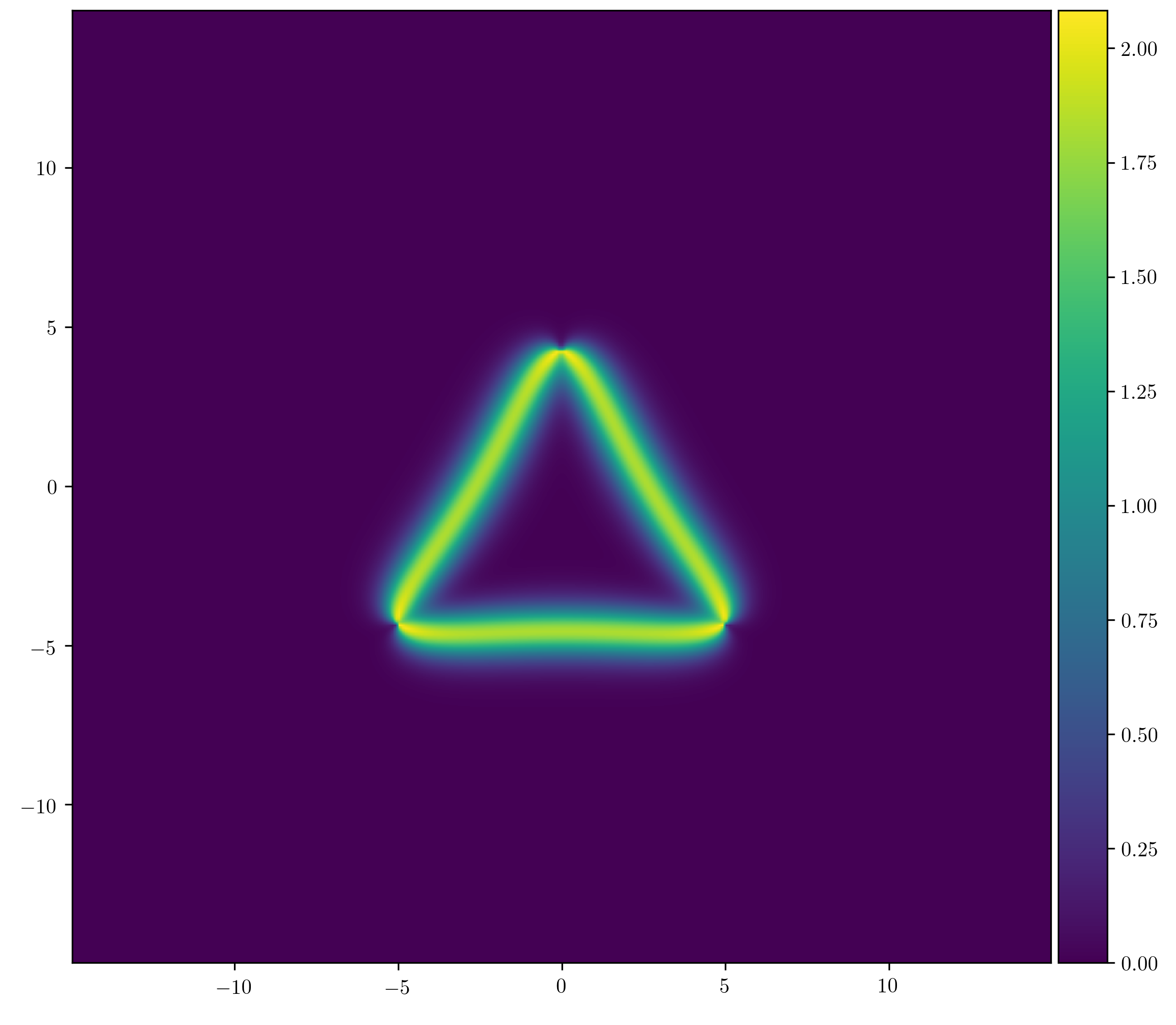}
  \caption{The energy density in a $\Delta$-shape  baryon in $SU(3)$ SYM theory. The three fundamental quarks of weights $\pmb\nu_1, \pmb\nu_2, \pmb\nu_3$ are connected by the three $k=1$ BPS DWs. The vacuum outside of the baryon is $k=0$ and the $k=1$ vacuum is inside.  $N$-sided polygon  configurations occur for $SU(N>3)$, in contrast with the static linear baryons in Seiberg-Witten (SW) theory. This difference arises because there are $N$ magnetic bions in SYM vs.  $N-1$ monopole/dyons in SW.}
  \label{fig:03}
\end{subfigure} 
\qquad
\begin{subfigure}[t]{.47 \textwidth}
  \includegraphics[width= .98\textwidth]{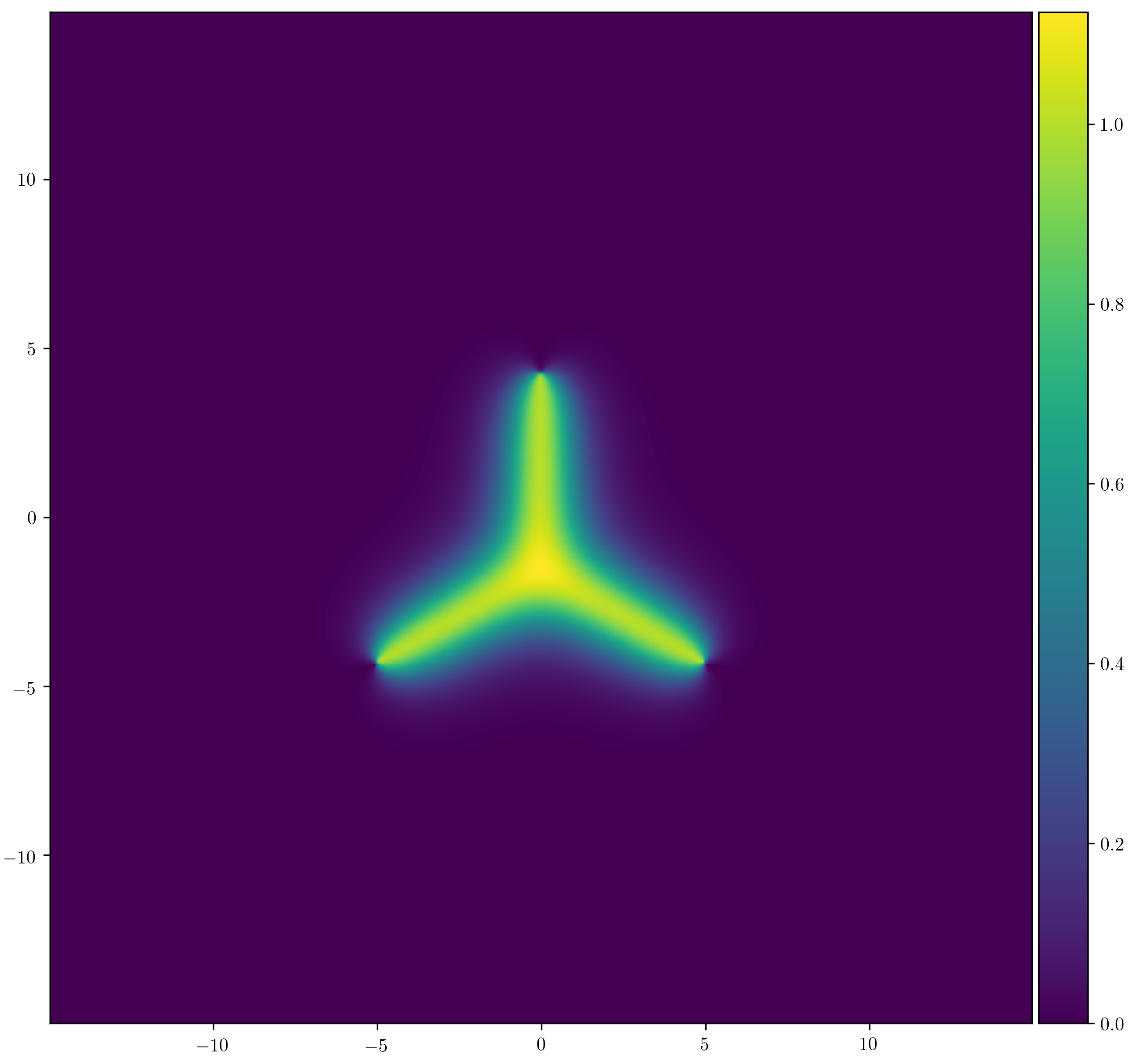}
  \caption{A  Y-shape baryon obtains, instead, in  $SU(3)$ deformed Yang-Mills theory at $\theta =0$. The fundamental quarks are now connected not by DWs ($\theta =0$ dYM has a unique vacuum) but by confining flux tubes whose total flux adds to zero, allowing for the ``baryon vertex" in the middle. At $\theta = \pi$, the corresponding picture is the same as in SYM---a $\Delta$ shape composed of DWs, with one of the two vacua inside the baryon  and the other outside.}
  \label{fig:04}
\end{subfigure}
\caption{Difference between baryons in SYM and dYM on $\R^3 \times \S^1$; see Section \ref{dwbaryons}.}
\end{figure}

	\subsection{Future Work}
	
Continuing this study, it would be of  interest to determine the $N$-ality dependence of the string tensions of the  ``double-string" confining strings. This presents a somewhat challenging numerical problem  left for future work. The $N$-ality dependence of confining string tensions is a probe of the confinement mechanism \cite{Greensite:2011zz} and could be compared with  studies of confining string tensions in both analytic and lattice setups \cite{Bali:2000un,Bringoltz:2008nd,Armoni:2011dw,Greensite:2011gg,Lucini:2012gg,Poppitz:2017ivi,Anber:2017tug}. 
Further, our observation of the $\Delta$ (as opposed to  Y) shape of excited heavy baryons in SYM, should it extend to the theory on $\R^4$, could be viewed as a prediction for the nature of  SYM heavy baryons and could be checked in future lattice simulations of SYM, as in  \cite{Bergner:2015adz,Ali:2018dnd,Ali:2019agk}.

We also note that the nature of DWs, of deconfinement on their worldvolume, and of confining strings has not been addressed in  detail in   other  classes of theories to which similar considerations apply, such as nonsupersymmetric dYM and QCD(adj) on $\R^3 \times \S^1$. It would be interesting to extend the present study to those theories as well.
 
One can also use the walls with fluxes (\ref{fluxes1}, \ref{fluxes2}) to construct  $1/4$-BPS junctions. The simplest example would be a ``star"-like $\Z_N$ symmetric junction of the $N$ different $k=1$ walls; the junction  carries no electric charge, since the fluxes of the walls add up to zero.
There have been discussions about  the possible existence of massless modes on these junctions (in $\R^4$, they have a two-dimensional worldvolume) required by consistency with the conjectured TQFTs on the DW worldvolume \cite{Ritz:2006zz,Gaiotto:2013gwa}. To the best of our knowledge this question is not settled conclusively.  These  DW junctions would be straightforward to produce numerically and the question of existence of massless modes on the junction could be addressed. 

The most interesting question is about the fate of the DWs and quark deconfinement as the size of the circle is increased. It is believed that in SYM this continuation is smooth, at least in the bulk. 
It would be especially interesting to obtain a better physical understanding of the braiding of nonzero $N$-ality Wilson loops predicted by the worldvolume TQFT for $k$-walls on $\R^4$, a phenomenon that our study of DWs with two dimensional worldvolume does not shed light upon.

\subsection{Organization of This Paper}

This paper is roughly divided in two parts. The first half of the paper (Sections \ref{review} and \ref{dwdeconfinement}) uses the conclusions supported by numerics to discuss deconfinement and anomaly inflow. The second half (Sections \ref{dwnumerics} and \ref{dwbaryons}) presents details of the numerical methods used and discusses many of the results. 

We begin in Section~\ref{review}, with a review of SYM in the semiclassical limit, its relevant symmetries, and vacua. In Section~\ref{dwdeconfinement},   borrowing the results for DW fluxes---found in the second half of the paper and already stated above---we discuss the action of symmetries on these fluxes. We use the results for $k$-wall fluxes to arrive at the conclusion that all $N$-ality quarks are deconfined on $k$-walls. An alternative description of the worldvolume of DWs and anomaly inflow is via a TQFT, which we write explicitly for the case of $k=1$ BPS walls.

The second part of the paper is devoted to the details of the   numerical methods used, and presents the numerical and analytical results regarding the DW properties. In Section~\ref{dwnumerics}, we discuss the methods used, present a few of the profiles of $k$-walls and point out examples of profiles exhibiting various interesting properties. In particular, we summarize the results regarding the ``magnetless" DWs (the ones that do not  have magnetic fields), whose derivation is given in Appendix \ref{magnetless}.

 In Section~\ref{dwbaryons}, we describe the numerical procedures used to calculate the energy of static fundamental quarks as a function of their distance along a DW, giving an explicit numerical confirmation of the picture of deconfinement advocated earlier and shown on Fig.~\ref{fig:02}. We also describe how baryon configurations, recall Fig.~\ref{fig:03}, are studied. In Sections~\ref{dwnumerics}, \ref{dwbaryons}, we describe the methods used in some detail, because they could be useful in future studies of baryons, deconfinement, and $k$-wall junctions in a variety of semiclassical theories, e.g. SYM, dYM, QCD(adj) on $\R^3 \times \S^1$.  

\section{Brief Review of SYM on $\R^3 \times \S^1$}
\label{review}
   
We shall not dwell into the details of the microscopic dynamics leading to the long-distance theory\footnote{In historical order, see \cite{Seiberg:1996nz, Aharony:1997bx} and the instanton calculation of \cite{Davies:2000nw,Davies:1999uw} (completed recently in \cite{Poppitz:2012sw,Anber:2014lba}).} described below. Our starting point here is the result that 
the infrared dynamics of SYM on $\R^3 \times \S^1$ is described by a theory of chiral superfields. Their  lowest components are 
the Cartan-subalgebra valued  bosonic fields, the dual photons and holonomy scalars combined into complex scalar fields:
\begin{eqnarray}
\label{fields1}
x^a = \phi^a + i \sigma^a~, ~ a= 1, \ldots N-1, ~~ {\rm with } ~\; \sigma^a  \simeq  \sigma^a + 2 \pi w_k^a~, ~ k = 1, \ldots N-1~.  
\end{eqnarray}
Here $\phi^a$ are real scalar fields describing the deviation of the $\S^1$ holonomy eigenvalues from its center-symmetric value and $\sigma^a$ are the duals of the Cartan-subalgebra photons; both fields are taken to be dimensionless.  
The more precise relation to 4d fields is: 
\begin{eqnarray} \label{fieldrelation}
{g^2  \over 4\pi L}\; \partial_\mu \phi^a &=& F_{\mu 4}^a,~~  \mu,\nu = 0,1,2 ,\nonumber \\
{g^2 \over 4 \pi L}\; \epsilon_{\mu\nu\lambda} \partial^\lambda \sigma^a &=&  F_{\mu\nu}^a.
\end{eqnarray}
Here $F_{\mu 4}^a$  denotes the mixed $\R^3$-$\S^1$ component of the Cartan field strength tensor of the 4d theory and $F_{\mu\nu}^a$ is the field strength along $\R^3$, all taken independent of the $\S^1$-coordinate $x^4$, and $g^2$ is the 4d SYM gauge coupling at a scale of order $1\over L$  (see \cite{Anber:2014lba} for more detail, including renormalization).
Thus, Eqn.~(\ref{fieldrelation}) implies that spatial derivatives of $\phi^a$ are 2d duals of the 4d magnetic field's components along  $\R^2$;  likewise, spatial derivatives of $\sigma^a$ are 2d duals of the 4d electric field's  components along  $\R^2$.

As indicated in (\ref{fields1}), the target space of the  dual photons fields is the unit cell of the $SU(N)$ weight lattice spanned by $\pmb w_k = (w_k^1, ..., w_k^{N-1})$, $k=1\ldots N-1$, the fundamental weights.\footnote{Vectors in the Cartan subalgebra will be denoted by bold face: $\pmb \sigma = (\sigma^1, ..., \sigma^{N-1})$, $\pmb \phi = (\phi^1, ..., \phi^{N-1})$ and the complex field $\pmb x = (x^1, ..., x^{N-1})$ and similar for their complex conjugates $\bar{\pmb x}$. The dot product used throughout is the usual Euclidean one. } A simple way to understand the $\sigma$-field periodicity is that it  allows for non vanishing monodromies corresponding to the insertion of probe electric charges (quarks) of any nonzero $N$-ality, as appropriate in an $SU(N)$ theory.  

At small $L N \Lambda \ll 1$, the bosonic part of the long distance theory is described by the weakly-coupled $\R^{3}$ lagrangian:
\begin{equation}
\label{lagrangian1}
L = M \; \partial_\mu { x}^a g_{ab} \; \partial^\mu { \bar{x}}^b - M\; {  m^2  \over 4}\; {\partial W({\pmb x}) \over \partial x^a} \; g^{ab}\; {\partial  \bar W(\bar{\pmb x}) \over \partial \bar{x}^{b}}  ~.
\end{equation}
The spacetime metric is $(+,-,-)$ and $W(\pmb{x})$ is the holomorphic superpotential:
\begin{equation}
\label{lagrangian2}
W({\pmb x}) = \sum_{a=1}^N e^{\pmb\alpha_a \cdot {\pmb x}}~.
\end{equation}
Here $\pmb\alpha_{1}, \ldots, \pmb\alpha_{N-1}$ are the simple roots and $\pmb\alpha_N = - \sum\limits_{a=1}^{N-1} \pmb\alpha_a$ is the affine, or lowest, root of the $SU(N)$ algebra.\footnote{Roots are normalized to have length $2$; roots and coroots are identified, and $\pmb \alpha_a \cdot \pmb{w}_b = \delta_{ab}$, $a,b=1, \ldots N-1$.} 
The K\" ahler metric  appearing in (\ref{lagrangian1}) is 
\begin{equation}
\label{kahler}
g_{ab} = \delta_{ab} + \ldots~,
\end{equation} and $g^{ab} \approx \delta^{ab}$ is its inverse. We stress that the above minimal form of the K\" ahler metric is not an assumption:  the form of $g_{ab}$ is justified in the semiclassical $L N \Lambda \ll 1$ limit. The dots in (\ref{kahler}) indicate   corrections  computed in \cite{Poppitz:2012sw,Anber:2014sda,Anber:2014lba}. We shall ignore them, as they are negligible in the semiclassical limit (when taken at finite $N$ \cite{Cherman:2016jtu}).

The scales appearing in the long-distance theory (\ref{lagrangian1}) are determined by the dynamics of the underlying 4d SYM theory. We do not need the precise values but only note that $M \sim {g^2 \over L}$, where $g^2$ is the SYM coupling at the scale $L$,  stressing again that $g$ is small in the semiclassical $L N \Lambda \ll 1$ limit. The scale $m \sim M e^{- {8 \pi^2 \over N g^2}}$ is a nonpertubative scale generated by various semiclassical monopole-instantons. As is clear from (\ref{lagrangian1}, \ref{lagrangian2}), $m$ sets the mass scale of the $\phi$ and $\sigma$ fields  and their superpartners. In what follows, we shall often call the scale $m$ the ``dual photon mass." We should, however, keep in mind that $m$ is really the mass of the heaviest of the $N-1$ dual photons, whose mass spectrum is given by $m_k \sim m \sin^2 {\pi k \over N}$, $k=1,...,N-1$.\footnote{See \cite{Cherman:2016jtu} for a discussion of the large-$N$ limit, \cite{Anber:2017ezt} for the study of bound states of dual photons, the 3d remnant of 4d glueballs
\cite{Aitken:2017ayq}  bound by doubly-exponential nonperturbative effects.} We find that the widths of DWs are generally determined by the lightest dual photon mass (with the notable exception of the magnetless walls, see Section \ref{magnetlesssolutions}).
 
{\flushleft{Of}} special interest to us here are the discrete chiral and center global symmetries of SYM:
\begin{enumerate}
\item A 0-form chiral symmetry, $\Z_{2 N}^{(0)}$. This is the usual discrete $R$-symmetry of SYM that  acts on the fermionic superpartners of (\ref{fields1}) (and on the superpotential),  by a phase rotation. However, of most relevance to us is that  it also shifts the dual photons:
\begin{eqnarray}\label{chiral}
\Z_{2 N}^{(0)}: ~\pmb \sigma &\rightarrow& \pmb \sigma + {2 \pi \pmb \rho \over N}~,  \\
e^{\pmb\alpha_a \cdot \pmb{x}} &\rightarrow& e^{i {2\pi \over N}} e^{\pmb\alpha_a \cdot \pmb{x}} ~,~ a = 1, \ldots, N,\nonumber\end{eqnarray}
where we also show the action of  $\Z_{2 N}^{(0)}$ on the terms appearing in the superpotential (\ref{lagrangian2}). We denote the Weyl vector by $\pmb\rho = \pmb{w}_1 + \ldots +\pmb{w}_{N-1}$.
\item A ``0-form" center symmetry
 $\Z_N^{(1),\S^1_L}$. The notation is chosen to emphasize that this is the dimensional reduction of the    $\S^1_L$-component of the 4d 1-form center symmetry. As explained in detail in 
 \cite{Anber:2015wha,Cherman:2016jtu,Aitken:2017ayq}, the  0-form center symmetry  acts on the dual photons and their superpartners:
 \begin{eqnarray} 
 \Z_N^{(1),\S^1_L}: ~ \pmb{x} &\rightarrow& {\cal P} \pmb{x},\label{zeroform0} \\
 e^{\pmb\alpha_a \cdot \pmb{x}} &\rightarrow&  e^{\pmb\alpha_{a+1 ({\rm mod} N)} \cdot \pmb{x}}~ \label{zeroform1},
 \end{eqnarray}
 and is an exact unbroken global symmetry of the theory (also respected by the K\" ahler potential terms omitted from (\ref{kahler})).
 
In a basis independent way, the operation denoted by $\cal{P}$  in (\ref{zeroform0})  is the product of Weyl reflections w.r.t. all simple roots \cite{Anber:2015wha},   while in the standard $N$-dimensional basis for the weight vectors it is a $\Z_N$ cyclic permutation of the vector's components \cite{Cherman:2016jtu,Aitken:2017ayq}. This clockwise  action is evident from the way  $\Z_N^{(1),\S^1_L}$ acts on the  $e^{\pmb\alpha_a \cdot \pmb{x}}$ terms in the superpotential, eqn. (\ref{zeroform1}). This symmetry shall be important in what follows; its action on electric fluxes of DWs is given below in (\ref{zeroform2}).
 \item A 1-form center symmetry $\Z_N^{(1),\R^3}$, acting on Wilson  line operators in $\R^3$  by multiplication by appropriate $\Z_N$ phases.
 \end{enumerate}
 The 0-form and 1-form center symmetries discussed above are parts of the  $\Z_{N}^{(1), \R^4}$ 1-form center of the $\R^4$ theory. 
As recently realized \cite{Gaiotto:2014kfa}, SYM on $\R^4$ has a mixed 't Hooft anomaly between the 0-form  $\Z_{2N}^{(0)}$ chiral symmetry and the $\Z_{N}^{(1), \R^4}$ 1-form center symmetry. This 't Hooft anomaly persists upon an $\R^3 \times \S^1$ compactification and anomaly matching is saturated by the spontaneous breaking of the discrete chiral symmetry,  $\Z_{2N}^{(0)} \rightarrow \Z_2^{(0)}$, as in (\ref{vacua}) below. Anomaly inflow on the resulting DWs implies that the DW worldvolume is nontrivial, supporting the phenomena of quark deconfinement (and braiding of Wilson lines, for DWs with three-dimensional worldvolume). Aspects of this inflow has been  studied in many works \cite{
Gaiotto:2017yup,Gaiotto:2017tne,Argurio:2018uup,Draper:2018mpj,Ritz:2018mce,Bashmakov:2018ghn,Anber:2018jdf,Anber:2018xek}. Here, we continue  its study via semiclassical tools.

It is well known that SYM on $\R^3 \times \S^1$,  (\ref{lagrangian1},\ref{lagrangian2}), has $N$ vacua, determined by solving for the stationary points of the superpotential $W(\pmb x)$. These are labelled by $k= 0, \ldots, N-1$. All vacua have $\langle\pmb\phi\rangle = 0$. The dual photon field $\pmb\sigma$ has nontrivial expectation value:
\begin{eqnarray}
\label{vacua} 
\langle \pmb{\sigma} \rangle_k &=& { 2 \pi k \over N} \pmb\rho ~, \\
\langle X_a \rangle_k  &=& \langle e^{\pmb\alpha_a \cdot \pmb{x}} \rangle_k  =  e^{i {2 \pi k \over N}}, a = 1,\ldots N,\nonumber  \\
\langle W \rangle_k &\equiv& W_k  =  N e^{i {2 \pi k \over N}}~. \nonumber
\end{eqnarray}
We introduced the notation $X_a \equiv e^{\pmb\alpha_a \cdot \pmb{x}}$ (such that    $X_1 X_2... X_N = 1$, to be used in some discussions below, following \cite{Hori:2000ck,Ritz:2002fm}), a set of $N$ fields which are single-valued on the Cartan torus and do not allow for describing nonvanishing monodromies. We also denoted the   expectation value of the superpotential in the $k$-th ground state by $W_k$. The $N$ vacua (\ref{vacua}) are interchanged by the action of the spontaneously broken $\Z_{2 N}^{(0)} \rightarrow \Z_2^{(0)}$ symmetry (\ref{chiral}), while the $\Z_N^{(1),\S^1_L}$ symmetry is unbroken.\footnote{This follows from ${\cal{P}} {2 \pi \pmb\rho \over N} = {2\pi \pmb\rho \over N} - 2 \pi \pmb{w}_1$, see (\ref{zeroform2}). In words,  the action of the 0-form center $\Z_N^{(1),\S^1_L}$ on the vacua (\ref{vacua}) is a weight-lattice shift of $\langle\pmb\sigma\rangle_k$, which is an identification, as per (\ref{fields1}); see also Figure~\ref{fig:su3}.} The 1-form $\Z_N^{(1),\R^3}$ symmetry is also unbroken in the bulk of SYM, corresponding to the confinement of quarks.

It may be helpful  to visualize the fundamental domain of $\pmb\sigma$, the action of the 0-form discrete chiral and center symmetries, and the vacuum structure. We show this in the simple case of $SU(3)$ SYM on Figure~\ref{fig:su3}.

  \begin{figure}[t]
   \centering
  \includegraphics[width=.60\linewidth]{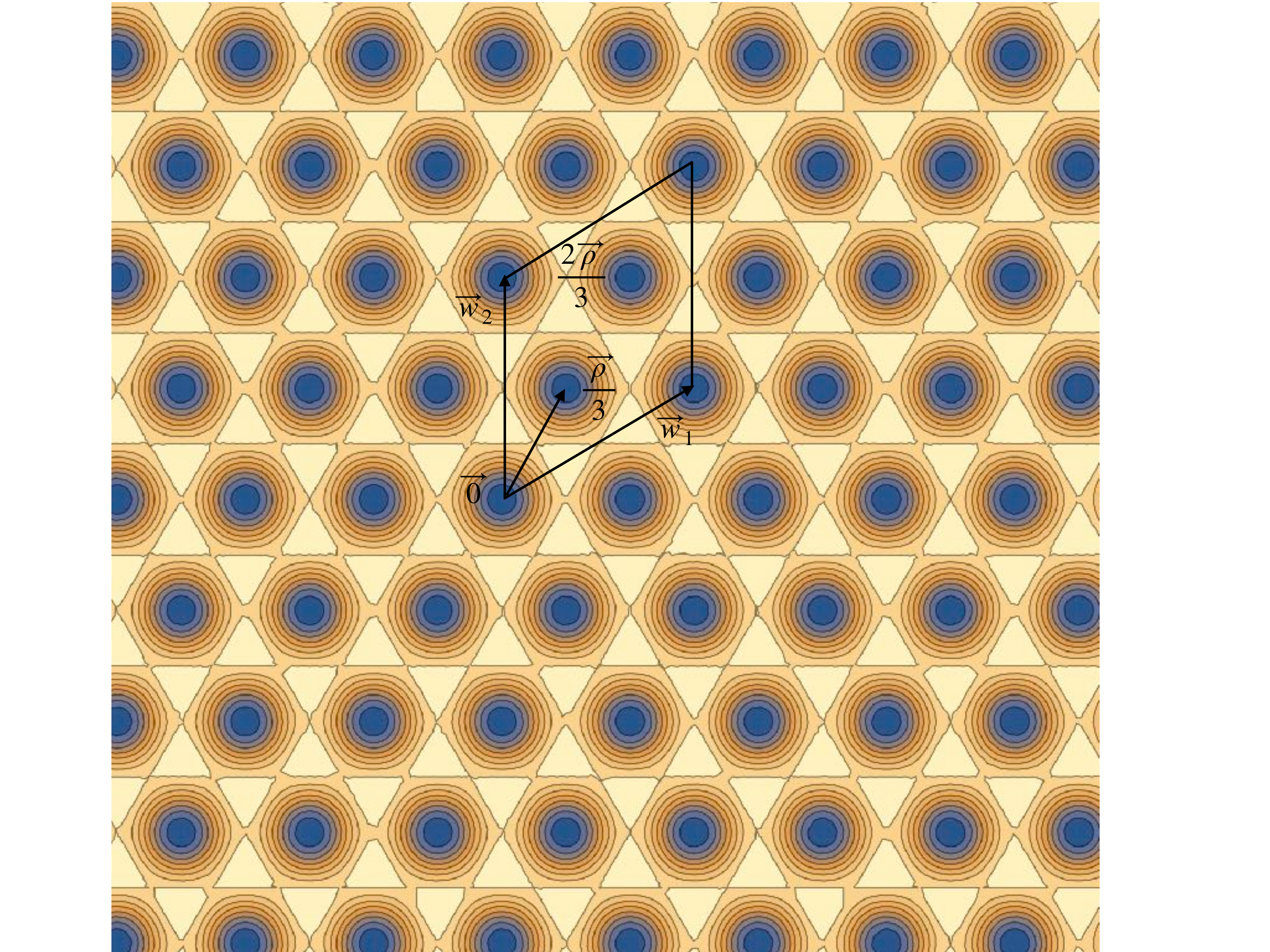}
  \caption{The  $\pmb\sigma \over 2 \pi$-plane in $SU(3)$ SYM.  The fundamental domain of the dual photons (\ref{fields1})  is the unit cell of the weight lattice spanned by $\pmb{w}_1$ and $\pmb{w}_2$. The contours on the plot are equipotential surfaces, see (\ref{lagrangian1}), as a function of $\pmb\sigma\over 2 \pi$, with $\pmb\phi$ set to $0$.
   The three vacua ${\pmb\sigma \over 2 \pi}=0$, $ {  \pmb\rho\over 3}$, ${2 \pmb\rho\over 3}$ in the fundamental domain are shown in dark dots. The $\Z_3$ discrete chiral symmetry acts on $\pmb\sigma \over 2 \pi$ by $\pmb\rho\over 3$ shifts. The 0-form $\Z_3$-center symmetry acts by $120$-degree rotations around the origin. That this symmetry is unbroken in the three vacua follows from applying a $\Z_3$ rotation and a subsequent weight lattice shift. The breaking of the $\Z_3$ 0-form center on the DWs follows from the fact that this symmetry relates different DW solutions: the three $k=1$ BPS DWs have electric fluxes proportional to $\pmb\rho/3$, $\pmb{w}_1 -\pmb\rho/3$ and $\pmb{w}_2 -\pmb\rho/3$, which can be represented by the corresponding vectors on the plot, clearly related by $Z_3$ rotations. }  \label{fig:su3}
\end{figure}

As usual, there are domain walls (DW) connecting the various discrete vacua. A DW is a static configuration on $\R^{3}$ connecting two vacua. While a more appropriate name would be a ``domain line" (as their worldvolume is two-dimensional), we continue to call them DW. The tension of the DW is its energy per unit length. 
A DW connecting vacua $k$ units apart, i.e. stretching between $W_p$ and $W_{p+k ({\rm mod} N)}$, is called a ``$k$-wall."  The physics of the DWs in SYM theory is quite rich and has been the subject of many investigations over the past 20 years, for example \cite{Hori:2000ck,Acharya:2001dz,Ritz:2002fm,Ritz:2004mp,Ritz:2006zz,Dierigl:2014xta,Argurio:2018uup,Draper:2018mpj,Ritz:2018mce,Bashmakov:2018ghn}.

\section{$k$-Wall Fluxes and Deconfinement of Quarks on DWs}
\label{dwdeconfinement}

We begin with some remarks regarding confinement in SYM on $\R^3 \times \S^1$. Most importantly, the theory  abelianizes in the semiclassical regime. Consider then the Wilson loop operator, in a representation $\cal{R}$, taken around some loop $C \in \R^3$. At scales $\gg L$, abelianization reduces this operator  to  the unbroken Cartan-subalgebra Wilson loop. The expectation value of its trace can thus be expressed as a sum over the weights  $\pmb\lambda_b$ of the representation $\cal{R}$:
\begin{eqnarray}\label{wilsonwall}
\langle W_{\cal{R}}(C) \rangle =  \langle {\rm tr}_{\cal{R}} \; e^{i \oint_C A} \rangle\big\vert_{N L \Lambda \ll 1} \rightarrow \sum\limits_{b=1}^{{\rm dim}({\cal R})} \langle e^{i \; \pmb\lambda_b \cdot \oint \pmb A}\rangle~,
\end{eqnarray}
where each term corresponds to the insertion of a quark with worldline along $C$ and electric charge given by  one of the weights  of $\cal{R}$. The expectation value of the Wilson loop for each weight, $\langle e^{i \; \pmb\lambda_b \cdot \oint \pmb A}\rangle$, is computed semiclassically. One begins by realizing that the insertion of this operator imposes a monodromy of the dual photons $\pmb \sigma$ around the loop $C$ and then solves for the field configuration of minimal action that has the right monodromy. In the Polyakov model or in deformed Yang-Mills theory on $\R^3 \times \S^1$ this can be done analytically (for $SU(2)$ and $SU(3)$, e.g.~\cite{Poppitz:2017ivi}) in the limit of an infinite loop   spanning the entire $xy$ plane in $\R^3$. However, the semiclassical configurations extremizing the path integral expression for  $\langle e^{i \; \pmb\lambda_b \cdot \oint \pmb A}\rangle$ in SYM are not easy to describe analytically, since they correspond to the ``double string" picture of Fig.~\ref{fig:01}. In Section~\ref{dwbaryons}, we explain in detail how to formulate the problem and how to obtain the picture on that figure using numerical methods.

We now imagine inserting the Wilson loop $W_{\cal{R}}(C)$ in the  $\R^2_{wall}$ worldvolume of a $k$-wall, i.e. we take 
 $C \in \R^2_{wall}$.  To show that the Wilson loop (\ref{wilsonwall}) exhibits perimeter law in the limit of large contour $C$, with the number of colors $N$ and dim($\cal{R}$) kept fixed as $C$ becomes large, it suffices  to show that at least one of the weights $\pmb\lambda_b$ of the representation $\cal{R}$ is not confined on any $k$-wall. This is because the sum on the r.h.s. of (\ref{wilsonwall}) will contain terms exhibiting perimeter law, $\sim e^{-   P(C)}$, due to the deconfined components (weights) of quarks,  as well as possible terms exhibiting area law,  $\sim e^{-   A(C)}$, due to the confined weights. In the limit of large $C$, the perimeter law terms dominate the sum, leading to the conclusion that the Wilson loop (\ref{wilsonwall}) exhibits perimeter law on the wall.
These remarks will be useful in Section~\ref{deckwall}.

\subsection{Quark Deconfinement on DWs}
 \begin{figure}[h]
   \centering
  \includegraphics[width=.60\linewidth]{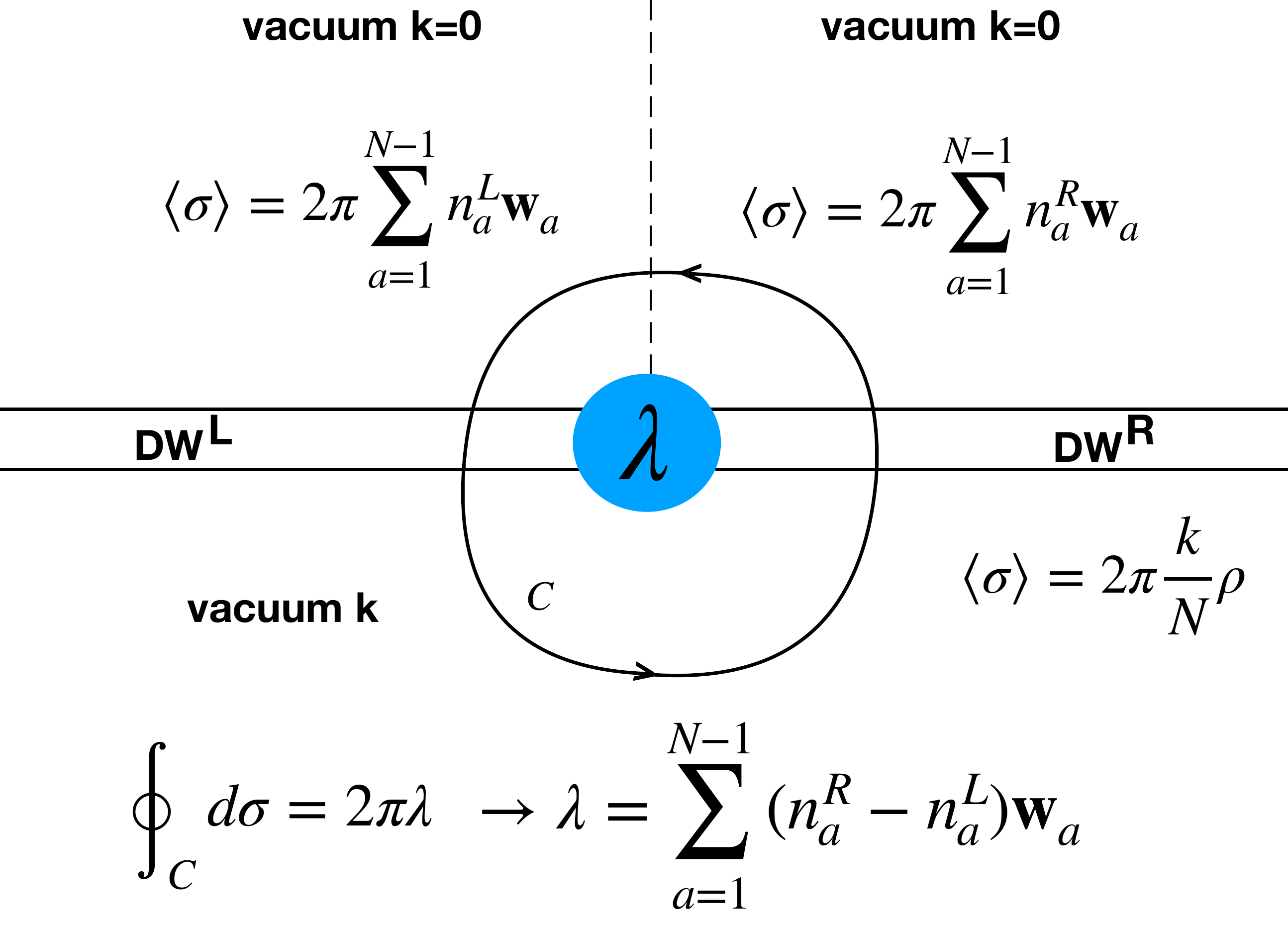}
  \caption{The weight  of a quark 'suspended' on a  wall between vacua $k$ units apart equals the difference between the electric fluxes  of the DWs to the left and the right. Deconfinement of a quark anti-quark pair on the wall further requires that the tensions of the two walls be equal.}  \label{fig:ABC}
\end{figure}
The basic mechanism of quark deconfinement on DWs is illustrated on Fig.~\ref{fig:ABC}. The DWs to the left and right of the quark of weight $\pmb\lambda$ are denoted by $DW^{L, R}$. These DWs connect the vacuum $k=0$, or the ones equivalent to it by weight-lattice shifts (in the half-plane above the walls)  to the $k$-th vacuum (the lower half-plane). The charge, or weight $\pmb\lambda$, of the quark under the Cartan subalgebra $U(1)$'s is equal to the monodromy of $\pmb\sigma$ around the contour $C$ and is thus   determined by the difference of electric fluxes carried by the two walls, as shown in the equation on the bottom of the figure and in (\ref{lambdaonwall}) below. Further, when the tensions of $DW^L$ and $DW^R$ are equal, the quark of weight $+\pmb\lambda$ and an antiquark of weight $-\pmb\lambda$, located some distance apart along the wall (the antiquark is not shown here, but recall Fig.~\ref{fig:02}, or Fig.~\ref{fig:Deconfinement Reverse Electric}) can independently move along the wall, changing their relative distance without any cost of energy (see Fig.~\ref{fig:deconfinement_energy} for a plot of the energy as a function of distance). Equal tension of the two  $k$-walls is guaranteed if they are both BPS, i.e. carry one of the fluxes listed in (\ref{fluxes1}, \ref{fluxes2}). The tensions are also equal if the two walls are non-BPS, but are related by the $\Z_{N}^{(1), \S^1_L }$ 0-form center symmetry. The dashed vertical line on Fig.~\ref{fig:ABC} shows where a weight-lattice discontinuity of $\pmb\sigma$ is imposed (its precise location or shape are irrelevant, but in the  simulations that produced Fig.~\ref{fig:02}, a straight line discontinuity  was used, see Section~\ref{dwbaryons} for details). 

In summary, a quark of weight $\pmb\lambda$ can have its flux split between two $k$-walls that carry fluxes $2 \pi (\sum\limits_{a=1}^{N-1}{n}^L_a   \pmb w_a - {k\over N} \pmb\rho)$ and $2 \pi (\sum\limits_{a=1}^{N-1}{n}^R_a   \pmb w_a - {k\over N} \pmb\rho)$, respectively, provided its weight satisfies 
\begin{equation}
\label{lambdaonwall}
\pmb\lambda = \sum\limits_{a=1}^{N-1}({n}^R_a - n^L_a)   \pmb w_a~.
\end{equation}
This relation only uses charge conservation. In addition, as explained above, deconfinement of a quark and antiquark placed some distance apart on the $k$-wall requires that the tensions of the walls to the left and the right be equal. 

\subsection{The 0-form Center Symmetry and Its Action on DW Fluxes}
In our discussion below, the $\Z_N^{(1), \S^1_L}$ 0-form center symmetry (\ref{zeroform0},\ref{zeroform1}) will play a crucial role.
The $\Z_N^{(1), \S^1_L}$ action on the electric flux of DWs (i.e., on the change of $\pmb\sigma$ across the walls) can be found using the definition of ${\cal P}$ as a product of Weyl reflections wrt all simple roots (recall (\ref{zeroform1})). Taken to act on the difference between the  boundary values of $\pmb\sigma$ on the two sides of a wall, it follows that the $\Z_N^{(1), \S^1_L}$ action is: 
\begin{eqnarray} \label{zeroform2}
{\cal P}^l  \; {k  \over N} \pmb\rho &=& {k\over N} \pmb\rho - k \;\pmb{w}_l~, ~l =1, \ldots N~, \nonumber \\
{\cal P}^l \; \pmb{w}_a &=&  \pmb{w}_{a+l ({\rm mod} N)} -  \pmb{w}_l~,~ a = 1, \ldots N-1, ~ {\rm and \; we \; defined} ~~ \pmb{w}_0 = \pmb{w}_N = 0~. 
\end{eqnarray}
Suppose now the electric flux of some $k$-wall (BPS or not) is given by\footnote{For simplicity, we only consider fluxes $\pmb\Phi$  within the fundamental domain of $\pmb\sigma$, i.e. such that all weight vectors occur with coefficients $0$ or $1$ only. The action of the $\Z_{N}^{(1),\S^1_L}$ center symmetry can be generalized for general fluxes (i.e. non-BPS walls) but we shall not need the corresponding expression here.}
\begin{equation}
    \pmb{\Phi}=\sum_{i=1}^{m}\pmb{w}_{\pi(i)}-\frac{k}{N}\pmb{\rho}
\end{equation}
where $\pi$ is some permutation of the indices $1$ to $N-1$. The total number of weights appearing in the sum is $m$, with $0\leq m\leq N-1$, and we dropped factors of $2\pi$ in the definition of electric fluxes. Then, acting $l$ times with $\mathcal{P}$ on $\pmb{\Phi}$ gives us
\begin{align}
    \mathcal{P}^l \;\pmb{\Phi}
    %\sum_{i=1}^{m}\mathcal{P}^l\left(\pmb{w}_{\sigma(i)}\right)-\mathcal{P}^l\left(\frac{k}{N}\pmb{\rho}\right)\nonumber\\
    %&=\sum_{i=1}^{m}\left(\pmb{w}_{\sigma(i)+l}-\pmb{w}_l\right)-\left(\frac{k}{N}\pmb{\rho}-k\pmb{w}_l\right)\nonumber\\
    %&=\sum_{i=1}^{m}\pmb{w}_{\sigma(i)+l}-m\pmb{w}_l-\frac{k}{N}\pmb{\rho}+k\pmb{w}_l\nonumber\\
    =\sum_{i=1}^{m}\pmb{w}_{\pi(i)+l}+\left(k-m\right)\pmb{w}_l-\frac{k}{N}\pmb{\rho}\label{eqn:action_flux}~.
\end{align} 
Now, we can ask whether $\pmb{w}_l$ appears in the sum ($\sum_{i=1}^{m}$) in equation (\ref{eqn:action_flux}). If it does, then there must be some $i$ between $1$ and $m$ for which $\pi(i)+l({\rm mod}{N})\equiv l$, which implies that $\pi(i)\equiv 0({\rm mod}{N})$. However, we defined the permutation $\pi$ such that this is not possible, and therefore we know that $\pmb{w}_l$ does not appear in the sum. 

Let us now require that $\mathcal{P}^l$ takes solutions with  electric flux  in the fundamental domain (such that all the coefficients on the $\pmb{w}_j$ are either $0$ or $1$) to solutions  whose  electric flux is also in the fundamental domain. This implies  that the coefficient  in front of all $\pmb{w}_j$ appearing in equation (\ref{eqn:action_flux}) must be $0$ or $1$ (a coefficient bigger than $1$ would mean that the solution leaves the fundamental domain). In the sum term, it is obvious that the coefficients are all $0$ or $1$, so the only term that we must consider is the second term, $\left(k-m\right)\pmb{w}_l$. Requiring the result to be in the fundamental domain tells us that $k-m$ must be either $0$ or $1$, and so $m$ must be either $k$ or $k-1$. This means that  we must have started at a solution with either $k$ or $k-1$ of the $\pmb{w}_j$. But it does not matter where we started within the orbit, and therefore any solution in the fundamental domain for a $k$-wall which maps, under the action of $\mathcal{P}$, to other $k$-wall solutions with electric fluxes in the fundamental domain must have either $k-1$ or $k$ of the $\pmb{w}_j$.

 To summarize, in Eqns.~(\ref{zeroform2}, \ref{eqn:action_flux}), we found how the  $\Z_N^{(1), \S^1_L}$ symmetry acts on the DW fluxes. 
This result, along with the conclusion from the previous paragraph,  implies that the $ \left( \begin{array}{c}N \\ k \end{array} \right)$ walls with fluxes (\ref{fluxes1}, \ref{fluxes2}) are the only $k$-walls such that a $\Z_N^{(1), \S^1_L}$  transformation maps walls whose flux is within the fundamental domain (\ref{fields1}) to walls whose flux is also inside the fundamental domain.  The significance of this observation is as follows.  Walls where $\pmb\sigma$ leaves the fundamental domain of the weight lattice are such that some of the complex covering space coordinates $X_1,... X_N$ (introduced in (\ref{vacua}))  necessarily  have nontrivial winding  around the origin of the complex plane as the wall is traversed. As 
 discussed in \cite{Hori:2000ck,Ritz:2002fm},   $k$-walls with winding $X$-space trajectories are
 non-BPS. On the other hand,   the $  \left( \begin{array}{c}N \\ k \end{array} \right)$ BPS $k$-walls are walls with no  winding for any of the $X$'s; below, we call these ``nonwinding." Being a symmetry of the theory, $\Z_N^{(1), \S^1_L}$ maps  BPS walls into BPS walls.  Since the previous argument  shows that  the walls with fluxes (\ref{fluxes1}, \ref{fluxes2}) comprise the only set of only ``nonwinding" walls closed under the 0-form center symmetry, these walls must be all the BPS walls.

\subsection{Deconfinement on $k$-Walls}
\label{deckwall}

Next, we use the result for the fluxes of BPS $k$-walls (\ref{fluxes1},\ref{fluxes2}) to argue that Wilson loops in arbitrary $SU(N)$ representations of nonzero $N$-ality  obey perimeter law on $k$-walls. As discussed in the beginning of this Section, it suffices to show that at least one weight of any nonzero $N$-ality representation is deconfined.
Our knowledge of the BPS $k$-wall fluxes (\ref{fluxes1}, \ref{fluxes2}) is sufficient to argue that for any $\cal{R}$ of nonzero $N$-ality $q$ ($1\le q < N$), there is a deconfined weight. This already follows from the fact\footnote{A proof is in, e.g. Appendix C.1 of \cite{Poppitz:2017ivi}.} that the $q$-th fundamental weight $\pmb{w}_q$ is a weight of any representation $\cal{R}$ of $N$-ality $q$. Quarks of this weight are deconfined, as shown on Fig.~\ref{fig:ABC}, by two BPS $k$-walls whose fluxes  (\ref{fluxes1},\ref{fluxes2}) can be taken as
$\pmb{w}_{i_1}+\pmb{w}_{i_2}+ \ldots +\pmb{w}_{i_{k-1}} - {k \over N} \pmb \rho$ and $\pmb{w}_{i_1}+\pmb{w}_{i_2}+ \ldots +\pmb{w}_{i_{k-1}} + \pmb{w}_q - {k \over N} \pmb \rho$ (where $i_1, ...i_{k-1}$ are all taken  different from $q$).
An especially simple example is the Wilson loop in the fundamental representation. It is easily seen that all its weights are deconfined on BPS $k$-walls. This is because the weights of the fundamental representation are $\pmb\nu_A = \pmb{w}_A - \pmb{w}_{A-1}$ ($A=1, \ldots N$, with $\pmb{w}_0\equiv \pmb{w}_N \equiv 0$). The weight $\pmb\nu_A$  is always equal to the difference between two of the BPS $k$-wall fluxes (\ref{fluxes1},\ref{fluxes2}).\footnote{In fact, for $k>1$, it is easy to see that there is a degeneracy in the choice of $k$-walls on which fundamental quarks are deconfined. The source of this degeneracy is the increasing degeneracy of BPS $k>1$-walls.} 

In conclusion, the study of BPS wall fluxes on $\R^3 \times \S^1$ allows  us to conclude that $SU(N)$ gauge invariant  Wilson loops of nonzero $N$-ality in $\R^3$ exhibit perimeter law on domain walls separating vacua $k$ units apart. Thus, the $\Z_N^{(1), \R^3}$ center symmetry under which these loops are charged is broken on the DW, as required by anomaly inflow arguments. 

\subsection{The ${k=1}$ wall worldvolume TQFT and anomaly inflow}
 \label{1wallTQFT}

The simplest  walls to consider in more detail are the $k=1$ walls. There are exactly $N$  such walls of fluxes $2 \pi (\pmb{w}_a - {1\over N}  \pmb\rho)$, $a=0,..., N-1$, with $\pmb{w}_0 = 0$. These walls fill out an $N$-dimensional orbit of the $0$-form center symmetry. 

A different  way to describe the $N$  $k=1$ walls is in terms of a worldvolume TQFT.\footnote{We ignore the decoupled ``center of mass" degrees of freedom.} It was suggested in  \cite{Anber:2018xek} that the worldvolume theory is the $\Z_N$ 2d topological gauge theory\begin{equation}
\label{1walltqft}
S_{1-wall} =   {N \over 2 \pi} \int\limits_{\R^2} \phi^{(0)} d a^{(1)}~.~
\end{equation}
Here  $\R^2$ denotes  the DW worldvolume, 
$\phi^{(0)}$ is a compact scalar with period $2\pi$ transforming under the $0$-form center symmetry $\Z_N^{(1),\S^1_L}$, $\phi^{(0)} \rightarrow \phi^{(0)} + {2 \pi \over N}$,  and $a^{(1)}$ is a compact gauge field transforming under the $1$-form center $\Z_N^{(1),\R^2}$, $a^{(1)} \rightarrow a^{(1)} + {1 \over N} \epsilon^{(1)}$, where $\epsilon^{(1)}$ is a closed $1$-form with period $2\pi$. 

 The TQFT (\ref{1walltqft}), upon quantization, can be seen to have $N$ ground states related by the broken $0$-form center symmetry $\Z_N^{(1),\S^1_L}$. These ground states can be identified with the $N$ different  BPS $1$-walls found in this paper. That the description using (\ref{1walltqft}) is correct is also supported by the fact that (\ref{1walltqft}) is the $\S^1_L$-dimensional reduction\footnote{$\phi^{(0)}$ is identified with the $\S^1_L$-holonomy of the 3d Chern-Simons gauge field.} of the 3d $U(1)_{N}$ Chern-Simons theory, which has been proposed as a worldvolume theory on $k=1$ walls in $\R^4$ (see \cite{Bashmakov:2018ghn} for recent studies). 
Upon gauging of the $0$-form and $1$-form global symmetries, the theory (\ref{1walltqft}) can be seen to match  the mixed $0$-form center/$1$-form center $\Z_N^{(1),\S^1_L} \Z_N^{(1),\R^2}$ anomaly on its worldvolume, required by anomaly inflow on the $1$-wall between chirally broken vacua.\
 
Deconfinement of quarks on the DW is reflected in the TQFT (\ref{1walltqft}) by the fact that $N$-ality $q$ charges are represented by insertions of $e^{i q \oint a^{(1)}}$ Wilson lines, which obey  perimeter law in the theory  (\ref{1walltqft}). 
Our final comment here is that the TQFT (\ref{1walltqft}) captures (and can be argued to predict) the topological features of deconfinement,\footnote{It would be an interesting exercise to construct the worldvolume TQFTs for the $k>1$ DWs in the semiclassical regime. These TQFTs would have to correctly account for the $ \left( \begin{array}{c}N\cr k\end{array} \right)$ different $k$-walls with fluxes (\ref{fluxes1}, \ref{fluxes2})  and for their arrangements into orbits under the $0$-form center symmetry (\ref{zeroform2}). We shall not attempt this here.}  but does not shed light on the microscopic mechanism, the main focus of our study here.

\section{Numerical studies of BPS and non-BPS Solitons}
\label{dwnumerics}

\subsection{The BPS Equation}
 A domain wall is a co-dimension one static field configuration that interpolates between different vacua. The minimal-energy DW configurations satisfy the BPS equation and are also called BPS solitons.
Let $z$ denotes the one dimensional spatial coordinate. The complex field (\ref{fields1}), which will be also referred as $\xv$ (instead of $\pmb x$), is a vector in the $SU(N)$ Cartan algebra. The components of  $\xv$  will often be labeled as $x_i$, $i=1, \ldots, N-1$. The BPS equation is given by
\begin{eqnarray} \label{eq:BPS}
\dfrac{d\vec{x}}{dz} &=& \dfrac{\alpha}{2} \dfrac{dW^*}{d\vec{x}^*}~, ~ {\rm where}  ~~
\alpha  =  \dfrac{W(\vec{x}(\infty)) - W(\xv(-\infty))}{|W(\xv(\infty)) - W(\xv(-\infty))|}~.  \end{eqnarray}
The boundary conditions on $\xv(\infty)$ and $\xv(-\infty)$ are given by any two distinct vacua (\ref{vacua}), $\xv_k = i {2\pi k \over N} \vec\rho$, $k=0,\ldots N-1$. 

We note that here, and in the rest of the paper, distances are dimensionless and can be thought of as measured in units of the dual photon mass $m$.

\subsection{Solving the BPS Equation Numerically} \label{sec:Method}
\subsubsection{Gauss-Seidel Finite-Difference Method for One Dimension}
We solve the BPS equation numerically by combining the Gauss-Seidel method and the finite-difference method \cite{Numerical_Method}. The finite-difference method converts a boundary-value problem for the Poisson equation to a system of algebraic equations by replacing derivatives with finite differences. The Gauss-Seidel method solves that system of algebraic equations by taking any initial ``guess" of the solution and iteratively ``relaxes" it to the true solution. Because of this, we will also call this method the ``relaxation method."

To convert our problem into a Poisson  equation, we first differentiate the BPS equation to get a second order equation:
\begin{eqnarray}\label{secondorder}
 {d^2 x_i \over d z^2} &=& \frac{1}{4} \sum_{j=1}^{N-1} {\partial W \over \partial x_j}  
 {\partial^2 W^* \over \partial x_j^* \partial x_i^*} \\
&=& \frac{1}{4} \sum_{j=1}^{N-1} \left( \sum_{a=1}^N e^{\ava \cdot \xv} \alpha_a^j \right) \left( \sum_{a=1}^N e^{\ava \cdot \xv^*} \alpha_a^j \alpha_a^i \right)~,  \quad i = 1,\dots,N-1~, \nonumber
\end{eqnarray}
where $\alpha_a^j$ is the $j$-th component of the vector $\ava$.
We discretize this problem by replacing the real  line by finitely many points. We choose a sufficiently large distance to be approximated as ``infinity" and a sufficiently small number, $h$, to be the pixel of the approximation, so that the real line becomes the discretized segment $[z_0, z_f]$: 
 $ L = \{z_0,z_0+h, z_0 +2h, \dots, z_f-h, z_f \}$.

Before describing how to use the method to solve for the vector field $\xv$ on this discrete line, it is simpler to first consider a scalar field. Given any differentiable function $f(z)$, its second derivative can be approximated numerically using the centered-difference approximation:
\begin{equation}
   {d^2 f \over dz^2} (z) = \frac{f(z+h)-2f(z)+f(z-h)}{h^2} + O(h^2)
\end{equation}
Ignoring the remainder $O(h^2)$ by choosing a sufficiently small $h$, we can solve for $f(z)$ in terms of its neighboring values:
\begin{equation} \label{eq:Gass-Seidel}
    f(z) = \frac{f(z+h) + f(z-h) - h^2 f''(z)}{2}
\end{equation}
This formula says that the value of $f$ at the point $z$ is the average of its two neighboring values subtracted by a term proportional to the second derivative at $z$.
In the Gauss-Seidel method, we start with some arbitrary initial field values for each point in the discretized real line $L$, with the exception of the two endpoints, which are fixed to be the desired boundary values. Then for each $z \in L- \{z_0,z_f \}$, we calculate the second derivative of $f$ at $z$, and replace the value of $f(z)$ by the average of its neighboring values minus the second derivative, as in equation (\ref{eq:Gass-Seidel}). We repeat this process until convergence, using the immediately available values during updates (more on this point later). Note that the boundary values $f(z_0)$ and $f(z_f)$ are never changed.

Convergence is measured by how much the field changes from one update to the next. For example, we can define an error function by 
\begin{equation}
    e(f_i) = \frac{1}{n} \sum_{z \in L} |f_{i} (z) - f_{i-1} (z)|
\end{equation}
where $i$ denotes the $i$-th updates and $n$ is the total number of points in the discretization. We can set some small number to be the tolerance for error and only stop the updating process until the error is less than this tolerance ($T$):
\begin{equation}
    e(f_i) < T \quad \text{(stopping condition)}~.
\end{equation}
There are two ways to perform the updating. In the Gauss-Seidel method, as each $f(z)$ is updated,
$$ f(z) \to f_{\text{new}}(z)~, $$
it is immediately used for the computation of its neighbor:
$$ f(z+h) \to \frac{
f_{\text{new}}(z)+ f(z+2h) - h^2 f''(z+h)}{2} $$
In contrast, in the Jacobi method, the new values are retained and are only used for the next iteration:
$$ f(z) \to f_{\text{new}}(z) $$
$$ f(z+h) \to \frac{f(z)+ f(z+2h) - h^2 f''(z+h)}{2} $$
Because the Gauss-Seidel method uses the immediately available values, it is usually considered a better estimate \cite{Numerical_Method}. For our solution to the BPS equation, we found that the Jacobi iteration method converges slowly and we used the Gauss-Seidel method throughout.

To solve for a vector field, we simply apply the above algorithm simultaneously to every component. The only difference is that the second derivative of each component will in general depend on the values of all the other components. Also, the error function gets divided by a further factor of $N-1$, which is the number of fields, so that it is comparable across cases with different $N$.

\subsubsection{Verifying the BPS Nature of Solutions} \label{sec:verification}
The Gauss-Seidel finite-difference method gives us the solution to the second order equation. We still need to check that it solves the first order BPS equation. We accomplish this by comparing the numerical derivative and theoretical derivative of our solution field.

Let $\xv$ be a solution to the second order equation (\ref{secondorder}). We can numerically compute its first derivative with respect to $z$ simply by
\begin{equation} \label{eq:numerical_first_derivative}
 {d\xv \over d z} (z)\vert_{\rm 2nd\;  order} = \frac{\xv(z+h) - \xv(z-h)}{2h}
\end{equation}
If this is also a solution to the BPS equation, then it must also satisfy \begin{equation}
     {d\vec{x} \over dz}\vert_{\rm ``theor."} = \frac{\alpha}{2} \frac{dW^*}{d\vec{x}^*} = \frac{\alpha}{2} \sum_{a=1}^N e^{\ava \cdot \xv^*} \ava \label{eq:theoretical_first_derivative}
\end{equation}
We call the above equation the ``theoretical" first derivative, where we use the solution of the 2nd order equation to compute the r.h.s.

We substitute the $\xv$ we got by solving the 2nd order equation (\ref{secondorder}) into equation (\ref{eq:numerical_first_derivative}) and (\ref{eq:theoretical_first_derivative}) and plot them together. If we have a minimal energy solution (BPS solution), then the two plots should coincide. If we have a non-BPS solution that only solves the second order equation but not the first order equation, then the two graphs will look different.

As a further check, we also compute the energy of a solution. The dimensionless energy per unit length  of any static field configuration  (BPS or not)  interpolating between two vacua is given by \cite{Hori:2000ck}
\begin{equation} \label{eq:Energy}
    E = \int_{-\infty}^{\infty} dz \left\{ \left|  {d \xv \over dz} \right|^2 + \frac{1}{4} \left|  {d W \over d \xv} \right|^2 \right\}~.
\end{equation}
On the other hand, a BPS soliton, by definition, has energy
\begin{equation} \label{eq:BPS Energy}
    E = \left| W(\xv(\infty)) - W(\xv(-\infty)) \right|~.
\end{equation}
So given a solution to the second order equation, we substitute it into equation (\ref{eq:Energy}) and (\ref{eq:BPS Energy}) and see if they are the same. We perform the numerical integration in equation (\ref{eq:Energy}) using the trapezoid rule.

\subsubsection{Non-Uniqueness of Solutions} \label{sec:non-unqiueness}

In general, there is no uniqueness theorem for boundary value problems. For the BPS equation, one kind of non-uniqueness is the location of the center of the solution. This can be understood as a result of the property that the BPS equation is invariant under the transformation $z \to z - z_0$, where $z_0$ is any constant, leading to a freedom in choosing  $z_0$.
Generally, choosing different initial values in the relaxation process will result in solutions with different $z_0$. For easy visualization and consistency, we force our solutions to be centered at the origin by choosing the following set of initial values for the imaginary component of $\xv(z)$:
\begin{equation}
    \sigmav(z) = \begin{cases}
    \sigmav(-\infty), \quad z< 0 \\
    \sigmav(\infty), \quad z \geq 0
    \end{cases}
\end{equation}
This guarantees that the initial values already have an artificial ``kink" at the origin.\footnote{As for the real component, for all vacua (\ref{vacua}) $\phiv(-\infty) = \phiv(\infty)=0$. We can either start with setting $\phi$ equal to zero on the boundaries and some nonzero value in between, or have $\phi$ be initially zero throughout. }
 If a kink solution exists, the Gauss-Siedel iteration will smooth this out into a correct solution. Of course, we make sure that choosing any other initial values for $\sigmav(z)$ will still yield the same result, albeit centered at a different location.
We believe this translational freedom is the only source of non-uniqueness of the BPS solutions.

\subsubsection{An Example of the Method}
We show an example here that demonstrates how this method works. Here, we solve for the $k=1$ wall in $SU(3)$,   with boundary conditions $\xv(-\infty) = 0$ and $\xv(\infty) = i \frac{2 \pi}{3} \rhov$. Our range is from $z_0 = -20$ to $z_f = 20$, with $200$ points in between. This gives a lattice spacing of $h = 0.2$.\footnote{In units of the dual photon mass $m$, the length of our interval is $40/m$, the pixel size is $0.2/m$.  We note that the width of the DW (see Fig.~\ref{fig:SU(3) k=0 to k=1}), is determined by the inverse of the  lowest of the dual photon masses, $\sim m$ for $SU(3)$.}
The initial values  are set up such that the final result will have the kink centered at the origin, as described in section \ref{sec:non-unqiueness}.

%\begin{figure}[ht]
   % \centering
    %\includegraphics[scale=0.6]{"SU(3) Initial Values Example".png}
    %\caption{An example for initial values taken for $SU(3)$, $\xv(-\infty) = 0$, $\xv(\infty) = i\frac{2\pi}{N} \rhov$. Note that the real component has much smaller initial values.}
 %   \label{fig:Initial Values}
%\end{figure}

The relaxation result, with tolerance set to $T = 10^{-9}$, is shown in figure \ref{fig:SU(3) k=0 to k=1}. The top two pictures are the real and imaginary component of the solution, respectively. The bottom two pictures are their corresponding derivative with respect to $z$. Note that both of the numerical and theoretical derivatives were plotted, with the theoretical curve being plotted in dotted line. However, the dotted lines are not visible in the pictures, because the theoretical and numerical derivative completely overlap. This shows that the solution we found also solves the first order equation and is a BPS soliton.

\begin{figure}[ht]
    \centering
    \includegraphics[scale=0.6]{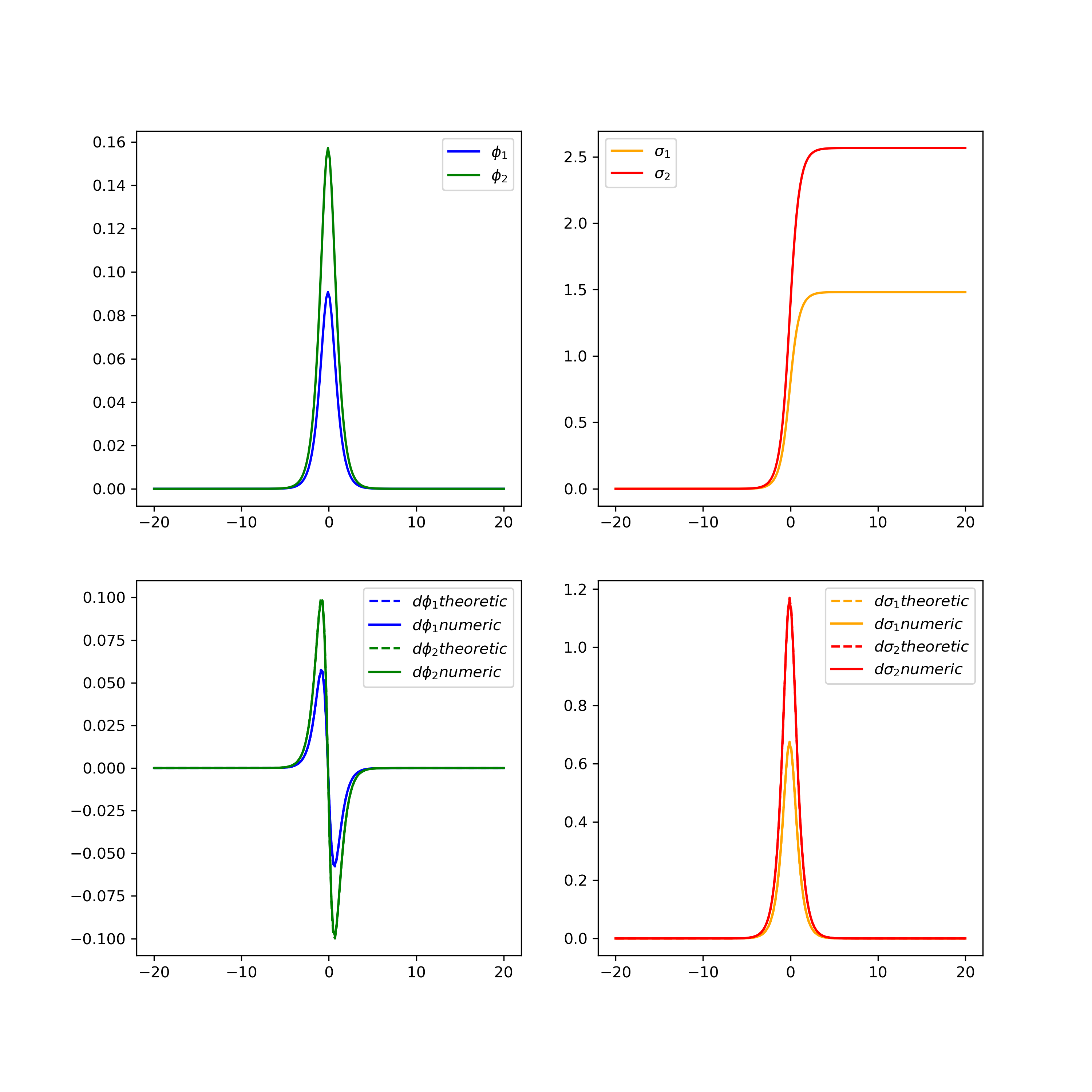}
    \caption{BPS solutions for $SU(3)$, $\xv(-\infty) = 0$, $\xv(\infty) = i\frac{2\pi}{N} \rhov$. The grid goes from $-20$ to $20$. The pixel is $h=0.2$. The tolerance is $T = 10^{-9}$.}
    \label{fig:SU(3) k=0 to k=1}
\end{figure}

In contrast, an example of a boundary condition that admits no BPS solution is $\xv(-\infty) = 0$, $\xv(\infty) = i 2 \frac{2\pi}{N} \rhov$. This non-BPS solution is shown in figure \ref{fig:SU(3) k=0 to k=2}. We know this is a non-BPS solution since the dotted-lines show up in the plot, meaning that the numerical and ``theoretical" first derivatives (\ref{eq:numerical_first_derivative}) and (\ref{eq:theoretical_first_derivative}) do not match. 
\begin{figure}[ht]
    \centering
    \includegraphics[scale=0.6]{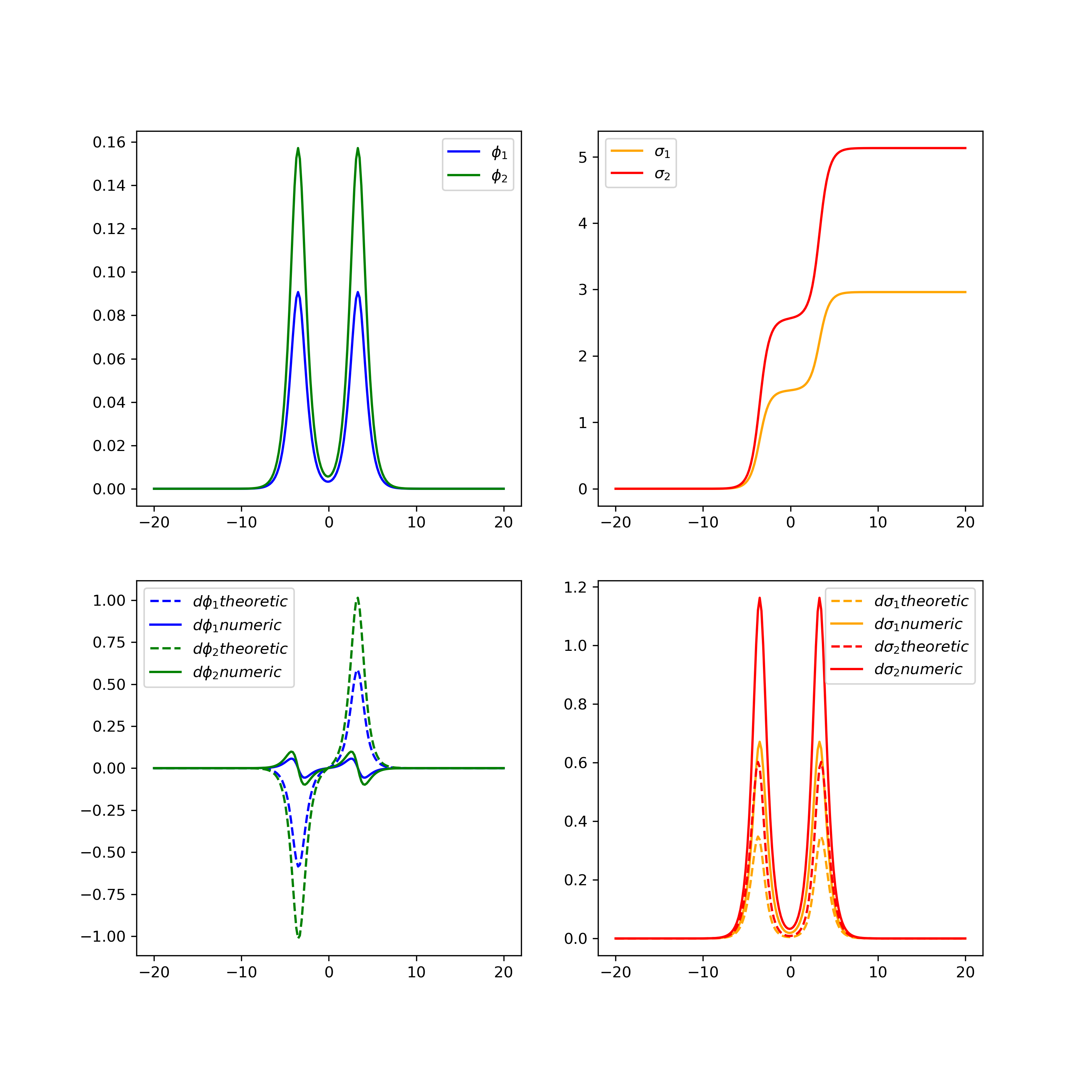}
    \caption{Non-BPS solutions for $SU(3)$,  $\xv(-\infty) = 0$, $\xv(\infty) = i 2 \frac{2\pi}{N} \rhov$. The grid goes from $-20$ to $20$. The pixel is $h=0.2$. The tolerance is $T = 10^{-6}$.}
    \label{fig:SU(3) k=0 to k=2}
\end{figure}

\subsection{Notable Results}
The method we described in Section \ref{sec:Method} can solve any BPS equation with any given boundary conditions, as well as distinguish the boundary conditions  that do not admit BPS solutions. We now describe some of the results we find worthy of noting, found using this method.

\subsubsection{The Formula for Boundary Conditions that Admit BPS Solutions}
The key result that we found is a formula for the boundary conditions that admit BPS solutions. We found strong numerical evidence for this formula and hypothesize that this is true in general.

Let's call any BPS soliton that connects two vacua $k$ units apart in the $\sigma$-space a $k$-wall. It is easy to show that any $k$-wall in $SU(N)$ is equivalent, up to a sign and the addition of a constant vector, to one that starts from a vacuum of the unit cell that is equivalent to the origin. Among these, we found that the boundary conditions that admit BPS solutions are given by
\begin{align} \label{eq:N choose k}
\begin{cases}
    i 2\pi \wv_{i_1} + \dots + i 2 \pi \wv_{i_{k-1}} &\to i k \frac{2\pi}{N} \rhov \\
    i 2\pi \wv_{i_i} + \dots + i 2 \pi \wv_{i_{k}} &\to i k \frac{2\pi}{N} \rhov
\end{cases}
\end{align}
where $\wv_j$ are the fundamental weights and the sets $i_j$ in each case are different numbers taken from $1, \dots, N-1$.
Since there are only $N-1$ different fundamental weights, equation (\ref{eq:N choose k}) implies that the total number of $k$-walls in $SU(N)$ is
\begin{equation}
    {N-1 \choose k-1} + {N-1 \choose k} = {N \choose k}
\end{equation}
This counting of BPS solutions agrees with the previously known result \cite{Hori:2000ck}.

Although this pattern was originally noticed using the method detailed in Section \ref{sec:Method}, where we manually check whether the theoretical and numerical first derivatives match for each case, in order to determine the BPS-ness of a solution, we later develop a more efficient method to test this formula for a large number of cases. For each $N$ and each $k$, we first compute all the solutions to the second order equations with boundary conditions $k$ units apart. Then we compute their numerical energy, using the method described in section \ref{sec:verification}. We plot all of their energy together, along with the theoretical minimal energy of a BPS solutions, given by equation (\ref{eq:BPS Energy}). Since all non-BPS solutions have energy that are larger than the BPS energy, it is easy to tell from such plot which boundaries give rise to BPS solutions.
We check this for all $N$ and $k$ up to $N=8$. So far, all results conform to our hypothesis. We show an example for $SU(6)$ and $k=3$ in figure \ref{fig:SU(6) k=3 Walls}.

\begin{figure}[ht]
    \centering
    \includegraphics[scale=0.2]{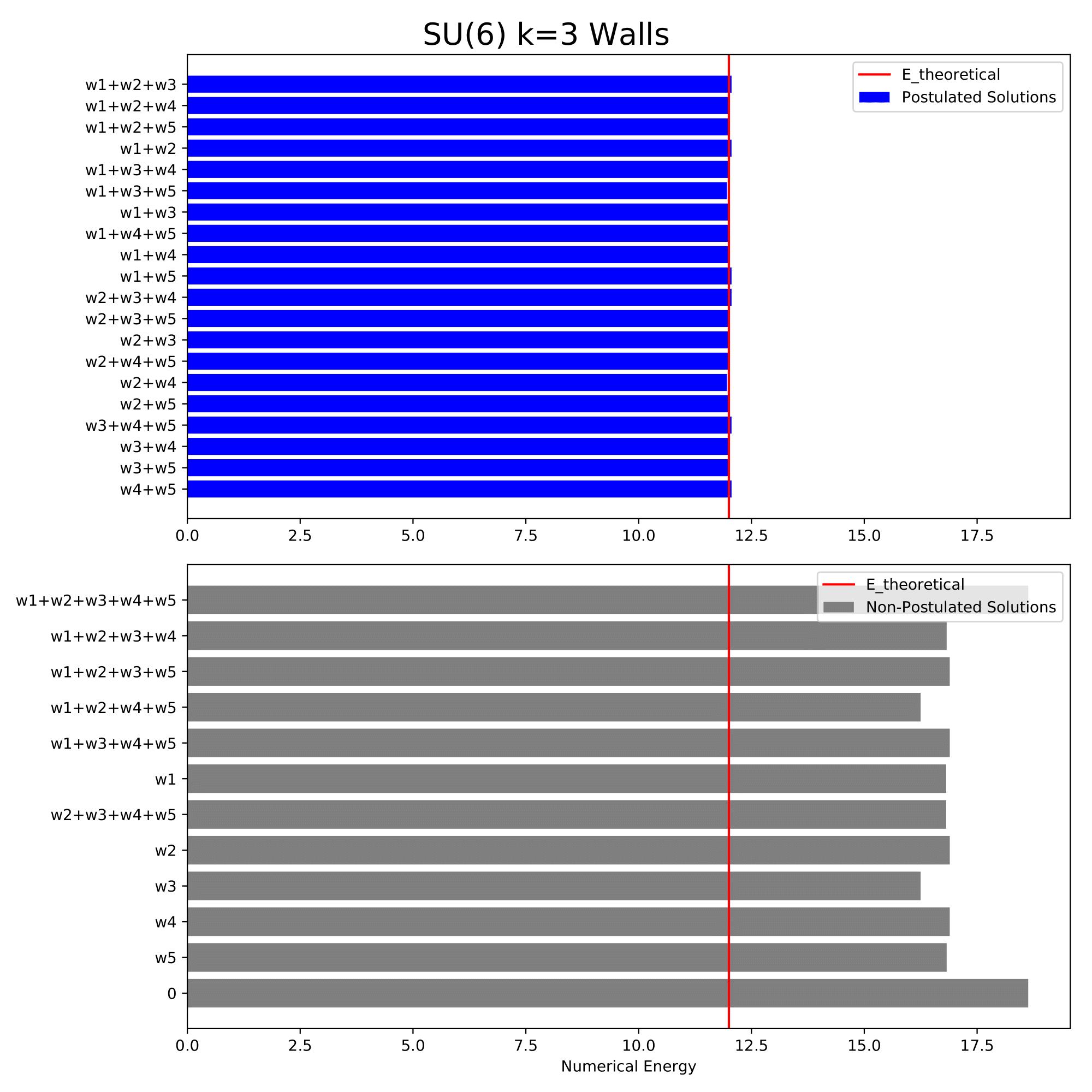}
    \caption{Energy for $SU(6)$ $k=3$ walls. The blue bars are the energies of solutions that Eqn.~(\ref{eq:N choose k}) claims to be the BPS solutions. The grey bars are the energies of all other solutions. The red vertical line is the  energy of BPS solutions (\ref{eq:BPS Energy}). This graph, as well as analogous results for other $N$ and $k$, strongly supports Eqn.~(\ref{eq:N choose k}).}
    \label{fig:SU(6) k=3 Walls}
\end{figure}

Furthermore, as described in Section~\ref{dwdeconfinement}, we found that the group of the sum of $k$ or $k-1$ number of fundamental weights has a special property with respect to the zero-form $\mathbb{Z}_N$ center symmetry, lending further  support to the argument that this equation gives all possible boundaries for BPS solitons.

\subsubsection{Reversed Direction of Electric Fields}
For $N < 5$, the qualitative behaviour of the BPS dual photons in the DWs are the same: every dual photon starts at a vacuum at negative infinity, increases abruptly at a kink, reaches an inflection point and quickly starts decreasing, and then plateaus to a second vacuum (or the other way around: starts from a higher vacuum and decreases through a kink to a lower vacuum). A good example of this is the $SU(3)$ case shown in Figure~\ref{fig:SU(3) k=0 to k=1}. In particular, every dual photon has one inflection point, for $N<5$. 

However, there appears to be a change starting at $N=5$. Here, with seemingly no discernible patterns on the associated boundary conditions, some of the BPS solutions have some components of its dual photon possessing three inflection points. This means its behaviour is qualitatively different: a dual photon of this kind starts at a vacuum at negative infinity, decreases slightly before hitting the kink, then increases until it hits a second inflection point, slows down to a new plateau, and then dips a little bit, before finally settling down to a new vacuum. In other words, these dual photons ``dip" a little right before and after the kink.

One example of this is a $k=1$ wall in $SU(6)$, with boundary conditions $\xv(-\infty) = 0$, $\xv(\infty) = i \frac{2 \pi}{N} \rhov$,  shown in Figure~\ref{fig:SU(6) k=0 to k=1}. In this case, only $\sigma_1$ exhibits this ``double-dipping" phenomenon, while all other components of $\sigma$ only have one inflection point. Interestingly, the corresponding $\phi_1$ also stands out from the crowd. All the other $\phi$ have the same peak shape, while $\phi_1$ behaves differently.

\begin{figure}[ht]
    \includegraphics[scale=0.5]{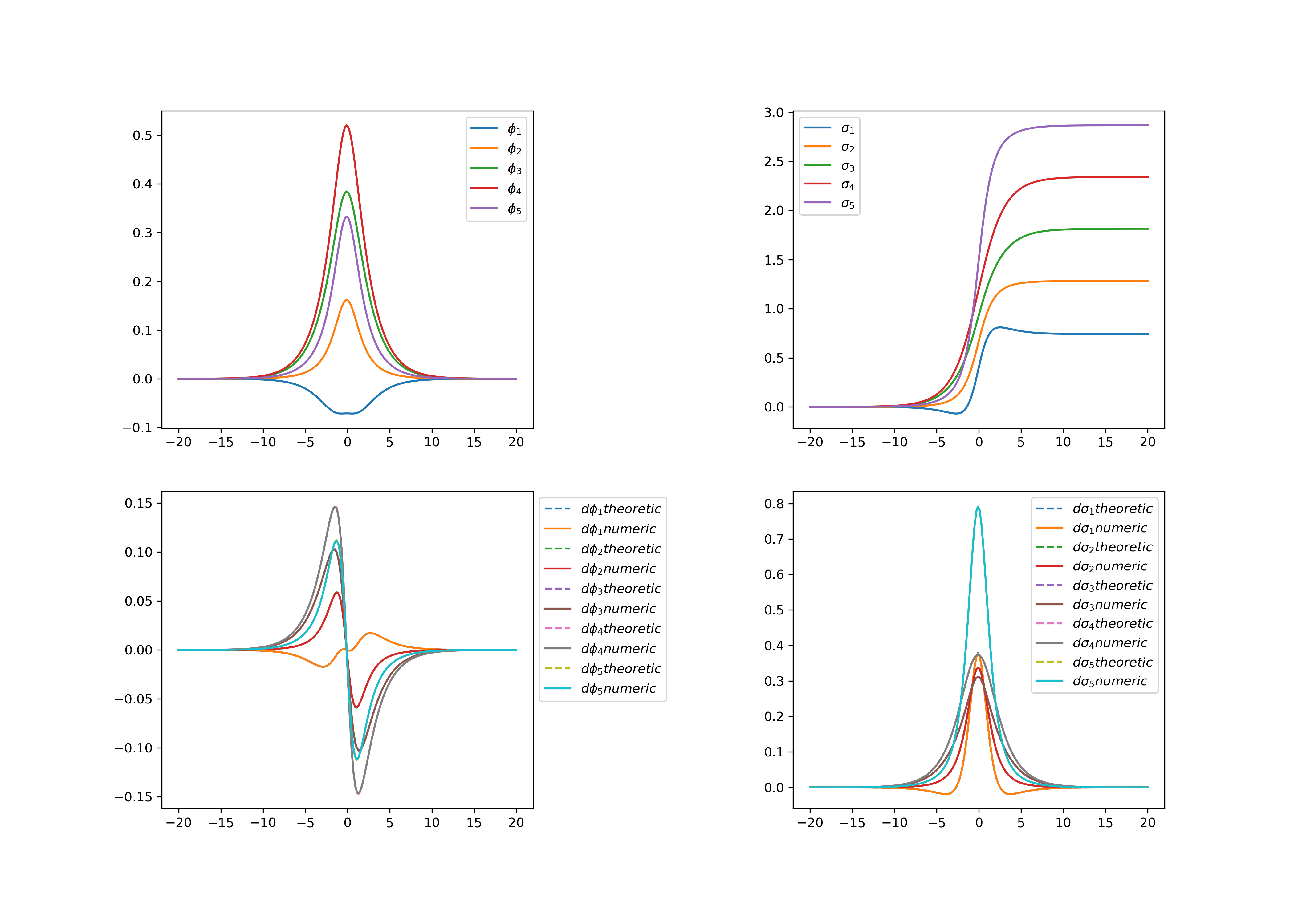}
    \caption{``Double dipping" in $\sigma_1$ and an unusual shape in $\phi_1$.  BPS solutions for $SU(6)$, $\xv(-\infty) = 0$, $\xv(\infty) = i\frac{2\pi}{N} \rhov$. The grid goes from $-20$ to $20$. The pixel is $h=0.2$. The tolerance is $T = 10^{-6}$.}
    \label{fig:SU(6) k=0 to k=1}
\end{figure}

Another example is a $k=2$ wall in $SU(5)$, with boundary conditions $\xv(-\infty) = i 2 \pi \wv_3 $, $\xv(\infty) =i 2 \frac{2\pi}{N} \rhov$, as shown in Figure~\ref{fig:SU(5) k=w3 to k=2}. Here, there are two components that exhibit this behaviour and their corresponding magnetic fields are equally exotic.

\begin{figure}[ht]
    \includegraphics[scale=0.6,width=1.1\textwidth]{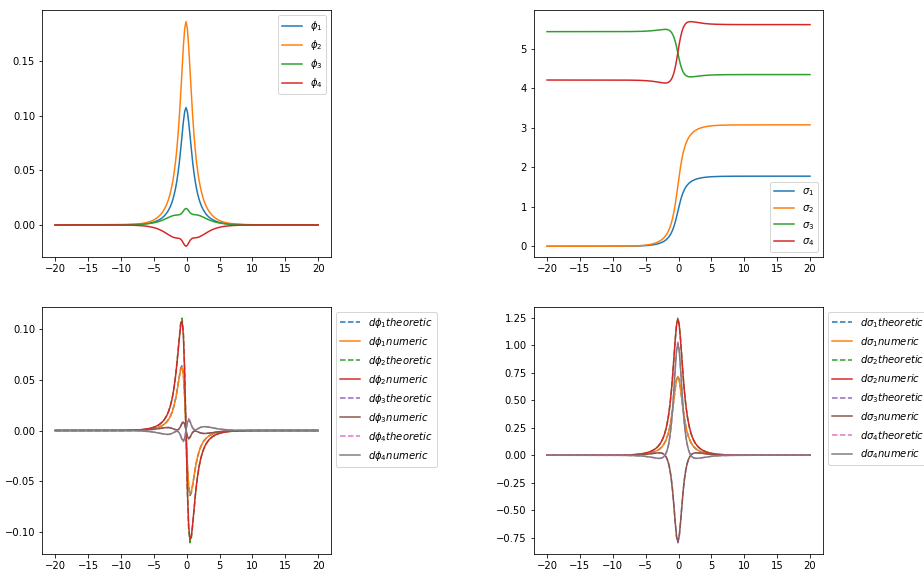}
    \caption{``Double dipping" in $\sigma_3$ and $\sigma_4$ and unusual shapes in $\phi_3$ and $\phi_4$. BPS solutions for $SU(5)$,  $\xv(-\infty) = i 2 \pi \wv_3$, $\xv(\infty) =i 2 \frac{2\pi}{N} \rhov$. The grid goes from $-20$ to $20$. The pixel is $h=0.2$. The tolerance is $T = 10^{-6}$.}
    \label{fig:SU(5) k=w3 to k=2}
\end{figure}

In all of these cases, notice that a ``double-dipping" $\sigma$ field has a derivative that changes sign at two different points. The physical implication is that the corresponding electric field reverses its direction at two points. For example, in $ SU(5) $, consider a quark and an anti-quark with weights $ \pm 2 \pi \wv_3 $ that are deconfined between $k=2$ BPS-walls, as shown in Figure~\ref{fig:Deconfinement Reverse Electric}. The dual photons interpolate the vacua in the $z$-direction, and  sufficiently far  from the quarks have no dependence on the $x$-directions. By definition, the derivative of a dual photon in the $z$-direction is equal to the electric field in the $x$-direction (up to a constant):
\begin{equation}
  { d \sigma_i  \over d z} \propto E^i_x
\end{equation}
This effectively means we can rotate the dual photon field by 90 degrees to get the picture of the electric field. 

In between the quarks, the dual photon is a BPS solution that corresponds to Fig.~\ref{fig:SU(5) k=w3 to k=2}. The fact that $ {d \sigma_4 \over d z}$ changes sign at two points means that the electric field lines point toward the positive quark at two vertical positions, even though the majority of the field lines point away from it. This is the case for $E^4_x$, since $\sigma_4$ has the ``double dipping" shape.
This is unusual because normally, $ {d \sigma_i \over d z}$ does not change sign and all of the electric field lines point away from the positive quark and toward the negative quark.

\begin{figure}[ht]
    \includegraphics[scale=1.5]{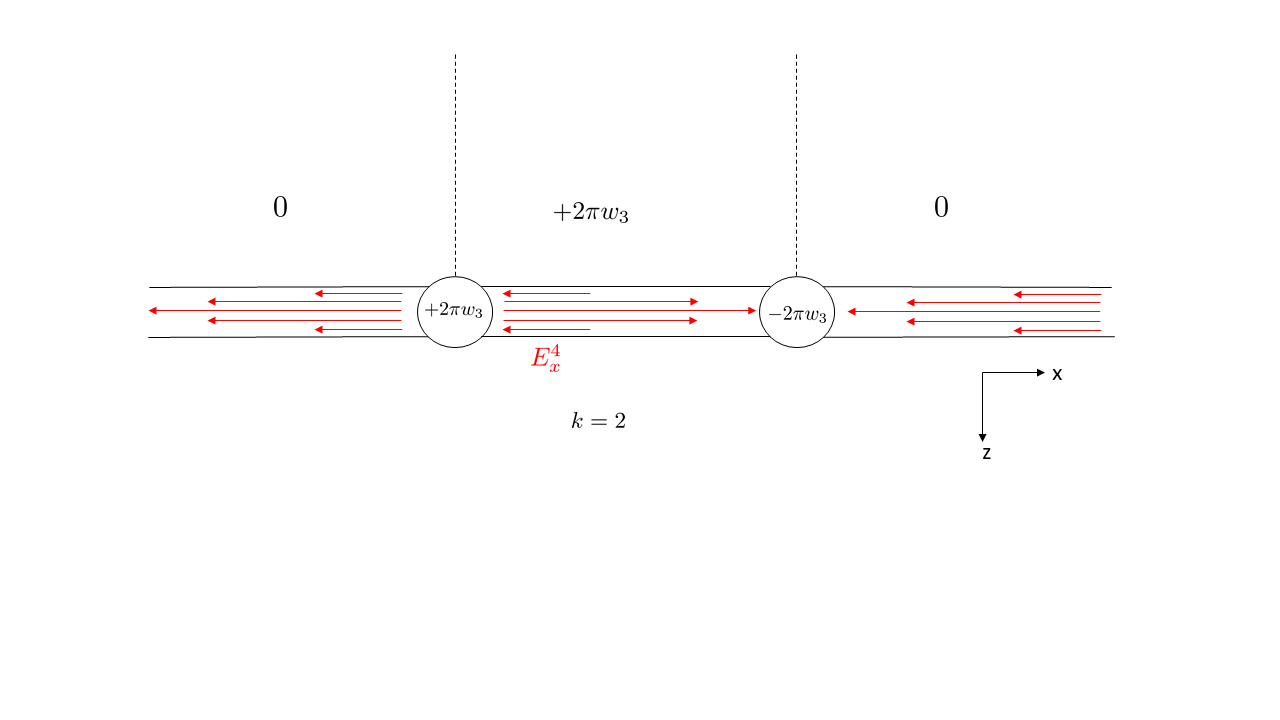}
    \caption{A quark-anti-quark pair with weights $\pm 2 \pi \wv_3$ deconfined on a $k=2$ BPS-wall in $SU(5)$. Only the fourth component of the electric field is plotted (in red). Note that in between the quarks, some of the field lines point away from the negative quark and toward the positive quark. This is a result of the ``double dipping" shape of the $\sigma_4$ solution. The field lines on the two sides do not have this property because their $\sigma_4$ solutions do not have this shape (not shown).}
    \label{fig:Deconfinement Reverse Electric}
\end{figure}

\subsubsection{``Magnetless" Solitons}
It can be shown \cite{Anber:2015kea} that in $SU(2)$, the BPS soliton has no magnetic field along its worldvolume, i.e. $\phiv = 0$. This is only a special case. For higher $N$, $\phiv$ is non-trivial in general. We shall call solutions where $\vec\phi \equiv 0$ along the DW profile ``magnetless."

However, numerical results show that for $N=4$, all of the $k=2$ walls have $\phiv = 0$. An example is shown in Figure \ref{fig:SU(4) k=w2 to k=1}. (Note that the dotted lines for $ {d \phiv \over d z}$ do not get covered up here; this is not a problem since the magnitude is of order $10^{-7}$. The error is just numerical error.) We still know that this is a BPS solution since the numerical energy matches the theoretical energy.
\begin{figure}[ht]
    \centering
    \includegraphics[scale=0.6]{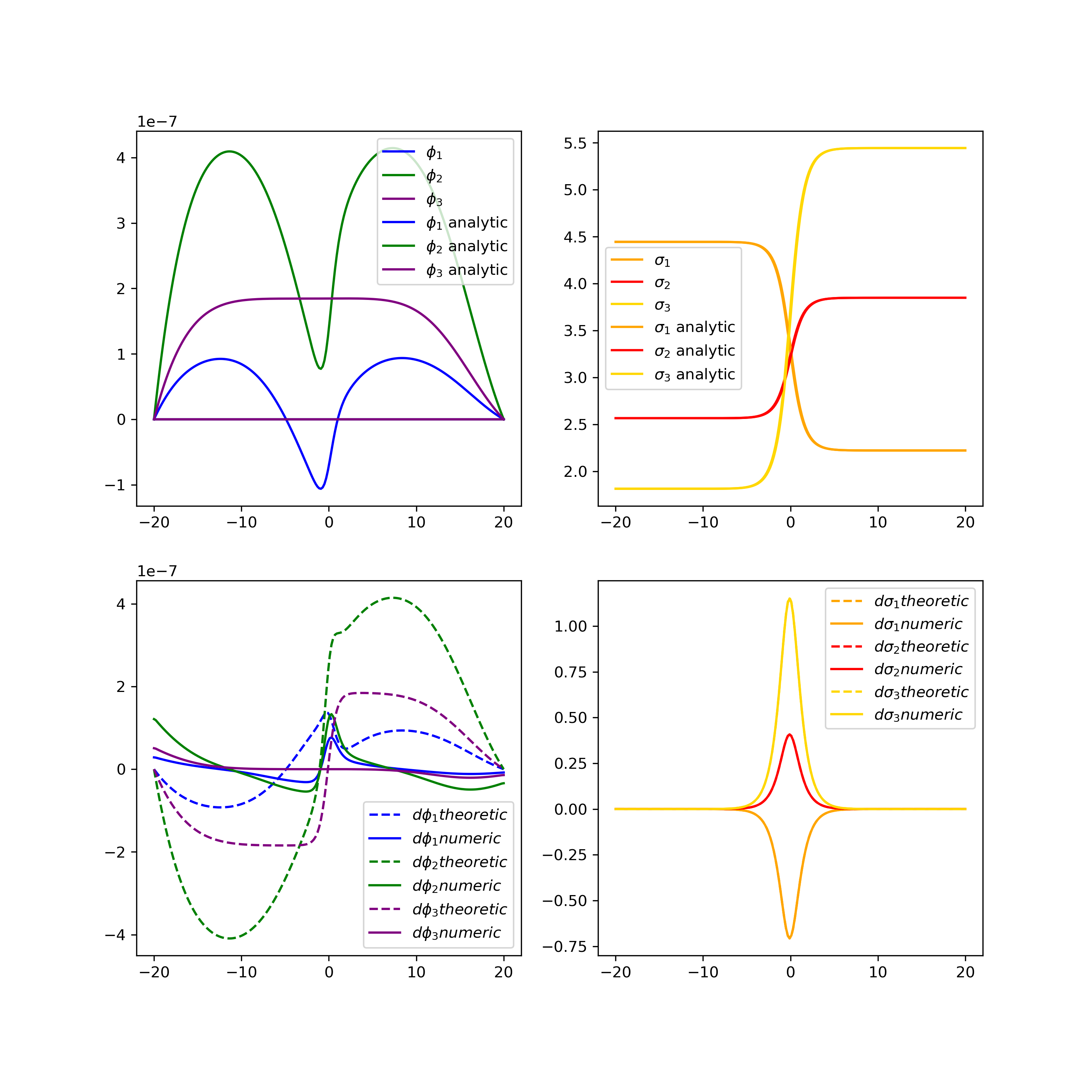}
    \caption{BPS solutions for $SU(4)$, $\xv(-\infty) = i2\pi \wv_1$, $\xv(\infty) = i2\frac{2\pi}{N} \rhov$. The grid goes from $-20$ to $20$. The pixel is $h=0.2$. The tolerance is $T = 10^{-8}$.}
    \label{fig:SU(4) k=w2 to k=1}
\end{figure}
The fact that the numerical solution for $\phi$ is so small signals that the true solution may be identically zero. Indeed, taking the hint from the numerical result, we find the analytic solutions for all $SU(4)$ $k=2$ solitons and explicitly show that all of their $\phiv$ are identically zero. The derivation is shown in Appendix \ref{magnetless}.

Furthermore, we found that the fact that both the $k=1$ walls for $SU(2)$ and $k=2$ walls for $SU(4)$ are magnetless is  no coincidence.  We found an argument that shows that only $SU(N)$ theories with even $N$ can admit magnetless solutions, which occur only for  $k=\frac{N}{2}$-walls. Furthermore, we find an explicit formula for all boundary conditions that give rise to magnetless solitons, as well as the their analytic solutions. The detail derivation is given in Appendix \ref{magnetless}. In short, if $N$ is divisible by $4$, then there are $6$ magnetless solitons. If $N$ is not divisible by $4$ (but is even), there are $2$ magnetless solitons.

\section{Domain Walls  with Quarks}
\label{dwbaryons}
Suppose we have a configuration with $n$ quarks, each with charges $\vec{c}_i$  (elements of the weight lattice of $SU(N)$) at positions $\vec{r}_i = (y_i, z_i)$, $i = 1, \ldots, n$. Then the source part of the action in the electric description of the theory is given by
\begin{equation}
    S_{{\rm source}}=- \int dt \sum_{i}\left(\vec{c}_i\right)_a A_{0}^{a}(\vec{r}_i).
\end{equation}
Notice that we can write $A_0^a(\vec{r}_i)$ as an integral from $y=y_i$ to $\pm\infty$, evaluated at $z=z_i$,
\begin{equation}
    A_0^a(\vec{r}_i)=-\int_{y_i}^{\pm\infty} {d y}\partial_y A_0^a(z_i,y)=\int_{y_i}^{\pm\infty} {d y}F_{0y}^a(z_i,y),
\end{equation}
where we have assumed that $A_0^a$ vanishes at infinity. We can then use the duality map (\ref{fieldrelation}) to the magnetic description to get
\begin{equation}
    A_0^a(\vec{r}_i)=-\frac{g^2}{4\pi L}\int_{y_i}^{\pm\infty} {d y}\partial_z\sigma^a(z,y)\mid_{z=z_i}.
\end{equation}
Finally, the source term in the action is
\begin{equation}
    S_{{\rm source}}= \frac{g^2}{4\pi L}\int {d t}\sum_i\left(\vec{c}_i\right)_a\int_{y_i}^{\pm\infty} {d y}\partial_z\sigma^a(z,y)\mid_{z=z_i}.
\end{equation}
Writing this as an integral over all space, we get
\begin{eqnarray}\label{source1}
    S_{\text{source}}&=& \frac{g^2}{4\pi L} \int dt \int dz \int dy \sum_i\left(\vec{c}_i\right)_a\delta(z-z_i)\int_{y_i}^{\pm\infty} {d y'}\delta(y-y')\partial_z\sigma^a(z,y)\\
    &=&- \frac{g^2}{4\pi L}\int {d t}\int {d z}\int {d y}\sum_i\left(\vec{c}_i\right)_a\left[\partial_z\delta(z-z_i)\right]\int_{y_i}^{\pm\infty} {d y'}\delta(y-y')\sigma^a(z,y), \nonumber
\end{eqnarray}
where going from the first line to the second we used integration by parts. 

Then, the full action, recalling that the full field we are describing is a complex field with components $x^a=\phi^a+i\sigma^a$, is given by\footnote{Note that upon going from (\ref{source1}) to (\ref{abc}), we made the  spacetime coordinates $(t,y,z)$ in (\ref{abc})   dimensionless, rescaling them by the dual photon mass $m$ (compared to the coordinates in (\ref{source1}) and earlier in this Section; for brevity, they are still denoted by the same letters, hoping to not  cause undue confusion). Further, we restored the parameters $M$ and $m$ to conform with the SYM Lagrangian as given earlier  (\ref{lagrangian1}).}
\begin{eqnarray}
    S&=& {M \over m}\int dt \int d z \int {d y}\left(\left\lvert\partial_0 x^a\right\rvert^2-\left\lvert\partial_i x^a\right\rvert^2-\frac{1}{4}\left\lvert {d W \over d {x}^a}\right\rvert^{2}-\right. \label{abc} \\
    &&\qquad \qquad \qquad \qquad \left.4\pi\sum_i\left(\vec{c_i}\right)_a\partial_z\delta(z-z_i)\int_{y_i}^{\pm\infty} {d y'}\delta(y-y')\operatorname{Im}\left[x^a(z,y)\right]\right)\nonumber, 
\end{eqnarray} where a sum over $a$ is implicit.
If we look at the static case, the equations of motion are then
\begin{equation}
    \nabla^2 {x^a}=\frac{1}{4} {\partial \over \partial \left(x^*\right)^a}\left\lvert {d W \over d\vec{x}}\right\rvert^2+2\pi i\sum_j\left(\vec{c}_j\right)_a\partial_z\delta(z-z_i)\int_{y_i}^{\pm\infty} {d y'}\delta(y-y').\label{eqn:eom_2d}
\end{equation}
\subsubsection{Numerical Implementation of Dirac Delta Derivative}
In order to implement equation \eqref{eqn:eom_2d} numerically, we must figure out how to implement the derivative of the Dirac delta function numerically. We use the sifting property of the Dirac delta and integration by parts to get the following general property,
\begin{equation}
    \int dx f(x)\partial_x\delta(x-x_i)=-\partial_x f(x)\mid_{x=x_i}.
\end{equation}
Now we want to discretize this. Let $g(x)=\partial_x\delta(x-x_i)$. Suppose we have discretized our space into a grid with spacing $h$, and the points are labelled by $x_k=x_0+kh$. Then, the above relation becomes
\begin{equation}
    \sum_k hf(x_k)g(x_k)=-\frac{1}{h}\left[f(x_i)-f(x_{i-1})\right].
\end{equation}
Since this must hold for all functions $f$, we find that on the left hand side, only terms with $k=i$ and $k=i-1$ contribute, thus we have
\begin{equation}
    h\left[f(x_i)g(x_i)+f(x_{i-1})g(x_{i-1})\right]=-\frac{1}{h}\left[f(x_i)-f(x_{i-1})\right]
\end{equation}
from which we get the following
\begin{equation}
    g(x_k)=\begin{cases}-\frac{1}{h^2} & k=i\\\frac{1}{h^2} & k=i-1\end{cases}=\frac{1}{h^2}\left(\delta_{k,i-1}-\delta_{k,i}\right).
\end{equation}
Notice that we used the one sided derivative approaching from the left, we could have used the one sided derivative approaching from the right, and we would obtain a similar result, it is just a matter of convention. Using the two sided derivative would result in having the ``jump" spread across three points instead of two, and thus would not be ideal, especially with a coarse grid.

\subsubsection{Relaxation Method}
In two dimensions, one can show that using finite differences the Laplacian of a function $f$ at point $(z_i,y_i)$ is given by
\begin{equation}
    \nabla^2 {f(z_i,y_i)}=\frac{f(z_{i+1},y_i)+f(z_i,y_{i+1})-4f(z_i,y_i)+f(z_{i-1},y_i)+f(z_i,y_{i-1})}{h^2}+\mathcal{O}(h^2).
\end{equation}
We can then rearrange this to find an expression for $f(z_i,y_i)$,
\begin{equation}
    f(z_i,y_i)=\frac{1}{4}\left[f(z_{i+1},y_i)+f(z_i,y_{i+1})+f(z_{i-1},y_i)+f(z_i,y_{i-1})-h^2\nabla^2 {f(z_i,y_i)}\right]+\mathcal{O}(h^4).
\end{equation}
Similar to the 1D case, the relaxation method in two dimensions starts with a grid of points with some boundary conditions, and updates each point iteratively until the solution converges. If $f^k(z_i,y_i)$ is the value of the solution at the $k^{th}$ iteration, then using Gauss-Siedel relaxation it will be updated to be
\begin{equation}
    f^{k+1}(z_i,y_i)=\frac{1}{4}\left[f^k(z_{i+1},y_i)+f^k(z_i,y_{i+1})+f^{k+1}(z_{i-1},y_i)+f^{k+1}(z_i,y_{i-1})-h^2\nabla^2{f^k(z_i,y_i)}\right],\label{eqn:relax_2d}
\end{equation}
where we can see the points that come before $(z_i,y_i)$ have already been updated and are being used. The Laplacian is not the one calculated from finite differences, but some computable function, which for us is given by the discretized version of equation \eqref{eqn:eom_2d}.

\subsection{Solving for Quark Deconfinement on Domain Walls in $\mathbf{SU(2)}$}
\paragraph{}
Here we focus on $SU(2)$ with two opposite charges, $c_1=1/\sqrt{2}$ and $c_2=-1/\sqrt{2}$ (the weights of the fundamental representation in our normalization), located at $z=\pm r$ and $y=0$. The superpotential is given by $W=e^{\sqrt{2}x}+e^{-\sqrt{2}x}$. Thus the equations of motion are
\begin{equation}
    \nabla^2 {x}=\sqrt{2}\left(\sinh\left(2\sqrt{2}\phi\right)+i\sin\left(2\sqrt{2}\sigma\right)\right)+\sqrt{2}\pi i\int_{0}^{\pm\infty} {d y'}\delta(y-y')\left[\partial_z\delta(z+r)-\partial_z\delta(z-r)\right].
\end{equation}
We notice that we can split this into two equations, one for $\phi$ and one for $\sigma$. Doing so, we see that the equation for $\phi$, $\nabla^2{\phi}=\sqrt{2}\sinh\left(2\sqrt{2}\phi\right)$, is consistent if we set $\phi=0$, so we do this and focus only on $\sigma$. Thus, the equation of interest is the following
\begin{equation}
    \nabla^2{\sigma}=\sqrt{2}\sin\left(2\sqrt{2}\sigma\right)+\sqrt{2}\pi \int_{0}^{\pm\infty} {d y'}\delta(y-y')\left[\partial_z\delta(z+r)-\partial_z\delta(z-r)\right].
\end{equation}
Notice that we left the upper limit in the integral over $y$ as $\pm\infty$, we did this to have a more general result, but now we will choose the positive sign. The sign we choose simply informs us on where there will be a jump in $\sigma$, and should align with our boundary conditions, discussed next.

With a box of size $L\times H$ (i.e., with $z$ ranging from $-L/2$ to $L/2$, $y$ from $-H/2$ to $H/2$), we set our boundary conditions to be
\begin{align}
    \sigma(z,-H/2)&=\pi\rho=\frac{\pi}{\sqrt{2}}
    \end{align}
    \begin{align}
    \sigma(z,H/2)&=\begin{cases}
    0 & -L/2\leq z< -r\\
    2\pi w_1=\sqrt{2}\pi & -r\leq z<r\\
    0 & r\leq z< L/2\end{cases}\\
    \sigma(-L/2,y)&=\sigma(L/2,y).
\end{align}
Notice that the last boundary condition simply specifies periodic boundary conditions in $z$. Thus, after setting the first two boundary conditions, we update starting from $(-L/2,-H/2+h)$ and ending on $(L/2-h,H/2-h)$. The points along $z=L/2$ are updated similarly with equation \eqref{eqn:relax_2d}, where we have identified $\sigma(L/2+h,y)$ with $\sigma(-L/2,y)$.
 
We expect the solution to look something like the 1D BPS soliton for $SU(2)$ along the $y$ direction, interpolating between the boundaries at $y=\pm H/2$. Thus, we choose our initial conditions to be just that. Since we know the exact solution for the 1D equations for $SU(2)$, $\sigma(y)=\pm\sqrt{2}\arctan\left(e^{-2y}\right)+\sigma(\infty)$, where the sign of the first term is determined by whether $\sigma(-\infty)$ is greater than $\sigma(\infty)$ and we have a minus sign in the argument of the exponential because we have chosen to put the $k=1$ vacuum at $-\infty$ and the $k=0$ vacuum at $+\infty$. Further, at $z=\pm r$ we expect a more sudden jump in $\sigma$, so we set $\sigma$ to initially be a step function at those points. Thus the initial grid is specified by
\begin{align}
    \sigma^0(z_i,y_i)=\begin{cases}
    \sqrt{2}\arctan\left(e^{-2y_i}\right) & z_i < -r-h\\
    \pi/\sqrt{2} & z_i = -r-h, y_i<0\\
    0 & z_i=-r-h, y_i\geq0\\
    \pi/\sqrt{2} & z_i = -r, y_i<0\\
    \sqrt{2}\pi & z_i=-r, y_i\geq0\\
    -\sqrt{2}\arctan\left(e^{-2y_i}\right)+\sqrt{2}\pi & -r < z_i < r\\
    \pi/\sqrt{2} & z_i = r-h, y_i<0\\
    \sqrt{2}\pi & z_i=r-h, y_i\geq0\\
    \pi/\sqrt{2} & z_i = r, y_i<0\\
    0 & z_i=r, y_i\geq0\\
    \sqrt{2}\arctan\left(e^{-2y_i}\right) & z_i > r
    \end{cases},
\end{align}
as shown in Figure \ref{fig:deconfinement_initial}. The final result is shown in Figure \ref{fig:deconfinement_result}. The energy density was already shown in Figure~\ref{fig:02} in the Introduction.

\begin{figure}[h]
    \centering
    \includegraphics[width=0.8\linewidth]{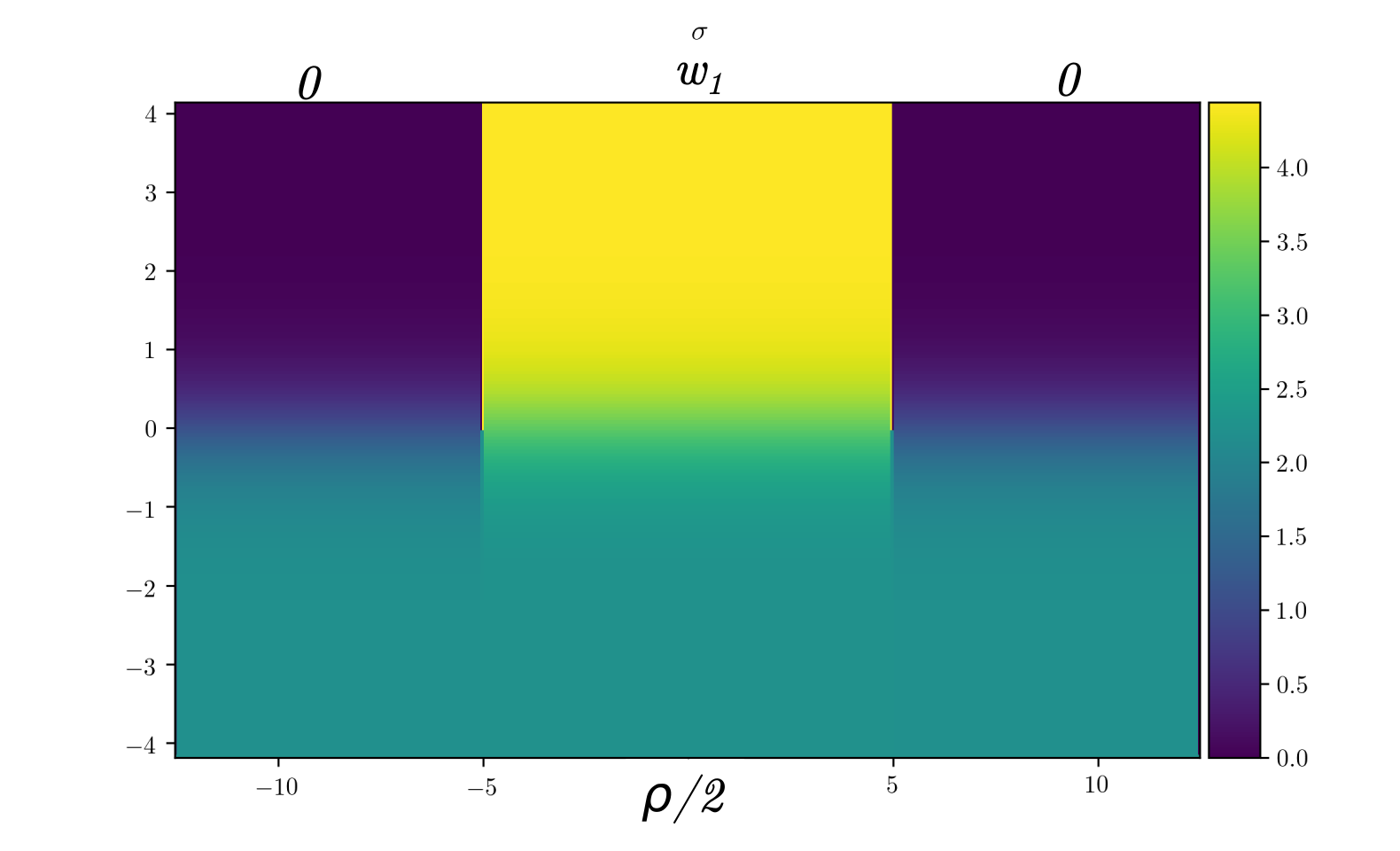}
    \caption{Initial Grid for Solving Deconfinement in $SU(2)$}
    \label{fig:deconfinement_initial}
\end{figure}

\begin{figure}[h]
    \centering
    \includegraphics[width=\linewidth]{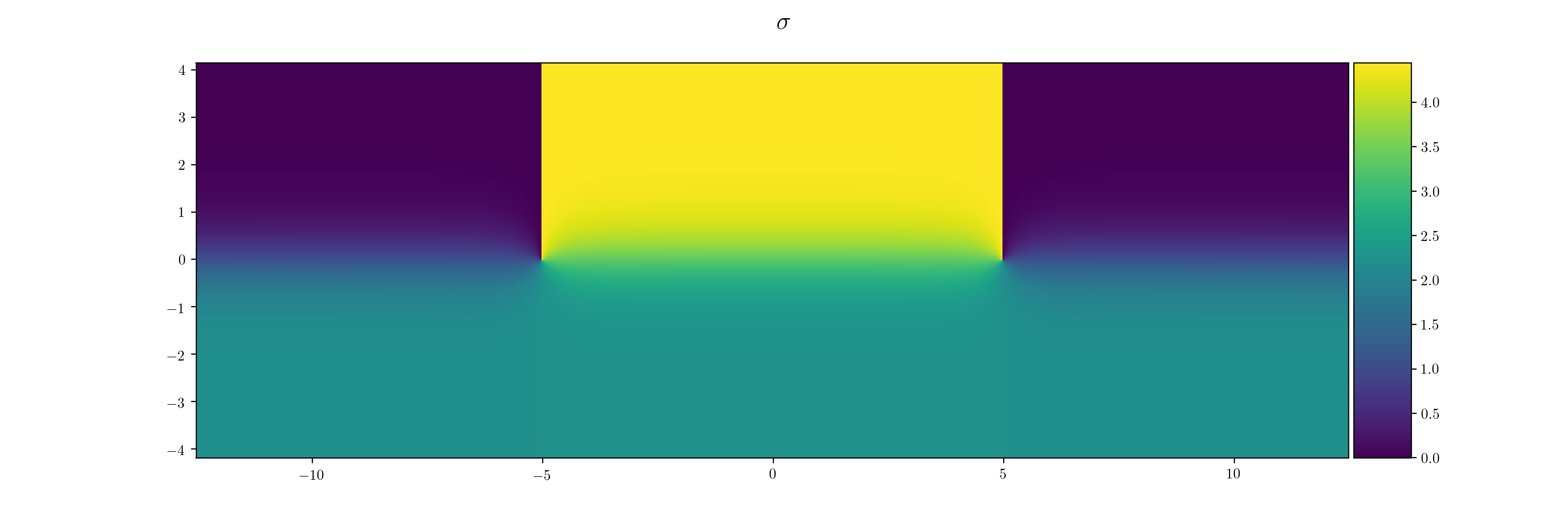}
    \caption{The dual photon solution describing   deconfinement of quarks on $SU(2)$ DWs. Grid is $600\times200$ points, with a lattice spacing of 0.42, making the grid size $25\times8.33$ (ie $L=25, H=8.33$), and the algorithm was run for 1000 iterations. The quarks are spaced 10 units apart, corresponding to 10 dual photon Compton wavelengths.}
    \label{fig:deconfinement_result}
\end{figure}

To show that the two quarks are deconfined, we solve for the fields for different quark separations, and then plot the energy of the resulting configurations as a function of the separation. If the energy is constant as a function of the spacing, then we can say that the quarks are deconfined. To test this, with a grid width of 25 units (i.e., dual-photon Compton wavelengths) we solved for configurations with spacings from 0.5 units apart all the way to 12.5 units (since we have periodic boundary conditions, 12.5 units is the farthest we could get the quarks apart). Figure \ref{fig:deconfinement_energy} shows the resulting energies relative to the energy at 12.5 unit spacing as a function of the relative separation, showing clearly that the quarks are deconfined.

\begin{figure}[h]
    \centering
    \includegraphics[width=0.7\linewidth]{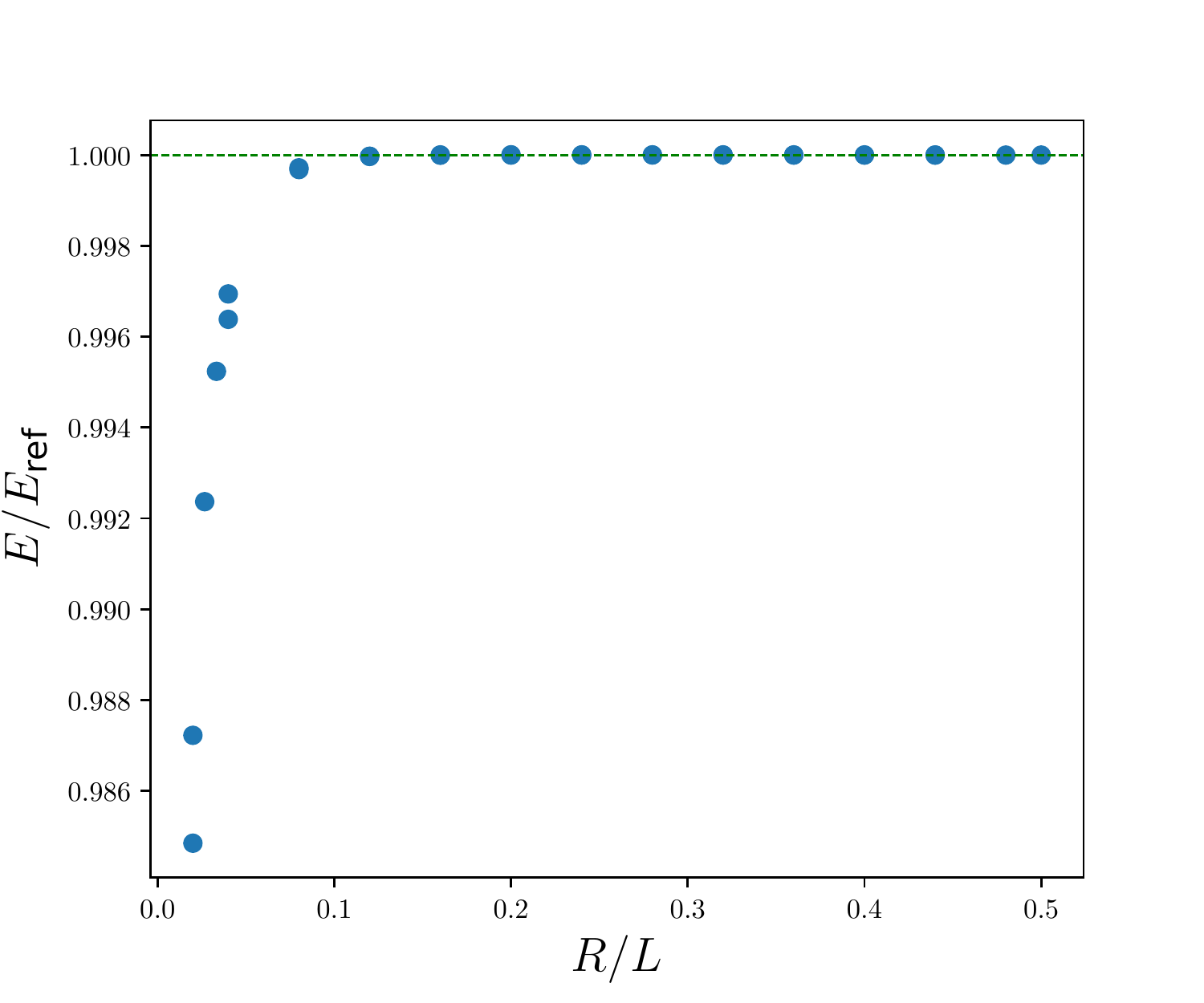}
    \caption{Energy of two-quark configurations in $SU(2)$, relative to the energy at the farthest separation, as a function of the relative separation. We can see that for separations greater than about 10\% of the grid width the energy is constant and thus the quarks are deconfined (this separation corresponds to 2.5 inverse dual photon masses).
    The dashed line at $E/E_{\text{ref}}=1$ is for reference.}
    \label{fig:deconfinement_energy}
\end{figure}

\subsection{The Static Baryon in $\mathbf{SU(3)}$: SYM vs dYM}
\paragraph{}
A baryon in $SU(3)$ consists of three quarks, and we will consider a configuration where the quarks are laid out in an upright equilateral triangle with side length $R$ centered at the origin. Thus, the quark at the top (quark 1) will be at $(z_1,y_1)=(0,R/\sqrt{3})$, the quark on the bottom right (quark 2) will be at $(z_2,y_2)=(R/2,R/2\sqrt{3})$, and the quark on the bottom left (quark 3) will be at $(z_3,y_3)=(-R/2,R/2\sqrt{3})$. We choose quark 1 to have a charge of $\vec{\nu}_1=\vec{w}_1$, quark 2 to have a charge of $\vec{\nu}_2=\vec{w}_2-\vec{w}_1$, and quark 3 to have a charge of $\vec{\nu}_3=-\vec{w}_2$. We put the jump for quark 1 extending vertically upwards from the quark, and the jumps for quarks 2 and 3 extending vertically downwards from the quarks. Thus, the equations of motion are
\begin{align}\label{eqn:eom_baryon}
    \nabla^2{x^a}=\frac{1}{4} {\partial \over \partial \left(x^*\right)^a}\left\lvert {d W \over d \vec{x}}\right\rvert^2+2\pi i&\left[\left(\vec{\nu}_1\right)_a\partial_z\delta(z-z_1)\int_{y_1}^{\infty} {d y'}\delta(y-y')+\right.\\
    &\left.\left[\left(\vec{\nu}_2\right)_a\partial_z\delta(z-z_2)+\left(\vec{\nu}_3\right)_a\partial_z\delta(z-z_3)\right]\int_{y_2=y_3}^{-\infty} {dy'}\delta(y-y')\right].\nonumber
\end{align}

\begin{figure}[h]
    \centering
    \includegraphics[width=\linewidth]{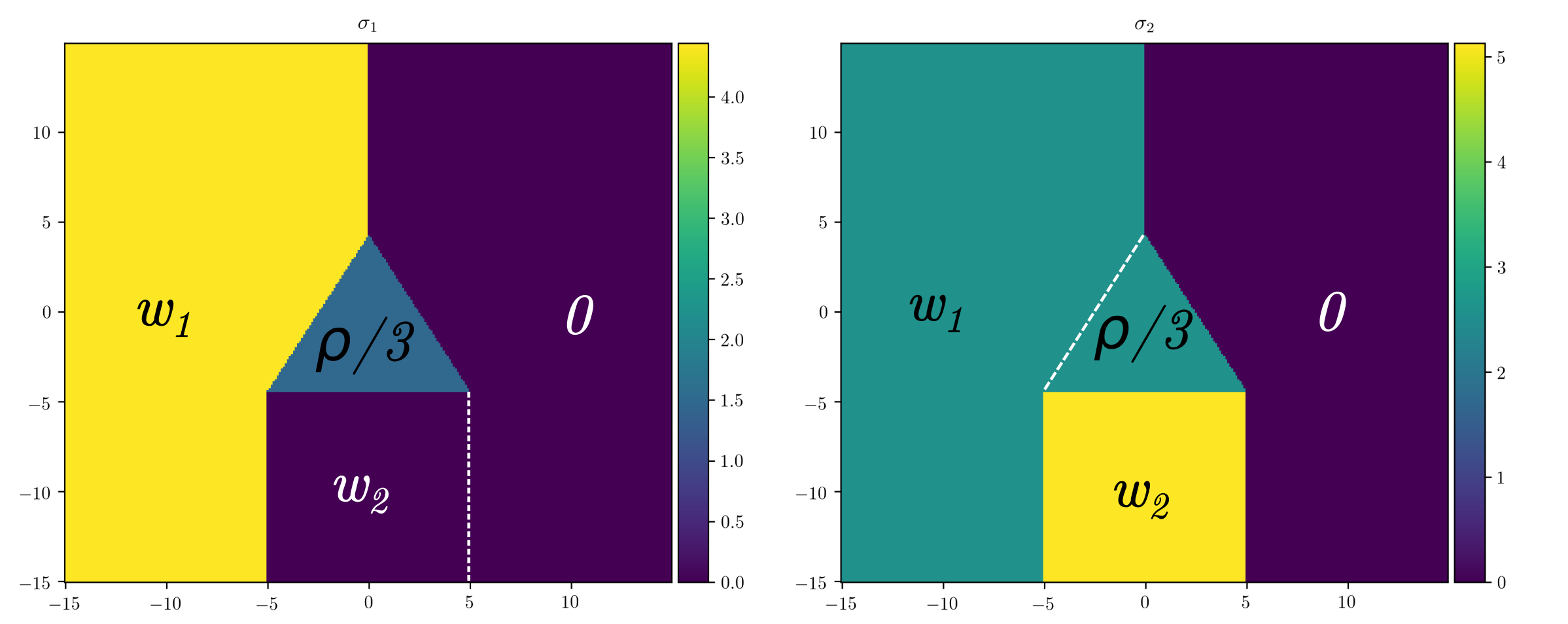}
    \caption{Initial conditions for $SU(3)$ baryon in supersymmetric Yang-Mills theory. Dashed lines simply show the boundaries between regions when it is unclear.}
    \label{fig:SU3_Baryon_initial}
\end{figure}

\begin{figure}[h]
    \centering
    \includegraphics[width=\linewidth]{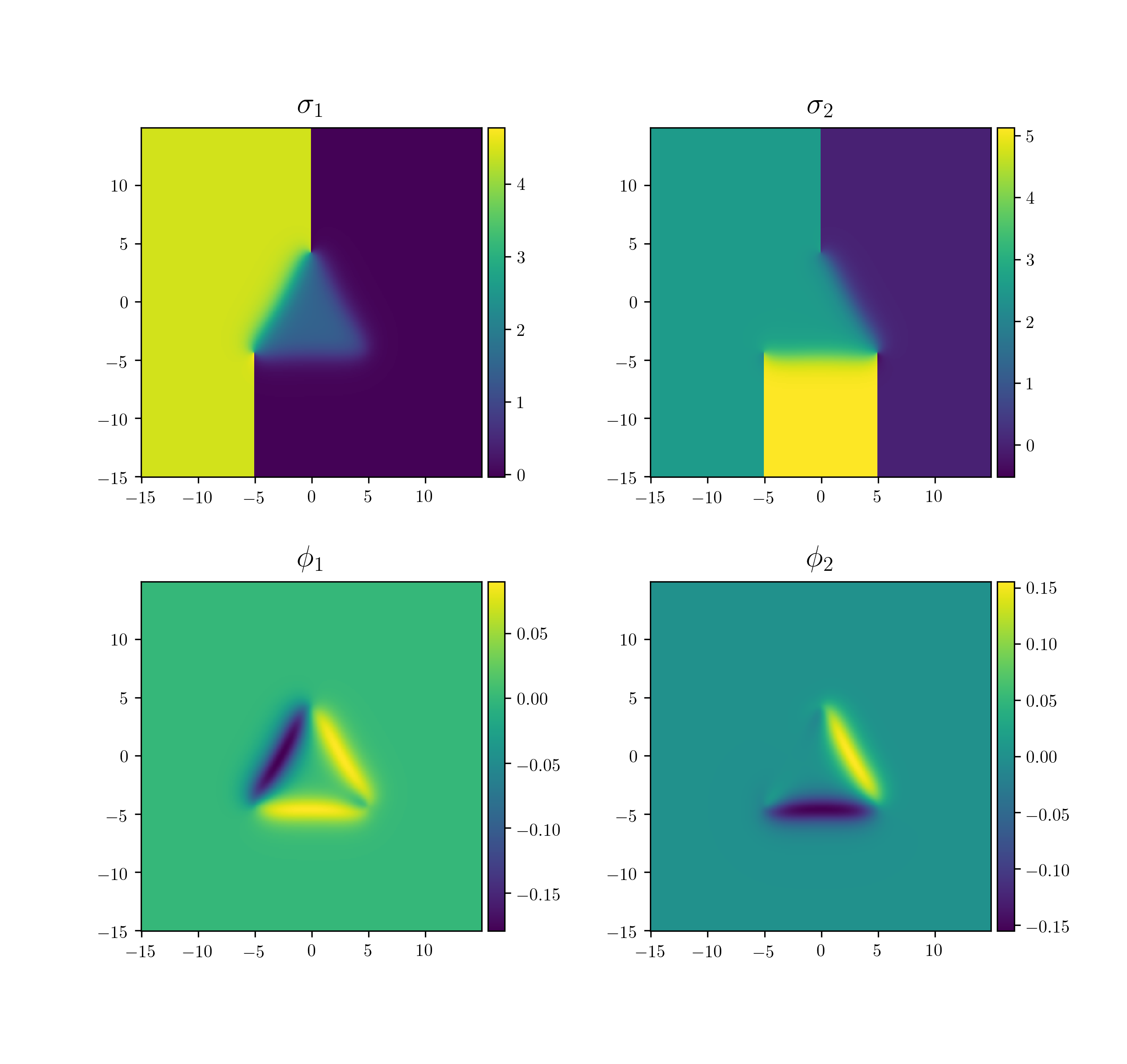}
    \caption{Results for $SU(3)$ baryon in supersymmetric Yang-Mills theory, where quarks are spaced 10 units apart. The grid is $300\times300$ points, with a grid spacing of $0.1$ so that $H=30=L$. The algorithm was run for 1000 iterations.}
    \label{fig:SU3_Baryon_result}
\end{figure}

We set the boundary conditions for $\sigma$ to be as follows: the boundary to the left of quarks 1 and 3 is $2\pi\vec{w}_1$, the boundary to the right of quarks 1 and 2 is $\vec{0}$, and the boundary below and between quarks 2 and 3 is $2\pi \vec{w}_2$. Written as an equation, with a grid of size $L\times H$, the boundaries are specified as
\begin{align}
    \sigmav(-L/2,y)&=2\pi\vec{w}_1 
    \end{align}
    \begin{align}
    \sigmav(z,H/2)&=\begin{cases}2\pi\vec{w}_1 & z<0 \\ \vec{0} & z\geq 0\end{cases} 
    \end{align}
    \begin{align}
    \sigmav(L/2,y)&=\vec{0}\end{align}
    \begin{align}
    \sigmav(z,-H/2)&=\begin{cases}
    \vec{0} & z>\frac{R}{2} \\
    2\pi\vec{w}_2 & -\frac{R}{2}\leq z\leq \frac{R}{2} \\
    2\pi\vec{w}_1 & z<\frac{R}{2}
    \end{cases},
\end{align}
with $\phiv=\vec{0}$ everywhere.

\begin{figure}[h]
    \centering
    \includegraphics[width=\linewidth]{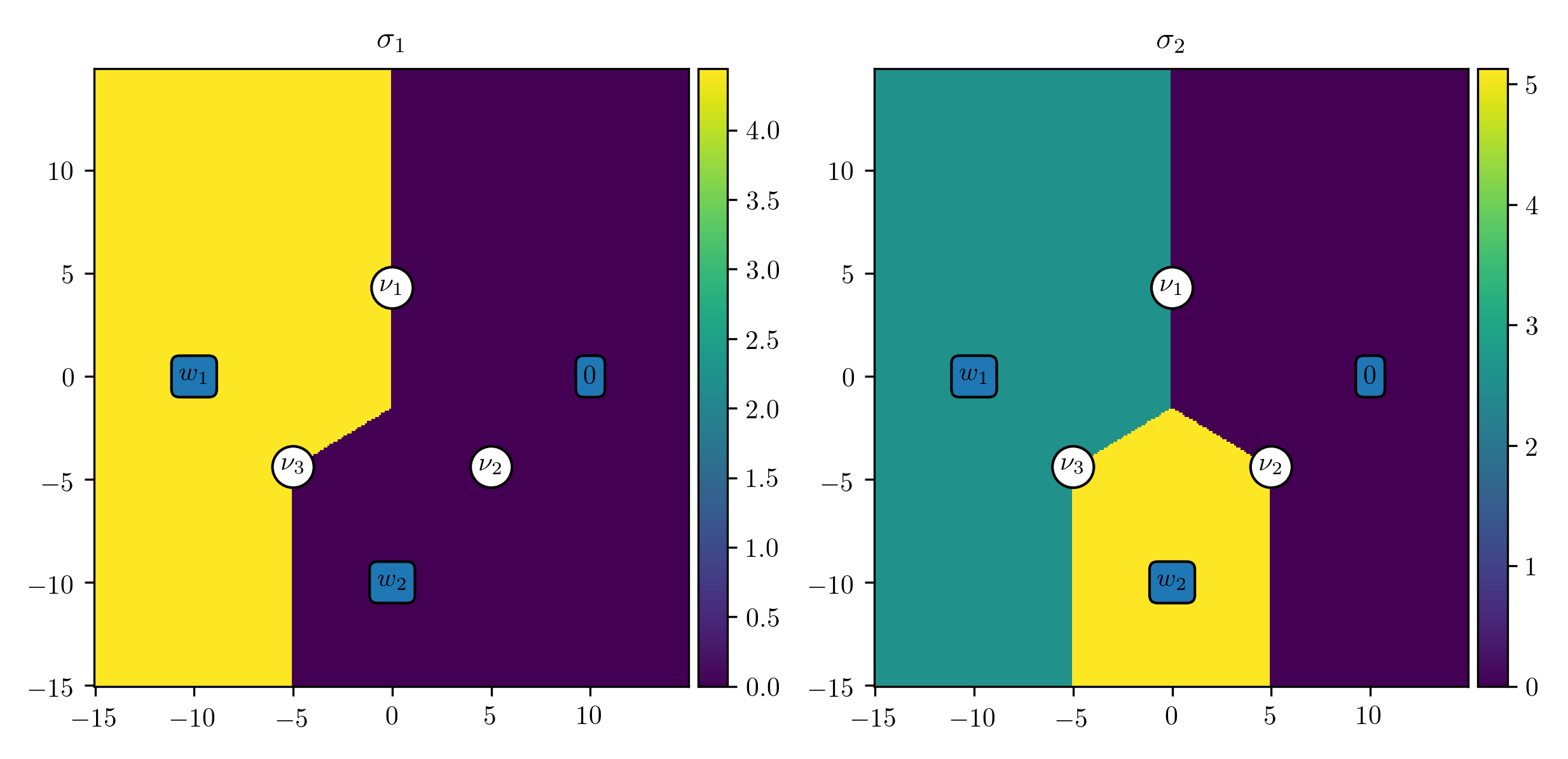}
    \caption{Initial conditions for $SU(3)$ deformed Yang-Mills baryon. For clarity, quarks are drawn as white circles, labelled by their charges, while the boundaries are labelled by the text in blue boxes.}
    \label{fig:SU3_Baryon_dYM_initial}
\end{figure}

\begin{figure}[h]
    \centering
    \includegraphics[width=\linewidth]{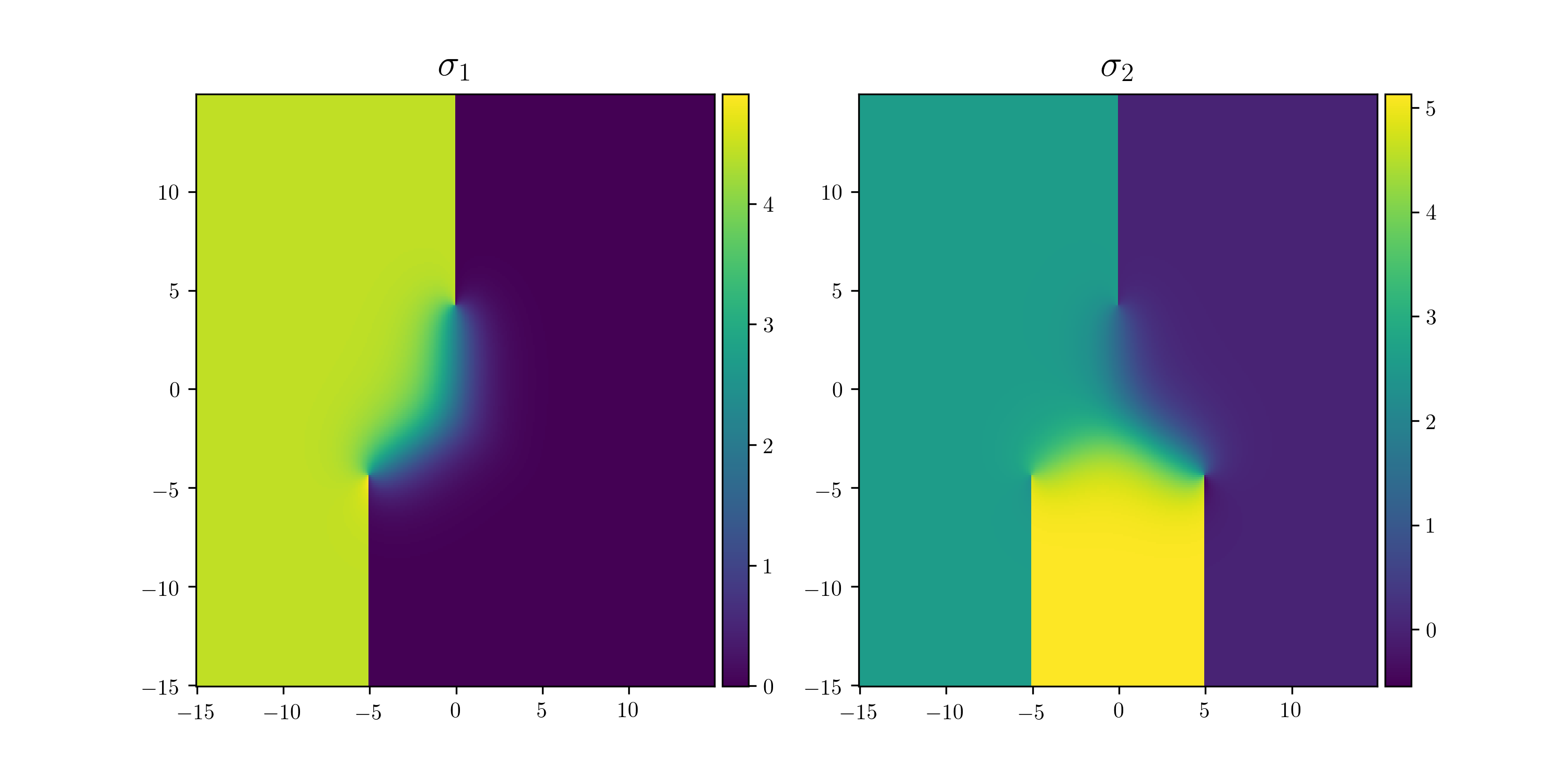}
    \caption{Results for $SU(3)$ baryon in deformed Yang-Mills theory, with quarks spaced 10 units apart. The grid is $300\times300$ points, with a grid spacing of 0.1, making $H=30=L$. The algorithm was run for 1000 iterations.}
    \label{fig:SU3_Baryon_dYM_result}
\end{figure}

\subsubsection{Supersymmetric Yang-Mills Theory}
The superpotential here is the same as used everywhere else in the paper, the inclusion of supersymmetric in the title is simply to distinguish this from the deformed Yang-Mills baryon discussed later.

We expect that the area inside the triangle formed by the quarks will go to the $k=1$ vacuum, $\sigmav=2\pi\vec{\rho}/3$, so we set the initial conditions such that inside the triangle $\sigmav=2\pi\vec{\rho}/3$. Outside the triangle, we set the initial condition to match the corresponding boundary condition. As with the boundary conditions, we set $\phiv=\vec{0}$ everywhere initially. To get a better idea of the initial conditions, see Figure \ref{fig:SU3_Baryon_initial}. See Figure \ref{fig:SU3_Baryon_result} for the final result.

\subsubsection{Deformed Yang-Mills Theory}
\paragraph{}
This nonsupersymmetric theory \cite{Unsal:2008ch} is not described by a superpotential, but we can use the  equation of motion of SYM,  (\ref{eqn:eom_2d}), simply by replacing $\left\lvert {d W_{ } \over d \vec{x}}\right\rvert^{2}$ there by 
\begin{equation}
    \left\lvert {d W_{ {\rm dYM}} \over d \vec{x}}\right\rvert^{2}=\sum_{a=1}^{N}\left(1-\cos\left(\alphav_a\cdot\sigmav\right)\right).
\end{equation}
Notice that deformed Yang-Mills has no dependence on $\phiv$, that is, we ignore $\phiv$. The boundary conditions are the same as with supersymmetric Yang-Mills, and the equations of motion are still given by equation \eqref{eqn:eom_baryon} with the above replacement.
 
We expect the three quarks to form domain walls in a ``Mercedes star" configuration, so we set the initial conditions to reflect that: we use the boundary values to ``fill" in up to the closest side of the star. See Figure \ref{fig:SU3_Baryon_dYM_initial} for clarity. Finally, see Figure \ref{fig:SU3_Baryon_dYM_result} for the final result as well as Figure \ref{fig:03} from the Introduction.

\bigskip

{\flushleft{\bf Acknowledgements:}} We thank Mohamed Anber and Francesco Benini for discussions. This work is supported by an NSERC Discovery Grant.

\appendix

\section{Derivation of results regarding magnetless solitons}
\label{magnetless}
\begin{singlespace}
  We call a soliton that has no magnetic field a magnetless soliton. It is a special kind of BPS soliton. The following is a summary of our study of this kind of solutions:
  \begin{enumerate}
\item We first show that a magnetless soliton is only possible for even $N$ and for $k=\frac{N}{2}$ walls. This basic result will be assumed in all later parts.
\item Then we prove a special case of the $ {N \choose k} = {N-1 \choose k-1} + {N-1 \choose k}$ hypothesis, applied to magnetless solitons. This is included because this result is needed for later parts.
\item Next, we define a sequence of $N$ numbers called the ``d-sequence" that completely characterize a given magnetless soliton. We prove a series of four interrelated lemmas about the d-sequence.
Finally, we use the four lemmas to prove the key theorem, which shows that there are only at most 6 possible boundary conditions that  can give rise to magnetless solitons, and derive which 6 boundary conditions these are.
\end{enumerate}
  {  
    \begin{theorem} \label{thm:N/2}
    A magnetless soliton can only occur for even $N$ and for $k=\dfrac{N}{2}$ wall.
\end{theorem}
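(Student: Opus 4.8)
The plan is to derive, purely from the BPS equation, a rigid geometric constraint on the functions $X_a(z):=e^{\vec\alpha_a\cdot\vec x(z)}$ and to show that it is incompatible with a nonconstant wall unless $N=2k$. First I would use the normalization noted above to take $\vec x(-\infty)=0$ and $\vec x(\infty)=ik\tfrac{2\pi}{N}\vec\rho$, so $W(-\infty)=N$, $W(\infty)=Ne^{2\pi ik/N}$, and $\alpha=ie^{i\pi k/N}$. Since $\tfrac{dW}{dz}=\sum_j\partial_j W\,\tfrac{dx^j}{dz}=\tfrac{\alpha}{2}\sum_j|\partial_j W|^2$ is a nonnegative multiple of $\alpha$, one has $W(z)=N+\alpha\,t(z)$ with $t$ nondecreasing, $t(-\infty)=0$ and $t(\infty)=|W(\infty)-W(-\infty)|=2N\sin(\pi k/N)>0$ (using $1\le k\le N-1$). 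In particular $t$, hence $W$, is genuinely nonconstant along any wall; this monotonicity is the one dynamical input beyond pure algebra.

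Next I would feed the magnetless hypothesis $\vec x=i\vec\sigma$ into the BPS equation $\tfrac{dx^j}{dz}=\tfrac{\alpha}{2}\,\overline{\partial_j W}$. Its left side is purely imaginary, which forces $\partial_j W\in i\alpha\,\mathbb{R}$ for every $j$. Rewriting in the weight basis $u_b:=\vec\nu_b\cdot\vec x$ — the $\vec\nu_b$ being the weights of the fundamental, so that $\vec\alpha_a=\vec\nu_a-\vec\nu_{a+1}$ cyclically and $\partial W/\partial u_b=X_b-X_{b-1}$ — this becomes $X_b-X_{b-1}\in i\alpha\,\mathbb{R}$ for all $b$ (indices read mod $N$). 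Hence all the $X_a/\alpha$ share a common real part $\lambda(z)$; and since magnetless also gives $|X_a|=1$, each $X_a/\alpha$ equals $\lambda\pm i\mu$ with $\mu=\sqrt{1-\lambda^2}\ge 0$ (geometrically: the $X_a$ lie on a line of direction $i\alpha$ and on the unit circle, so they take at most two values). Writing $m(z)$ for the number of indices with $X_a/\alpha=\lambda+i\mu$ and using $W/\alpha=\sum_a X_a/\alpha=N\bar\alpha+t$, I would compare real and imaginary parts to get $N\lambda=t-N\sin(\pi k/N)$ and $(2m-N)\mu=-N\cos(\pi k/N)$.

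The first identity shows $\lambda(z)=t(z)/N-\sin(\pi k/N)$ sweeps the nondegenerate interval $[-\sin(\pi k/N),\sin(\pi k/N)]$, so $\mu^2=1-\lambda^2$ is a genuinely nonconstant function of $z$. Now suppose $N\ne 2k$, so $\cos(\pi k/N)\ne 0$: the right side of the second identity is a nonzero constant, which first forces $\mu(z)\ne 0$ everywhere (so $\lambda\pm i\mu$ are distinct, $m(z)$ is a continuous integer and therefore constant), and then forces $\mu(z)=-N\cos(\pi k/N)/(2m-N)$ to be constant — contradicting the previous sentence. Hence $N=2k$, i.e. $N$ is even and $k=N/2$. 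The step I expect to require the most care in a fully rigorous writeup is the equivalence ``$\partial_j W\in i\alpha\,\mathbb{R}\ \forall j$'' $\Longleftrightarrow$ ``$X_b-X_{b-1}\in i\alpha\,\mathbb{R}\ \forall b$'': because the $u_b$ obey $\sum_b u_b=0$, the covector $(X_b-X_{b-1})_b$ is only determined on the hyperplane $\{\sum_b u_b=0\}$, and one must check that the natural representative — the one with $\sum_b(X_b-X_{b-1})=0$, which is automatic here since $\sum_a\vec\alpha_a=0$ — is precisely the one constrained to lie in $i\alpha\,\mathbb{R}^N$. The rest is plane geometry plus the monotonicity of $W$ along a BPS trajectory.
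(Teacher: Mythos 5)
Your proof is correct, and its second half takes a genuinely different route from the paper's. Both arguments open identically: setting $\vec\phi\equiv 0$ in the real part of the BPS equation gives $\sum_{a=1}^{N}\mathrm{Re}\!\left(X_a/\alpha\right)\vec\alpha_a=0$, which, by linear independence of the simple roots together with $\sum_a\vec\alpha_a=0$, forces all $X_a/\alpha$ to share a common real part --- this is precisely the paper's condition $\cos(\theta-\vec\alpha_a\cdot\vec\sigma)=\cos(\theta-\vec\alpha_N\cdot\vec\sigma)$. From there the paper resolves the cosine equality into a sign dichotomy $d_a=\pm1$, shows that if some $d_a=-1$ then $\alpha^2=e^{i(\vec\alpha_a+\vec\alpha_N)\cdot\vec\sigma}$ is constant in $z$, and evaluates at $z=\pm\infty$ to force $e^{4\pi ik/N}=1$, excluding the all-$d_a=+1$ case separately. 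You instead combine $|X_a|=1$ with the common real part to pin each $X_a/\alpha$ at one of the two points $\lambda\pm i\mu$ of the unit circle, and play the resulting counting identity $(2m-N)\mu=-N\cos(\pi k/N)$ against the monotone, nondegenerate sweep of $\mathrm{Re}(W/\alpha)=N\lambda$ that follows from the straight-line property of BPS trajectories in the $W$-plane. Your locally-constant integer $m(z)$ (well defined because $\mu\neq 0$ whenever $\cos(\pi k/N)\neq 0$) plays the role of the paper's globally defined $d$-sequence, and the $W$-line monotonicity you invoke is an input the paper only brings in later, in its first lemma on the $d$-sequence. What your version buys is a single uniform contradiction with no case analysis and a clean isolation of the one dynamical ingredient (monotonicity of $t$); what the paper's version buys is that the $d$-sequence machinery it sets up here is reused verbatim to classify and count the magnetless walls in the rest of the appendix. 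The subtlety you flag about the weight basis is real but resolves exactly as you say, because $\sum_b\left(X_b-X_{b-1}\right)=0$ lies in $i\alpha\,\mathbb{R}$; it can also be bypassed entirely by staying in Cartan coordinates as above. Finally, your normalization $\vec x(-\infty)=0$ is harmless for this theorem (though not for the later flux classification) since the argument only uses $W(-\infty)=N$ and $|X_a|=1$, both of which hold for any weight-lattice translate of the origin.
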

\begin{proof}
Without loss of generality, we can assume that a $k$-wall starts from a vacuum equivalent to the origin and ends on one of the other vacua in the unit cell. All other $k$-walls are equivalent to one of these solutions up to a sign and a constant vector. We will assume working with one of these walls from now on. Then the BPS equation is of the form
\begin{equation}\label{dif1}
    \begin{cases}
        \frac{d\vec{x}}{dz} = \frac{\alpha}{2} \sum_{a=1}^{N} e^{\vec{\alpha_a} \cdot \vec{x}^*} \vec{\alpha_a}  \\
        \alpha = e^{i \theta} = \frac{W(\vec{x}(\infty)) - W(\vec{x}(- \infty))}{|W(\vec{x}(\infty)) - W(\vec{x}(- \infty))|} \\
        \vec{x}(- \infty) = \cjwj; \quad c_j = 0,1 \\
        \vec{x}(\infty) = i2\pi \frac{k}{N} \vec{\rho}
    \end{cases}
\end{equation}
Note that the boundary condition is described by $c_a$, a series of $N-1$ numbers that are either zero or one. We will call them the ``c-sequence," as they will be mentioned a lot later. The differential equation (\ref{dif1}) can be rewritten in real- and imaginary-component form as
\begin{equation} \label{eq:first phi equation}
        {d \vec\phi \over d z} = \frac{1}{2} \sumaN e^{\ala \cdot \vec{\phi}} \cos(\theta - \ala \cdot \vs) \ala
~,~~    {d \vec\sigma \over d z} = \dfrac{1}{2} \sumaN e^{\ala \cdot \vec{\phi}} \sin(\theta - \ala \cdot \vs) \ala~.
\end{equation}
A magnetless soliton has $\vp = 0$. Then the $\vp$ equation becomes
$
    0 =  \sumaN  \cos(\theta - \ala \cdot \vs) \ala
$.
Since $ \alN = - \sum_{a=1}^{N-1} \ala$, this implies
\begin{equation}
    0 = \sum_{a=1}^{N-1} \left[ \cos(\theta - \ala \cdot \vs) - \cos(\theta - \alN \cdot \vs) \right] \ala
\end{equation}
Since $\alpha_1, \dots, \alpha_{N-1}$ are linearly independent,
\begin{equation}
    \cos(\theta - \ala \cdot \vs) = \cos(\theta - \alN \cdot \vs); \quad a=1, \dots, N-1
\end{equation}
For each $a$, there are two possible cases:

{\it Case 1: }
\begin{equation} \label{eq:case 1}
\theta - \ala \cdot \vs = 2\pi n_a - (\theta - \alN \cdot \vs)
\end{equation}

{\it Case 2: }
\begin{equation} \label{eq:case 2}
    \theta - \ala \cdot \vs = \theta - \alN \cdot \vs + 2\pi n_a 
\end{equation}
where $n_a$ are integers. We can combine the two cases into a single equation by defining $d_a = \pm 1$. For later reference, we call these numbers of one or minus one the ``d-sequence".
\begin{equation}
    \theta - \ala \cdot \vs = d_a \left( \theta - \alN \cdot \vs \right) + 2\pi n_a
\end{equation}

If $d_a = -1$ for any $a$, pick that $a$. Then we have equation (\ref{eq:case 1}) from {\it Case 1} and the equation reduces to:
\begin{equation} \label{eq:alpha square}
        \alpha^2 = e^{i (\ala + \alN) \cdot \vs}
\end{equation}
Since this is true for all $z$, we substitute $\vs(-\infty) = 2\pi \sum_{j=1}^{N-1} c_j \vec{w}_j$ into equation (\ref{eq:alpha square}). Using the identity $ \ala \cdot \vec{w}_j = \delta_{aj} - \delta_{aN} $, all of the terms in the exponential must be integer multiple of $i 2 \pi$, and so
\begin{equation} \label{eq:sub -infinity}
    \alpha^2 = 1
\end{equation}
Next, substitute instead $\vs(\infty) = 2\pi \dfrac{k}{N} \vec{\rho}$ into equation (\ref{eq:alpha square}), and using the identity $ \ala \cdot \vec{\rho} = 1 - N \delta_{aN}$, we have
\begin{equation} \label{eq:sub +infinity}
    \alpha^2 = \exp \left( i 2\pi \cdot \dfrac{2k}{N} \right)~.
\end{equation}
Equating equation (\ref{eq:sub -infinity}) and (\ref{eq:sub +infinity}), we have that $
     1 = \exp \left[ i 2\pi \cdot \dfrac{2k}{N} \right]$. Since $ 0 < k \leq \dfrac{N}{2} $ and $k$ must be an integer, we must have that $     k = \dfrac{N}{2}
$
and $N$ must be even.

So, we have proven the theorem if at least one $d_a = -1$. The only case left to be considered is if $d_a = 1$ for all $a = 1, \dots, N-1$. Then equation (\ref{eq:case 2}) is true for every $a$, and we can rewrite it as:
\begin{equation} \label{eq:all da=1}
    (\alN - \ala) \cdot \vs = 2 \pi n_a
\end{equation}
Since this is true for all $z$, we substitute $\vs(-\infty) = 2\pi \sum_{j=1}^{N-1} c_j \vec{w}_j$ to get
\begin{equation} \label{eq:sub -infinity for all da=1}
 \left( \sumcj \right) + c_a = -n_a
\end{equation}
If we substitute instead $\vs(\infty) = 2\pi \dfrac{k}{N} \vec{\rho}$ into equation (\ref{eq:all da=1}) and after simplifying, we arrive at \begin{equation} \label{eq:sub +infinity for all da=1}
    k = -n_a
\end{equation}
So it turns out all of the $n_a$ are the same. Equating equation (\ref{eq:sub -infinity for all da=1}) and (\ref{eq:sub +infinity for all da=1}), we have
\begin{equation} \label{eq:k equal sum of ca}
    k =  \left( \sumcj \right) + c_a
\end{equation}
Since this must be true for all $a$, this implies that all the $c_a$ are also the same: $c_1 = c_2 =\dots =c_{N-1}$. If they are all $0$, then that implies $k=0$, which is impossible. If they are all $1$, then that implies $k=N$, which is also impossible. So assuming that $d_a = 1$ for all $a$ leads to contradiction.
\end{proof}

{\flushleft{
It is easy}} to adapt the proof of Theorem \ref{thm:N/2} to prove our $ {N \choose k} = {N-1 \choose k-1} + {N-1 \choose k}$ hypothesis for magnetless solitons.

\begin{theorem} \label{thm:N choose k}
    A magnetless soliton has  boundary conditions of either
    $$ i2\pi (\vec{w}_{a_1} + \dots + \vec{w}_{a_{N/2}}) \to \vec{x}_{N/2} $$
    or
    $$ i2\pi (\vec{w}_{a_1} + \dots + \vec{w}_{a_{N/2 - 1}}) \to \vec{x}_{N/2} $$
\end{theorem}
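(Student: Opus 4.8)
The plan is to recycle the machinery already built in the proof of Theorem~\ref{thm:N/2} --- the $d$-sequence $d_a=\pm1$, the relation $\theta-\vec\alpha_a\cdot\vec\sigma=d_a(\theta-\vec\alpha_N\cdot\vec\sigma)+2\pi n_a$, and the two boundary substitutions --- and simply read off what it forces on the $c$-sequence. By Theorem~\ref{thm:N/2} we may assume $N$ is even, $k=N/2$, and $\vec x(-\infty)=i2\pi\sum_{j=1}^{N-1}c_j\vec w_j$ with $c_j\in\{0,1\}$. Writing $m=\sum_{j=1}^{N-1}c_j$ for the number of fundamental weights that appear, the statement to be proved is exactly $m\in\{N/2,\,N/2-1\}$ (the indices $a_i$ in the theorem being those with $c_{a_i}=1$). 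The case $N=2$ is trivial, since then $m\le N-1=1=N/2$, so assume $N\ge4$ from here on.

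The one genuinely new ingredient I would need is to exclude the possibility $d_a=-1$ for \emph{every} $a=1,\dots,N-1$. If that held, then equation~(\ref{eq:alpha square}) (equivalently {\it Case 1}) applied to each $a$ would make $(\vec\alpha_a+\vec\alpha_N)\cdot\vec\sigma=2\theta-2\pi n_a$ a $z$-independent constant for all $a$. But for $N\ge3$ the $N-1$ vectors $\vec\alpha_1+\vec\alpha_N,\dots,\vec\alpha_{N-1}+\vec\alpha_N$ form a basis of the Cartan subalgebra: using $\vec\alpha_N=-\sum_{b=1}^{N-1}\vec\alpha_b$, a relation $\sum_a\lambda_a(\vec\alpha_a+\vec\alpha_N)=0$ becomes $\sum_a(\lambda_a-\Lambda)\vec\alpha_a=0$ with $\Lambda=\sum_a\lambda_a$, so linear independence of the simple roots forces $\lambda_a=\Lambda$ for all $a$, hence $(N-2)\Lambda=0$, hence all $\lambda_a=0$. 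Consequently $\vec\sigma$ itself would be constant, contradicting that the soliton interpolates between two distinct vacua. Since the proof of Theorem~\ref{thm:N/2} already rules out $d_a=+1$ for all $a$, we conclude that a magnetless soliton must have at least one index with $d_a=+1$.

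Fixing one such index $a$, I would then repeat verbatim the two boundary evaluations from the proof of Theorem~\ref{thm:N/2}. Since $d_a=+1$, {\it Case 2}, i.e.~(\ref{eq:case 2}), gives $(\vec\alpha_N-\vec\alpha_a)\cdot\vec\sigma=2\pi n_a$ for all $z$; substituting $\vec\sigma(-\infty)=2\pi\sum_jc_j\vec w_j$ (using $\vec\alpha_a\cdot\vec w_j=\delta_{aj}$, $\vec\alpha_N\cdot\vec w_j=-1$) reproduces (\ref{eq:sub -infinity for all da=1}) in the form $m+c_a=-n_a$, and substituting $\vec\sigma(\infty)=2\pi\frac{k}{N}\vec\rho$ (using $\vec\alpha_a\cdot\vec\rho=1$, $\vec\alpha_N\cdot\vec\rho=1-N$) reproduces (\ref{eq:sub +infinity for all da=1}), $k=-n_a$; note that both substitutions use only $d_a=+1$ for the single chosen index, not for all of them, so no ``all $d_a=1$'' hypothesis is needed here. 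Equating gives (\ref{eq:k equal sum of ca}) in the form $m+c_a=k=N/2$, and since $c_a\in\{0,1\}$ this forces either $m=N/2$ (with $c_a=0$) or $m=N/2-1$ (with $c_a=1$), which is precisely the claimed dichotomy.

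I expect the only delicate point to be the one already implicit throughout the proof of Theorem~\ref{thm:N/2}: that each $d_a$ and the accompanying integer $n_a$ may be taken constant along the wall, i.e.~that the branch of the identity $\cos(\theta-\vec\alpha_a\cdot\vec\sigma)=\cos(\theta-\vec\alpha_N\cdot\vec\sigma)$ does not switch as $z$ varies (continuity of $\vec\sigma$ handles this away from the isolated values of $z$ where both sides are simultaneously multiples of $\pi$). Granting that, there is no real obstacle: the new content beyond Theorem~\ref{thm:N/2} is just the short linear-independence remark ruling out ``all $d_a=-1$,'' and everything afterwards is the reuse of equations (\ref{eq:case 2})--(\ref{eq:k equal sum of ca}) applied to a single index rather than to all of them.
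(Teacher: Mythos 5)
Your argument is correct, and it reaches the same dichotomy $\sum_j c_j\in\{N/2,\,N/2-1\}$ by the same basic mechanism (fix a distinguished index $a$, use the $z$-independence of the corresponding linear relation on $\vec\sigma$, and evaluate at $z=\pm\infty$), but you make the mirror-image choice of index. The paper picks an $a$ with $d_a=-1$ --- whose existence is free, since the proof of Theorem~\ref{thm:N/2} already showed that ``all $d_a=+1$'' is contradictory --- and evaluates the {\it Case 1} relation $2\theta=2\pi n_a+(\vec\alpha_a+\vec\alpha_N)\cdot\vec\sigma$ at both ends, subtracting to get $c_a-\sum_j c_j=\tfrac{k}{N}(2-N)=1-\tfrac{N}{2}$. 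You instead pick an $a$ with $d_a=+1$ and use the {\it Case 2} relation, which forces you to supply one genuinely new ingredient: the exclusion of ``all $d_a=-1$'' via the observation that $\{\vec\alpha_a+\vec\alpha_N\}_{a=1}^{N-1}$ is a basis for $N\ge3$ (so constancy of all those projections would make $\vec\sigma$ constant). That lemma is correct as stated --- including the $(N-2)\Lambda=0$ step and the separate trivial treatment of $N=2$, where the basis claim fails and indeed the $SU(2)$ wall has $d_1=-1$ --- and it is a fact the paper only obtains later, implicitly, through Lemma~\ref{lm:HMS Algorithm}. So your route costs one extra (easy) lemma that the paper's choice of sign avoids, but it buys the sharper statement that for $N\ge4$ both signs must occur among $d_1,\dots,d_{N-1}$, which is consistent with and slightly foreshadows the constraint $\sum_a d_a=0$ of Lemma~\ref{lm:d-sequence 3 conditions}. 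The branch-constancy caveat you flag is indeed shared with the paper's own proof and is not an additional gap on your side.
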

\begin{proof}
    In the proof of Theorem \ref{thm:N/2}, we have shown that for a magnetless soliton, we can always pick an $a=1,\dots,N-1$ such that $d_a = -1$ and 
    $ 2\theta = 2 \pi n_a + (\ala + \alN) \cdot \vs  $.
    Evaluating at $\pm \infty$ and subtracting the equations, we have
$
        0 = (\ala + \alN) \cdot (\vs(\infty) - \vs(-\infty) ).
$
    After simplifying, we have    $
        c_a -\sum_{j=1}^{N-1} c_j = \dfrac{k}{N} (2 - N).
$
    Since we know that $k = \dfrac{N}{2}$ by Theorem \ref{thm:N/2}, we conclude that 
$
        c_a -\sum_{j=1}^{N-1} c_j =  1 - \dfrac{N}{2}.$
    If $c_a = 1$, then
$
        \sum_{j=1}^{N-1} c_j = \dfrac{N}{2}
$
and if $c_a = 0$, then
$
        \sum_{j=1}^{N-1} c_j = \dfrac{N}{2} - 1.
$
 Since $\sum_{j=1}^{N-1} c_j$ is just the total number of terms in the boundary condition, we have completed the proof.
\end{proof}

\begin{lemma} \label{lm:d-sequence 3 conditions}
The d-sequence of a magnetless soliton must satisfy three conditions:
$$ \sumaN d_a = 0~, $$
$$d_a = \frac{\ala \cdot \Delta \vs}{ \alN \cdot \Delta \vs}, ~ \Delta \vs \equiv \vs(\infty) - \vs(-\infty)~,$$
$$ d_{N-1} + d_1 = \frac{d_{a+1} + d_{a-1}}{d_a}~, $$
for $a = 1, \dots, N$. The second equation can be taken to be the definition of the d-sequence. In the third equation, we make the identification $0 \equiv N$.
\end{lemma}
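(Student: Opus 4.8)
The strategy is to extract all three identities from the single relation obtained in the proof of Theorem~\ref{thm:N/2}: for a magnetless soliton there are integers $n_a$ and signs $d_a=\pm1$ (with $d_N=1$, $n_N=0$) such that, for all $z$,
\begin{equation*}
\theta-\ala\cdot\vs(z)=d_a\big(\theta-\alN\cdot\vs(z)\big)+2\pi n_a ,\qquad a=1,\dots,N .
\end{equation*}
First I would derive the second (``definitional'') identity by evaluating this relation at $z=\infty$ and $z=-\infty$ and subtracting, which gives $\ala\cdot\Delta\vs=d_a\,(\alN\cdot\Delta\vs)$ with $\Delta\vs\equiv\vs(\infty)-\vs(-\infty)$. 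To divide by $\alN\cdot\Delta\vs$ I must check it is nonzero: with $\alN\cdot\rhov=-(N-1)$, $\alN\cdot\wv_j=-1$, $k=N/2$ and $c_j\in\{0,1\}$, one finds $\alN\cdot\Delta\vs=2\pi(\sumcj-\tfrac{N-1}{2})$, which cannot vanish since $N$ is even and $\sumcj\in\Z$. Hence $d_a=(\ala\cdot\Delta\vs)/(\alN\cdot\Delta\vs)$, as claimed. The first identity then drops out by summing over $a$: $\sumaN d_a=\big(\sumaN\ala\big)\cdot\Delta\vs\,/\,(\alN\cdot\Delta\vs)=0$, because $\sumaN\ala=\sum_{a=1}^{N-1}\ala+\alN=0$.

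For the third identity I would feed the relation above back into the $\vs$-equation of \eqref{eq:first phi equation} with $\vp=0$, namely $\tfrac{d\vs}{dz}=\tfrac12\sumaN\sin(\theta-\ala\cdot\vs)\,\ala$, and take the dot product with $\alb$. Writing $f(z)\equiv\theta-\alN\cdot\vs(z)$, the left-hand side becomes $-d_b f'$, while on the right $\sin(\theta-\ala\cdot\vs)=\sin(d_a f+2\pi n_a)=d_a\sin f$, and the affine $A_{N-1}$ inner products $\ala\cdot\alb=2\delta_{ab}-\delta_{a,b+1}-\delta_{a,b-1}$ (read cyclically, with the affine node $\alN$ adjacent to both $\vec\alpha_1$ and $\vec\alpha_{N-1}$) collapse the sum, so
\begin{equation*}
-d_b f' \;=\; \tfrac12\big(2d_b-d_{b+1}-d_{b-1}\big)\sin f ,\qquad b=1,\dots,N ,
\end{equation*}
with indices mod $N$. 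Dividing by $d_b\neq0$ gives $-f'=\big(1-\tfrac{d_{b+1}+d_{b-1}}{2d_b}\big)\sin f$ for every $b$. Since the soliton connects two distinct vacua, $f$ is non-constant, hence $\sin f\not\equiv0$, and comparing the equations for different $b$ forces $\tfrac{d_{b+1}+d_{b-1}}{d_b}$ to be independent of $b$. Evaluating this common constant at $b=N$, where $d_N=1$ and cyclicity gives $d_{N+1}=d_1$, identifies it as $d_1+d_{N-1}$, which is exactly the asserted relation (the case $a=1$ then reads off with the convention $d_0\equiv d_N$).

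The main obstacle I expect is not conceptual but careful: getting the affine/cyclic root products right at the boundary indices $b\in\{1,N-1,N\}$ (the affine node touches both ends of the $A$-chain), and giving a clean justification that $\sin f\not\equiv0$ — if $f$ were constant, then through the d-sequence relations every $\ala\cdot\vs$ would be constant, forcing $\vs$ itself to be constant, contradicting that it interpolates two different vacua. Once those two points are handled, the three identities follow from the elementary manipulations above.
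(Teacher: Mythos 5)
Your proof is correct, and for the second and third identities it follows essentially the paper's own route: evaluate the relation $\theta-\ala\cdot\vs = d_a(\theta-\alN\cdot\vs)+2\pi n_a$ at $z=\pm\infty$ and subtract to obtain the quotient formula for $d_a$, then feed the relation back into the $\vs$-flow equation and contract with the affine inner products $\ala\cdot\alb=2\delta_{ab}-\delta_{a,b+1}-\delta_{a+1,b}$ to collapse the sum into the three-term recursion. Where you genuinely diverge is the first identity: the paper derives $\sumaN d_a=0$ from the BPS geometry --- the trajectory in the $W$-plane is a straight segment on the real axis, so $\mathrm{Im}\,W=0$, i.e.\ $\sumaN\sin(\theta-\ala\cdot\vs)=0$, and substituting $\sin(\theta-\ala\cdot\vs)=d_a S_N$ and cancelling $S_N$ yields the sum rule --- whereas you observe that it is an immediate corollary of the second identity, since $\sum_{a=1}^{N}\ala=0$. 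Your route is more economical and exposes a logical dependence the paper leaves tacit (condition 1 follows from condition 2 alone); the paper's route instead ties the sum rule directly to the straight-line property of BPS trajectories. Two further points in your favour: you check that $\alN\cdot\Delta\vs=2\pi\bigl(\sumcj-\tfrac{N-1}{2}\bigr)\neq 0$ before dividing by it, and you justify $\sin f\not\equiv 0$ (equivalently, that $\vs$ is non-constant) before matching coefficients across different $b$ --- both steps the paper performs implicitly, e.g.\ when it silently cancels $S_N$.
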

\begin{proof}
As an obvious consequence of Theorem \ref{thm:N/2}, all magnetless solitons are equivalent (up to a sign and a constant) to a solution whose projection into the $W$-plane is a straight line lying on the real axis, going from $W(\xninf) = N$ to $W(\xinf)=-N$. In this case, it is clear that
$
    \alpha = -1.
$
For such a magnetless soliton, the BPS equations  (\ref{eq:first phi equation})  become
\begin{equation} \label{eq:phi}
    0 = \sum_{a=1}^N \ca \ala ~,
  ~~  \frac{d \vs}{dz} = \frac{1}{2} \sum_{a=1}^N \sa \ala ~.
\end{equation}
In addition, the fact that the W-line lies on the real axis is given by the condition
\begin{equation}
    \text{Im}(W(\vec{x}))  = 0  
    \implies \sum_{a=1}^N \sa  = 0 ~. \label{eq:horizontal}
\end{equation}
In the proof of Theorem \ref{thm:N/2}, we saw that Equation (\ref{eq:phi}) implies:
\begin{equation} \label{eq:cos equal}
    \ca = \cN; \quad a = 1, \dots, N-1
\end{equation}
Equations (\ref{eq:horizontal}) and (\ref{eq:cos equal}) together give a constraint on the $d$-sequence. To see this, we first let $\sN = S_N$. For any $a = 1, \dots, N-1$, equation (\ref{eq:cos equal}) implies that
$$\ala \cdot \vs = \alN \cdot \vs + 2\pi n_a
\quad \text{or} \quad 
\ala \cdot \vs = 2\pi n_a - \alN \cdot \vs   $$
$$ \implies \sa = S_N \quad \text{or} \quad \sa= -S_N  $$
\begin{equation} \label{eq:sin(a) sin(N)}
   \implies  \sa = d_a S_N
\end{equation}
where $d_a = \pm 1$ are the d-sequence we defined before. Substituting equation (\ref{eq:sin(a) sin(N)}) into equation (\ref{eq:horizontal}), $S_N$ cancels out and we get our first condition, 
$
    \sumaN d_a = 0,
$
where we have extended the definition of $d_a$ such that
$
    d_N =1
$.

Next, equation (\ref{eq:cos equal}) implies that
\begin{equation} \label{eq: alpha dot sigma}
    \left( \ala- d_a \alN \right) \cdot \vs = 2\pi n_a 
\end{equation}
Since this must be true for all $z$, they must also be true at $\pm \infty$. Let $\Delta \vs = \vs(\infty) - \vs(-\infty)$. Subtracting the equation evaluated at $\pm \infty$, we have
\begin{equation} \label{eq:da formula}
   \left( \ala - d_a \alN \right) \cdot \Delta \vs = 0    \implies d_a = \frac{\ala \cdot \Delta \vs}{ \alN \cdot \Delta \vs}; \quad a = 1, \dots, N
\end{equation}
which yields an explicit formula for $d_a$, which is our second condition.

Next, we differentiate equation (\ref{eq: alpha dot sigma}) by $z$ to get
$$ \left( \ala- d_a \alN \right) \cdot \frac{d \vs}{dz} = 0 $$
Substituting equation (\ref{eq:phi}) and equation (\ref{eq:sin(a) sin(N)}), we get
\begin{equation}
    \sumbN d_b \left[ \ala \cdot \alb - d_a \alN \cdot \alb \right]  = 0
\end{equation}
Applying the identity $\ala \cdot \alb = 2 \delta^{ab} - \delta^{a,b+1} - \delta^{a+1,b}$, where $0 \equiv N$, we have $
    2d_a - d_{a-1} - d_{a+1} - 2d_a d_N + d_a d_{N-1} + d_a d_1 = 0
$. Using $d_N = 1$, this simplifies to the third desired relation:
$
    d_{N-1} + d_1 = \frac{d_{a+1} + d_{a-1}}{d_a}
$, completing the proof.
\end{proof}

\begin{lemma} \label{lm:HMS Algorithm}
There are the three distinct kinds of magnetless solitons, characterized by their d-sequence.
\begin{enumerate}
    \item Alternating: $$d_1 = -1, d_2 = 1, d_3 =-1, \dots, d_{N-1} = -1, d_N = 1$$
    \item Positive Paired: $$ d_1 = 1, d_2 = -1, d_3 = -1, d_4 = 1, d_5 = 1, \dots, d_N = 1 $$
    \item Negative Paired: $$ d_1 = -1, d_2 = -1, d_3 = 1, d_4 = 1, \dots, d_N = 1 $$
\end{enumerate}
\end{lemma}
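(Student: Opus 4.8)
The plan is to use only the three conditions of Lemma~\ref{lm:d-sequence 3 conditions}, together with $d_a\in\{\pm1\}$ and the extended value $d_N=1$. The decisive step is to read the third condition as a linear recurrence with a single constant coefficient: setting $C\equiv d_{N-1}+d_1$, it becomes $d_{a+1}+d_{a-1}=C\,d_a$ for all $a\in\mathbb{Z}/N$. Since $d_1,d_{N-1}\in\{\pm1\}$, only $C\in\{2,0,-2\}$ can occur, and I would handle the three values in turn, using repeatedly the elementary fact that the only pairs $(x,y)\in\{\pm1\}^2$ with $x+y=2$ (resp.\ $x+y=-2$) are $(1,1)$ (resp.\ $(-1,-1)$). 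The second condition of Lemma~\ref{lm:d-sequence 3 conditions} is not needed for the classification itself; it enters only in the subsequent determination of the boundary conditions.

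For $C=2$ the recurrence forces each $d_a$ to equal both of its neighbours, so the sequence is constant; $d_N=1$ then makes it identically $1$, contradicting $\sum_{a=1}^N d_a=0$, so $C=2$ is excluded. For $C=-2$ the same remark forces $d_{a\pm1}=-d_a$, i.e.\ strict alternation; with $N$ even (Theorem~\ref{thm:N/2}) and $d_N=1$ this gives $d_a=(-1)^a$, the Alternating case, and the sum condition then holds automatically. The substantive case is $C=0$: here $d_{a+1}=-d_{a-1}$, hence $d_{a+2}=-d_a$ and $d_{a+4}=d_a$, so $d$ descends to a function on $\mathbb{Z}/\gcd(4,N)$; combined with $d_{a+2}=-d_a$ this is consistent only when $\gcd(4,N)=4$, i.e.\ $4\mid N$. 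In that case the cyclic sequence is determined by $(d_1,d_2)$ via $d_3=-d_1$, $d_4=-d_2$, the requirement $d_N=d_4=1$ fixes $d_2=-1$, and $d_1=\pm1$ remains free; the two choices give exactly the Positive Paired ($d_1=1$) and Negative Paired ($d_1=-1$) sequences, with $\sum_{a=1}^N d_a=0$ holding within each period-$4$ block. Combining the cases yields the three types listed in the statement, with the Paired types arising only when $4\mid N$ and the Alternating type the unique possibility otherwise.

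The one place I expect to need care is the cyclic bookkeeping in the $C=0$ case: verifying that the derived periodicity is genuinely compatible with $d$ being a function on $\mathbb{Z}/N$, extracting the divisibility condition $4\mid N$ cleanly via the $\gcd$ argument, and confirming that when $4\nmid N$ only the Alternating d-sequence survives, so that the enumeration is complete for every even $N$. Everything else reduces to the two-line observations about sums of $\pm1$ and the automatic vanishing of the alternating and period-$4$ partial sums.
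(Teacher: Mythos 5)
Your proof is correct and follows essentially the same route as the paper's: both read the third condition of Lemma~\ref{lm:d-sequence 3 conditions} as the recurrence $d_{a+1}+d_{a-1}=(d_{N-1}+d_1)\,d_a$ and case-split on the constant $d_{N-1}+d_1$, with your $C=0$ case merging the paper's two paired cases and your $C=\pm 2$ cases reproducing its constant-sequence contradiction with $\sum_a d_a=0$ and its alternating solution. The only (harmless) difference is that you also extract the divisibility condition $4\mid N$ inside this lemma, which the paper defers to Lemma~\ref{thm:N mod 4}.
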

\begin{proof}
There are several ways to realize the third condition of Lemma \ref{lm:d-sequence 3 conditions}.

\textbf{Case 1}: $d_{N-1} = 1$ and $d_1 = -1$.
It is clear that the only way this is possible is if
$$ d_0 = d_N = 1, d_1 = -1, d_2 = -1, d_3 = 1, d_4 = 1, \dots, d_{N-1} = 1, d_N = 1. $$

\textbf{Case 2}: $d_{N-1} = -1$ and $d_1 = 1$.
This requires
$$ d_0 = d_N=1, d_1 = 1, d_2 = -1, d_3 = -1, d_4 = 1, \dots, d_{N-1} = -1, d_N = 1.$$

\textbf{Case 3}: $d_{N-1} = 1$ and $d_1 = 1$. Then
$  \frac{d_{a+1} + d_{a-1}}{d_a} =2,  $
which must be true for all $a$. This implies that either
$$ d_{a+1} = 1, d_{a-1} = 1, d_a = 1 ~~
{\rm or}~~
  d_{a+1} = -1, d_{a-1} = -1, d_a = -1,$$
both of which lead to the contradiction that $ \sumaN d_a \neq 0$, forbidden by first condition of Lemma \ref{lm:d-sequence 3 conditions}. So case 3 is impossible.

\textbf{Case 4}: $d_{N-1} = -1$ and $d_1 = -1$.
This implies
$$ \dots, d_{N-2} = 1, d_{N-1} = -1, d_N = 1, d_1 = -1, \dots$$
So only case 1, 2, and 4 are possible,  corresponding to our three types of magnetless solitons.
\end{proof}

One remark is that any kink and anti-kink belong to the same type of magnetless soliton, since the sign of $\Delta \vs$ cancels out in the condition 2 of Lemma \ref{lm:d-sequence 3 conditions}.

\begin{lemma} \label{thm:N mod 4}
The paired magnetless solitons (both positive and negative) can only exist if 
$$N \mod 4 =0$$
while alternating magnetless solitons can exist for all even $N$.
\end{lemma}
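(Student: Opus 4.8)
The plan is to derive the divisibility statement purely from the d-sequence, using the classification of Lemma~\ref{lm:HMS Algorithm} together with the three constraints of Lemma~\ref{lm:d-sequence 3 conditions}; no further analytic input about the BPS profile is needed. Recall that by Theorem~\ref{thm:N/2} a magnetless soliton already forces $N$ even and $k=N/2$, so all the arithmetic below takes place on the cycle of indices modulo $N$ with $N$ even, and $d_N=d_0=1$.

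First I would dispose of the two paired cases together. In both of them the explicit d-sequences of Lemma~\ref{lm:HMS Algorithm} give $d_1+d_{N-1}=0$ (exactly one of $d_1,d_{N-1}$ equals $+1$ and the other $-1$). Substituting this into the third condition of Lemma~\ref{lm:d-sequence 3 conditions}, $d_{N-1}+d_1=(d_{a+1}+d_{a-1})/d_a$, collapses it to $d_{a+1}+d_{a-1}=0$, i.e. $d_{a+2}=-d_a$ for every $a$, with indices read modulo $N$. Iterating this relation $N/2$ times (legitimate since $N$ is even) and using $d_{a+N}=d_a$ yields $d_a=(-1)^{N/2}d_a$; since $d_a=\pm1\neq 0$ this forces $(-1)^{N/2}=1$, hence $N/2$ even, i.e. $4\mid N$. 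For the converse, when $4\mid N$ the length-four block $(+1,+1,-1,-1)$ tiles the $N$-cycle: it satisfies $d_{a+2}=-d_a$, automatically gives $\sum_{a=1}^{N}d_a=0$ (each block sums to zero) and the recursion, so all three conditions of Lemma~\ref{lm:d-sequence 3 conditions} hold and the boundary data singled out in Theorem~\ref{thm:N choose k} is realized by the explicit soliton constructed in the remainder of this appendix.

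Next I would handle the alternating case, where the explicit sequence gives $d_1=d_{N-1}=-1$, so $d_1+d_{N-1}=-2$ and the third condition reads $d_{a+1}+d_{a-1}=-2d_a$. Since each $d_a\in\{\pm1\}$, the only possibility is $d_{a-1}=d_{a+1}=-d_a$, i.e. $d_a=(-1)^a$ once $d_N=d_0=1$ is fixed. This assignment closes up consistently precisely when $N$ is even (so that $(-1)^N=1$), automatically satisfies $\sum_{a=1}^N(-1)^a=0$ and the recursion, and---together with the explicit alternating solution written below, which is essentially the analytic $SU(2)$ wall---shows the alternating magnetless soliton exists for every even $N$ with no extra constraint modulo $4$.

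The only delicate point is the cyclic bookkeeping in the paired case: one must treat $d_{a+2}=-d_a$ as a relation on the index cycle $\mathbb{Z}_N$, so that going once around the cycle multiplies $d_a$ by $(-1)^{N/2}$; this is exactly what obstructs $N\equiv 2\pmod 4$ while being consistent for $N\equiv 0\pmod 4$. Everything else is a direct substitution into Lemmas~\ref{lm:d-sequence 3 conditions} and~\ref{lm:HMS Algorithm}, so I do not expect any real difficulty beyond getting the modular indices right.
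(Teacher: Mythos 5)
Your proof is correct, and it reaches the same period-four obstruction as the paper, but by a cleaner and more uniform route. The paper's own proof simply writes out the positive and negative paired $d$-sequences for $N=6$, observes that they terminate with $d_6=-1$ in contradiction with $d_N=1$, and declares the generalization to all $N\not\equiv 0 \pmod 4$ obvious; it says nothing explicit about why the alternating case survives for every even $N$. You instead extract from both paired cases the single fact $d_1+d_{N-1}=0$, feed it into the third condition of Lemma~\ref{lm:d-sequence 3 conditions} to get the anti-periodicity $d_{a+2}=-d_a$ on the index cycle $\mathbb{Z}_N$, and read off the obstruction as the holonomy condition $(-1)^{N/2}=1$. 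This makes the ``obvious generalization'' rigorous in one line, treats the two paired types simultaneously, and your parallel treatment of the alternating case ($d_1+d_{N-1}=-2$ forcing $d_{a+1}=-d_a$, consistent on the cycle iff $N$ is even) supplies the half of the lemma the paper leaves implicit. One caveat: in your converse direction, verifying that a tiling $d$-sequence satisfies the three conditions of Lemma~\ref{lm:d-sequence 3 conditions} does not by itself establish existence of a soliton, since those conditions are only necessary; you correctly defer actual existence to the explicit solutions constructed at the end of the appendix, which is all the lemma's ``can exist'' clause requires.
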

\begin{proof}
    Notice that if $N=6$, then a positive paired magnetless soliton has
$$ d_1 = 1, d_2 = -1, d_3 = -1, d_4 =1, d_5 =1, d_6 = -1 $$
But this contradicts $d_6 = d_N = 1$. Similarly, a negative paired magnetless soliton has
$$ d_1 = -1, d_2 = -1, d_3 =1, d_4 =1 , d_5 = -1, d_6 = -1 $$
The generalization to $N \mod 4 \neq 0$ is obvious.
\end{proof}

We now combine Lemma \ref{lm:HMS Algorithm} with the proof of Theorem \ref{thm:N choose k} to arrive at a ``dictionary" between the d-sequence and the c-sequence, in the sense that given a certain condition, knowing the value of $d_a$ immediately tells us the value of $c_a$.

\begin{lemma} \label{thm:dictionary}
    Consider a magnetless soliton of the form:
    $$ \cjwj \to \vec{x}_{N/2}. $$
    
    If \; $\sum_{j=1}^{N-1} c_j = \frac{N}{2}$, then 
    $\begin{cases}
    d_a = -1 \implies c_a = 1 \\
    d_a = 1 \implies c_a = 0 \\
    \end{cases}$
    
    If \;$\sum_{j=1}^{N-1} c_j = \frac{N}{2} - 1$, then  
    $\begin{cases}
    d_a = -1 \implies c_a = 0 \\
    d_a = 1 \implies c_a = 1 \\
    \end{cases}$
\end{lemma}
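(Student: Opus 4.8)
The plan is to bypass the case analysis of the d-sequence altogether and read the dictionary directly off the explicit formula for $d_a$ furnished by the second condition of Lemma \ref{lm:d-sequence 3 conditions}, namely $d_a = (\ala \cdot \Delta\vs)/(\alN \cdot \Delta\vs)$ with $\Delta\vs \equiv \vs(\infty) - \vs(-\infty)$. By Theorem \ref{thm:N/2} any magnetless soliton has $k = N/2$, so the boundary data are $\vs(-\infty) = 2\pi\sum_{j=1}^{N-1} c_j \wv_j$ and $\vs(\infty) = 2\pi\,\tfrac12\,\rhov$, and by Theorem \ref{thm:N choose k} the total $\sum_{j=1}^{N-1} c_j$ is either $N/2$ or $N/2 - 1$. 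All that is needed is to evaluate the numerator and denominator of the $d_a$ formula on these boundary data.

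First I would compute, for $a = 1,\dots,N-1$, using $\ala\cdot\rhov = 1$ and $\ala\cdot\wv_j = \delta_{aj}$,
\begin{equation}
\ala \cdot \Delta\vs = 2\pi\Bigl(\tfrac12 - c_a\Bigr)~,
\end{equation}
and, using $\alN\cdot\rhov = 1 - N$ and $\alN\cdot\wv_j = -1$,
\begin{equation}
\alN \cdot \Delta\vs = 2\pi\Bigl(\tfrac{1-N}{2} + \sum_{j=1}^{N-1} c_j\Bigr)~.
\end{equation}
Substituting $\sum_j c_j = N/2$ collapses the denominator to $\pi$, so that $d_a = 1 - 2c_a$, i.e. $d_a = 1 \Leftrightarrow c_a = 0$ and $d_a = -1 \Leftrightarrow c_a = 1$; substituting $\sum_j c_j = N/2 - 1$ collapses it to $-\pi$, so that $d_a = 2c_a - 1$, i.e. $d_a = -1 \Leftrightarrow c_a = 0$ and $d_a = 1 \Leftrightarrow c_a = 1$. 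These are precisely the two cases in the statement.

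This computation is short and the earlier structural theorems do all the heavy lifting, so I do not expect a real obstacle; the only points to be careful about are that $\alN\cdot\Delta\vs$ equals $\pm\pi$ and in particular is nonzero, so that the ratio defining $d_a$ makes sense, and the bookkeeping of the two sign branches for $\sum_j c_j$. As a consistency check one could also derive the same correspondence the longer way, by feeding each of the three explicit d-sequences of Lemma \ref{lm:HMS Algorithm} into the relation $c_a - \sum_j c_j = 1 - N/2$ obtained in the proof of Theorem \ref{thm:N choose k}, but the direct evaluation above is cleaner and I would present that.
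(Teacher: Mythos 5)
Your computation is correct and yields exactly the stated dictionary, but it takes a more streamlined route than the paper. The paper's proof is assembled from two separate relations extracted from earlier proofs: for indices with $d_a=-1$ it recycles the identity $c_a-\sumcj = 1-\tfrac{N}{2}$ from the proof of Theorem \ref{thm:N choose k} and then reverses the implications using the finiteness of the value sets of $c_a$ and $\sumcj$, while for indices with $d_a=1$ it invokes $k=\sumcj+c_a$ from the proof of Theorem \ref{thm:N/2}. You instead evaluate the single closed formula $d_a=(\ala\cdot\Delta\vs)/(\alN\cdot\Delta\vs)$ from condition 2 of Lemma \ref{lm:d-sequence 3 conditions} on the boundary data, obtaining $d_a=1-2c_a$ or $d_a=2c_a-1$ in the two branches. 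These are really the same underlying identity --- $(\ala-d_a\alN)\cdot\Delta\vs=0$ specialized to $d_a=\mp1$ reproduces the paper's two relations --- but your packaging handles both signs of $d_a$ at once, avoids the implication-reversal step entirely, and makes the equivalences (not just implications) manifest. You are also right to flag the one genuine hypothesis your route needs, namely $\alN\cdot\Delta\vs=\pm\pi\neq0$ so the ratio is well defined; with that checked, the argument is complete. The only mild dependence worth acknowledging is that condition 2 of Lemma \ref{lm:d-sequence 3 conditions} is itself derived for magnetless solitons with $\alpha=-1$, so your proof inherits that lemma's standing hypotheses, just as the paper's proof inherits the intermediate equations of Theorems \ref{thm:N/2} and \ref{thm:N choose k}.
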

\begin{proof}
    In the proof of Theorem \ref{thm:N choose k}, we chose an $a$ such that $d_a = -1$ (which is always possible). Then we showed that, if $c_a = 1$, then $
        \sum_{j=1}^{N-1} c_j = \dfrac{N}{2}$, and  if $c_a = 0$, then
$
        \sum_{j=1}^{N-1} c_j = \dfrac{N}{2} - 1
$.
    We can rewrite this result as
    $$ d_a = -1 \implies \left[ \left( c_a = 1 \implies \sumcj = \frac{N}{2} \right) \text{ and } \left( c_a = 0 \implies \sumcj = \frac{N}{2} -1 \right) \right]~.$$
    Using the fact that $c_a$ can only be 0 or 1, and the fact that $\sumcj$ can only be $\dfrac{N}{2}$ or $\dfrac{N}{2} -1 $, we can reverse the direction of implication inside the square bracket:
    $$ d_a = -1 \implies \left[ \left( c_a = 1 \iff \sumcj = \frac{N}{2} \right) \text{ and } \left( c_a = 0 \iff \sumcj = \frac{N}{2} -1 \right) \right]~.$$
    Now suppose $\sumcj = \frac{N}{2}$. We have $d_a = -1 \implies c_a = 1 $. Similarly, suppose $\sumcj = \frac{N}{2} - 1$. We have $d_a = -1 \implies c_a = 0 $. This completes half of the dictionary in both cases. 
    
    We now want to know what $d_a = 1$ implies in either case.
    So suppose $d_a = 1$. From the proof of Theorem \ref{thm:N/2}, we know that this implies equation (\ref{eq:k equal sum of ca}):
$
    k =  \left( \sumcj \right) + c_a~.
$
    We also know from Theorem \ref{thm:N/2} that $k = \dfrac{N}{2}$, hence
    $  c_a + \sumcj = \frac{N}{2} $.
    Finally, we see that if $\sumcj = \dfrac{N}{2}$, then
    $ d_a = 1 \implies c_a = 0, $
    and if $\sumcj = \dfrac{N}{2} - 1 $, then
    $  d_a = 1 \implies c_a = 1. $
 Combining with what we showed before, we have a dictionary that tells us what $c_a$ is given $d_a$.
\end{proof}
Next, we put everything together to determine all the possible boundary conditions that can give rise to magnetless solitons in $SU(N)$.
\begin{theorem}\label{theorem3}
    Let $N$ be even. If $N \mod 4 \neq 0$, then there are at most $2$ magnetless solitons, with boundary conditions
    $$ i2 \pi \left( \underbrace{ \vec{w}_1 + \vec{w}_3 + \dots + \vec{w}_{N-1} }_{\text{all odd terms}} \right) \to \vec{x}_{\frac{N}{2}}$$
    $$ i2 \pi \left( \underbrace{ \vec{w}_2 + \vec{w}_4 + \dots + \vec{w}_{N-2} }_{\text{all even terms}} \right) \to \vec{x}_{\frac{N}{2}}$$
    
    If $N \mod 4 = 0$, there are at most 6 magnetless solitons, which include the above 2, as well as solutions with boundary conditions
    $$ i2\pi \left( \vec{w}_2 + \vec{w}_3 + \vec{w}_6 + \vec{w}_7 + \dots + \vec{w}_{N-2} + \vec{w}_{N-1} \right)  \to \vec{x}_{\frac{N}{2}} $$
    
    $$ i2\pi \left( \vec{w}_1 + \vec{w}_4 + \vec{w}_5 + \vec{w}_8 + \vec{w}_9 + \dots + \vec{w}_{N-4} + \vec{w}_{N-3} \right)  \to \vec{x}_{\frac{N}{2}} $$
    
    $$ i2\pi \left( \vec{w}_1 + \vec{w}_2 + \vec{w}_5 + \vec{w}_6  + \dots + \vec{w}_{N-3} + \vec{w}_{N-2} \right)  \to \vec{x}_{\frac{N}{2}} $$
    
    $$ i2\pi \left( \vec{w}_3 + \vec{w}_4 + \vec{w}_7 + \vec{w}_8  + \dots + \vec{w}_{N-5} + \vec{w}_{N-4} + \vec{w}_{N-1} \right)  \to \vec{x}_{\frac{N}{2}} $$
\end{theorem}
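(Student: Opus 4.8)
The plan is to show that the six (respectively two) boundary conditions listed are the only candidates by combining the structural results already established: Theorems~\ref{thm:N/2} and \ref{thm:N choose k} force $N$ even, $k=N/2$, and constrain the c-sequence to have exactly $N/2$ or $N/2-1$ nonzero entries, while Lemmas~\ref{lm:HMS Algorithm}, \ref{thm:N mod 4} and \ref{thm:dictionary} pin down the d-sequence and translate it into the c-sequence. The key observation is that a magnetless soliton's boundary condition is determined once we fix the pair $(\sum_j c_j,\text{ d-sequence})$: Lemma~\ref{thm:dictionary} is precisely a dictionary that, given $\sum_j c_j\in\{N/2,\,N/2-1\}$, sends each $d_a$ to the corresponding $c_a$, hence to the full weight sum.

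First I would enumerate the possibilities. By Lemma~\ref{lm:HMS Algorithm} the d-sequence is alternating, positive paired, or negative paired, and by Lemma~\ref{thm:N mod 4} the two paired types occur only when $4\mid N$. Crossing the (at most) three d-sequences with the two admissible values of $\sum_j c_j$ yields at most $2\cdot 1=2$ boundary conditions when $N\equiv 2\pmod 4$ and at most $2\cdot 3=6$ when $N\equiv 0\pmod 4$, which is exactly the count claimed. It then remains only to compute the six explicit weight sums.

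For this I would write each d-sequence in residue form: alternating is $d_a=-1$ for $a$ odd and $d_a=1$ for $a$ even; positive paired is $d_a=1$ iff $a\equiv 0,1\pmod 4$; negative paired is $d_a=1$ iff $a\equiv 0,3\pmod 4$ (in both paired cases the requirement $d_N=1$ reproduces $4\mid N$). Applying Lemma~\ref{thm:dictionary}: for the alternating d-sequence, $\sum_j c_j=N/2$ gives $c_a=1$ on the odd indices, i.e.\ $\vec{w}_1+\vec{w}_3+\dots+\vec{w}_{N-1}$, and $\sum_j c_j=N/2-1$ gives $c_a=1$ on the even indices, i.e.\ $\vec{w}_2+\vec{w}_4+\dots+\vec{w}_{N-2}$; for the positive and negative paired d-sequences the same rule reads off the remaining four sums in the statement. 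I would add a one-line sanity check that each resulting weight sum contains the expected number of terms ($N/2$ or $N/2-1$), which simultaneously confirms internal consistency and that no case has been dropped.

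I do not anticipate a genuine obstacle, since all the analytic content already lives in the lemmas; what is left is careful bookkeeping. The one place demanding attention is matching the $\bmod 4$ patterns to the ``$\dots$'' notation near the index $N-1$: one must check where each length-four block starts and, for the paired sequences with $\sum_j c_j=N/2-1$, that the tail is handled correctly (e.g.\ the negative paired case ending in $\vec{w}_{N-4}+\vec{w}_{N-1}$, and the positive paired case beginning with $\vec{w}_1$). A minor but worthwhile remark is that the theorem claims only ``at most,'' so existence of an actual BPS solution for each listed boundary condition is not required here---that is supplied by the explicit analytic solutions constructed later in this appendix.
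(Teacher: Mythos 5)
Your proposal is correct and follows essentially the same route as the paper: the paper's proof likewise crosses the three d-sequence types of Lemma~\ref{lm:HMS Algorithm} (restricted by Lemma~\ref{thm:N mod 4}) with the two admissible values of $\sum_j c_j$ from Theorem~\ref{thm:N choose k}, and applies the dictionary of Lemma~\ref{thm:dictionary} case by case to read off the six (or two) boundary conditions. Your residue-class bookkeeping for the paired sequences matches the paper's explicit enumeration, and your remark that only ``at most'' is claimed (existence being supplied separately) is consistent with the paper's treatment.
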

\begin{proof}
    By Lemma \ref{thm:N mod 4}, of the 3 types of magnetless solutions (described in Lemma \ref{lm:HMS Algorithm}), the alternating type can exist for all even N, while the paired type (both positive and negative) can only exist if $N \mod 4 = 0$.
    
    \textcircled{1} Suppose $\sumcj = \frac{N}{2}$. An alternating soliton has the following sequence of $d_a$:
    $$ d_1 = -1, d_2=1, d_3=-1, \cdots, d_{N-2}=1,d_{N-1}=-1,d_N=1 $$
    According to the dictionary in Lemma \ref{thm:dictionary}, this translates to the following sequence of $c_a$:
    $$ c_1 = 1, c_2=0, c_3=1, \cdots, c_{N-2}=0,c_{N-1}=1 $$
    Note that there are $N$ of the $d_a$ but only $N-1$ of the $c_a$. The $d_N$ is not translated to anything.
    The boundary conditions given by this sequence is
    $$ \quad i2 \pi \left( \underbrace{ \vec{w}_1 + \vec{w}_3 + \dots + \vec{w}_{N-1} }_{\text{all odd terms}} \right) \to \vec{x}_{\frac{N}{2}}$$
    
    \textcircled{2} Now suppose $\sumcj = \frac{N}{2} - 1$. The same sequence of $d_a$ now translates to the opposite:
    $$ c_1 = 0, c_2=1, c_3=0, \cdots, c_{N-2}=1,c_{N-1}=0 $$
    which gives the boundary condition:
    $$ i2 \pi \left( \underbrace{ \vec{w}_2 + \vec{w}_4 + \dots + \vec{w}_{N-2} }_{\text{all even terms}} \right) \to \vec{x}_{\frac{N}{2}}$$
    These are present for all even $N$.
    
    If $N \mod 4 = 0$, then we have, in addition, 4 more possible boundary conditions.
    
    \textcircled{3} Suppose $\sumcj = \frac{N}{2}$. The positive paired d-sequence is 
    $$ d_1 = 1, d_2 = -1, d_3 = -1, d_4 = 1, d_5 = 1, \dots, d_{N-2}=-1,d_{N-1}=-1,d_N = 1 $$
    which translates to
    $$ c_1 = 0, c_2 = 1, c_3 =1, c_4 = 0, c_5 = 0, \dots, c_{N-2} = 1,c_{N-1} =1 $$
    This gives the bound
     $$ i2\pi \left( \vec{w}_2 + \vec{w}_3 + \vec{w}_6 + \vec{w}_7 + \dots + \vec{w}_{N-2} + \vec{w}_{N-1} \right)  \to \vec{x}_{\frac{N}{2}} $$
     
    \textcircled{4} Suppose $\sumcj = \frac{N}{2}-1$. The same d-sequence translates to
    $$ c_1 = 1, c_2 = 0, c_3 =0, c_4 = 1, c_5 = 1, \dots, c_{N-2} = 0,c_{N-1} =0 $$
    This gives the boundary condition
    $$ i2\pi \left( \vec{w}_1 + \vec{w}_4 + \vec{w}_5 + \vec{w}_8 + \vec{w}_9 + \dots + \vec{w}_{N-4} + \vec{w}_{N-3} \right)  \to \vec{x}_{\frac{N}{2}} $$
     
    \textcircled{5} Suppose $\sumcj = \frac{N}{2}$. The negative paired d-sequence is 
    $$ d_1 = -1, d_2 = -1, d_3 = 1, d_4 = 1, \dots, d_{N-3}=-1,d_{N-2}=-1,d_{N-1}=1,d_N = 1 $$
    which translates to
    $$ c_1 = 1, c_2 = 1, c_3 =0, c_4 = 0, \dots, c_{N-3} = 1,c_{N-2} =1,c_{N-1}=0 $$
    This gives the boundary condition
    $$ i2\pi \left( \vec{w}_1 + \vec{w}_2 + \vec{w}_5 + \vec{w}_6  + \dots + \vec{w}_{N-3} + \vec{w}_{N-2} \right)  \to \vec{x}_{\frac{N}{2}} $$
    
    \textcircled{6} Suppose $\sumcj = \frac{N}{2}-1$. The same d-sequence translates to
    $$ c_1 = 0, c_2 = 0, c_3 =1, c_4 = 1, \dots, c_{N-3} = 0,c_{N-2} =0,c_{N-1}=1$$
    This gives the boundary condition
    $$ i2\pi \left( \vec{w}_3 + \vec{w}_4 + \vec{w}_7 + \vec{w}_8  + \dots + \vec{w}_{N-5} + \vec{w}_{N-4} + \vec{w}_{N-1} \right)  \to \vec{x}_{\frac{N}{2}} $$
\end{proof}
{\flushleft{
Our}} final remark is that the 2 magnetless solitons that are present in all cases form an orbit under the $\mathbb{Z}_N$ $0$-form center symmetry. The other 4 magnetless solitons also form an orbit of size 4 under this symmetry. This makes sense since the linear operator $\cal{P}$ from (\ref{zeroform0}) takes a purely imaginary vector to a purely imaginary vector: ${\cal{P}}(0+i\vs) =0 + i {\cal{P}}(\vs)$. 
}
\subsection{Explicitly solving for the magnetless solutions}\label{magnetlesssolutions}

It turns out that the magnetless solutions listed in Theorem \ref{theorem3} can always be found explicitly in terms of a single function, which is essentially the DW solution for an $SU(2)$ gauge group. We shall not pursue this in all generality and will not derive this fact, but instead give an example. 
It is particularly helpful to work in coordinates rectifying the fundamental domain 
\begin{equation}
\tilde\sigma^a \equiv \vec{\sigma} \cdot \vec\alpha_a~, ~a=1,\ldots N-1.
\end{equation}
In these $\tilde\sigma^a$ coordinates the fundamental domain is mapped into a cube of sides  of length  $2\pi$ (s.t.~each of the $\tilde\sigma^a$ varies from $0$ to $2\pi$). The boundary conditions $\vec\sigma(-\infty) = 2 \pi \sum\limits_{a=1}^{N-1} c^a \vec{w}_a$ get mapped into $\tilde\sigma^a(-\infty) = 2 \pi c^a$ for $a = 1,\ldots N-1$, while  $\tilde\sigma^a(+\infty)=\pi$ for the magnetless $k=N/2$ solutions. 

For example, consider the solution of the BPS equations ((\ref{dif1}) with $\vec\phi=0$ and $k=N/2$) interpolating, in $\vec\sigma$ coordinates, between $$ \quad i2 \pi \left( \underbrace{ \vec{w}_1 + \vec{w}_3 + \dots + \vec{w}_{N-1} }_{\text{all odd terms}} \right) \to \vec{x}_{\frac{N}{2}}~$$ This is the magnetless solution with boundary conditions denoted by   \textcircled{1} in Theorem \ref{theorem3}. This magnetless BPS solution  can be explicitly written as 
\begin{eqnarray}
\tilde\sigma^a(z) &=& \tilde\sigma^{N-1}(z)~, ~~\qquad {\rm for} ~ a~ {\rm odd}, \nonumber\\
\tilde\sigma^a(z) &=& 2 \pi - \tilde\sigma^{N-1}(z)~ ~ ~{\rm for} ~ a~ {\rm even},  \\
~{\rm where}~\tilde\sigma^{N-1}(z) &=& 2\pi - 2 {\rm Arccot} \;e^{- 2(z - z_0)}.\nonumber
\end{eqnarray}
We leave it as an exercise for the reader to similarly construct the solutions for the other cases listed in Theorem \ref{theorem3} above.

\end{singlespace}
  \bibliography{SYMDW.bib}
  
  \bibliographystyle{JHEP}

\end{document}